\newtheorem{theorem}{Theorem}
\newtheorem{assumption}[theorem]{Assumption}
\newcommand{\E}{{\mathbb{E}\,}}
\newcommand{\Var}{\mathrm{Var}}
\author{%
  \small Aiyou Chen \\
  \small Google \\
  \and
  \small Nick Doudchenko \\
  \small Google \\
  \and
  \small Shunhua Jiang \\
  \small Columbia \\
  \and
  \small Cliff Stein \\
  \small Google \\
  \and
  \small Bicheng Ying \\
  \small Google \\
}
\title{Supergeo Design: Generalized Matching for Geographic Experiments}
\newcommand{\printfnsymbol}[1]{%
  \textsuperscript{\@fnsymbol{#1}}%
}
\begin{document}
\maketitle

\begin{abstract}
\noindent We propose a generalization of the standard matched pairs design 
in which experimental units (often geographic regions or \emph{geos}) may be combined into larger units/regions called ``supergeos'' in order to improve the average matching quality. 
Unlike optimal matched pairs design which can be found in polynomial time \citep{lu2011optimal}, this generalized matching problem is NP-hard. We formulate it as a mixed-integer program (MIP) and show that experimental design obtained by solving this MIP can often provide a significant improvement over the standard design regardless of whether the treatment effects are homogeneous or heterogeneous. Furthermore, we present the conditions under which trimming techniques that often improve performance in the case of homogeneous effects \citep{chen2022robust}, may lead to biased estimates and show that the proposed design does not introduce such bias. We use empirical studies based on real-world advertising data to illustrate these findings.
\end{abstract}

\section{Introduction}\label{intro}
With online advertising revenue in the US amounting to almost 200 billion dollars in 2021 \citep{iab_report_2021}, it is both theoretically and practically important to understand advertising effectiveness. For instance, researchers may want to know how much additional sales revenue does an additional dollar spend on advertising generate? 
Ideally, such questions would be addressed via
the ``gold standard'' of causal inference---randomized experiments or A/B tests as they are often called in applied settings.  

When many distinct units are available to experiment on and the estimand of interest is the \emph{average treatment effect} (ATE), simple randomized experiments may provide precise and intuitive results. Unfortunately, such experiments are not always available or, even when they are, may not be adequate for the question at hand. For instance, if interference between the experimental units---a situation in which the treatment status of one unit affects the outcomes of some other units---is not negligible, the researchers may decide to combine units into larger groups (or clusters) if they have reasons to believe that interference at the group level is less of a concern. In other cases, institutional constraints or privacy concerns may prevent the researchers from assigning the treatment at a more granular level. Neither of these concerns are foreign to the studies of advertising effectiveness \citep{coey2016people,vaver2011measuring}.   
For example, product information learned from an ad by one person can be easily propagated to another person who has not seen the ad introducing interference. Different variations of television advertising are often assigned at the designated market area (DMA) level limiting the number of experimental units to only 210 DMAs existing in the US.\footnote{\url{https://markets.nielsen.com/us/en/contact-us/intl-campaigns/dma-maps/}} Moreover, targeted advertising is an important subject of current academic and public discourse with an increasing number of studies conducted at geographic---instead of more granular---levels \citep[Google Ads geo targets,\footnote{\url{https://developers.google.com/google-ads/api/reference/data/geotargets}}][]{rolnick2019randomized}. 
These considerations may not only limit the sample size, but also lead to experimental units that vary substantially along observed---and potentially unobserved---characteristics, thereby preventing the researchers from relying on large-sample properties alone and forcing them to make additional assumptions.

We are primarily motivated by evaluating advertising effectiveness using geographic experiments. However, these issues---relatively small sample size and heterogeneity of experimental units---may come up in other settings. Regardless of the setting, there are two general ways to improve the statistical properties of experimental results---by changing the allocation of units to treatment and control (\emph{experimental design}) or by changing the approach to estimating the quantity of interest (\emph{post-experimental analysis}). One common approach to experimental design is to find disjoint pairs of comparable units and randomize the treatment within each pair---the so called \emph{matched pairs design} \citep{imbens2015causal}. This experimental design has several advantages:  
(i) it allows---at least to some degree---balancing the control and treatment groups with respect to the observed unit-level covariates, (ii) the treatment effects can be estimated separately for each pair which may be particularly useful when the effects are heterogeneous, (iii) randomizing the treatment within each pair allows rigorous statistical inference (e.g.~permutation-based tests). Optimal matched pairs can be designed by solving a minimum-weight matching problem  
\citep[see, for example, Chapter~12 of][]{rosenbaum2020design} which can be done in polynomial time \citep{edmonds1965maximum}.\footnote{See \citet{stuart2010matching}, \citet{rosenbaum2020design} and \citet{pashley2021insights} for additional references to various general-purpose matching methods.}

Unfortunately, it is not always possible to construct pairs of units that are sufficiently similar. It is not uncommon in applications to have pairs of units that are so different from each other that it is better to exclude some of them from the experiment completely in order to make the estimates more precise. This can be done either at the design phase \citep{chen2021trimmed} or during post analysis \citep{chen2022robust}. However, such ``trimming'' reduces the sample size and, perhaps more importantly, makes the resulting sample potentially different from the original population in terms of the average treatment effect if the effects are heterogeneous across units. Moreover, excluding pairs of units from the experiment may require stopping advertising in those units for the duration of the experiment. This may lead to a drop in sales revenue which otherwise could have been avoided.\footnote{We discuss the difference between ``trimming'' in the design and post-experimental analysis phases in a bit more detail in Section~\ref{sec:eval}.}

\paragraph{Supergeo design.} To combat the issue of poor matches without sacrificing the data, we introduce a generalized matching problem in which each pair is a pair of ``superunits'' with each superunit being a sum of several original units. Conceptually, this extends the classical matched pairs to matched superunit-pairs.\footnote{The notion of ``superunit'' may not apply to all settings as the covariates and/or response variables may not be additive. For example, in medical science, it may not be meaningful to pair a group of two patients to another group of three patients. However, other aggregation approaches---besides the simple summation---may be considered depending on the application.} Since our primary motivation for this paper is geographic experiments where each unit is a DMA, in the remainder of the paper we use \emph{geo} and \emph{supergeo} as synonyms for unit and superunit respectively. We use the term \emph{supergeo design} to refer to the proposed methodology.

Finding the optimal supergeo design is closely related to the minimum-weight matching problem over the hyper-graph \citep{keevash2014geometric}. What differentiates our paper from previous algorithmic work is the fact that achieving good matches is not the end goal, but rather an approach to improving the statistical properties of the experimental results. Subsequently, we evaluate the proposed methodology as an experimental design tool.

We show that supergeo design can improve the precision of the average treatment effect estimates when compared to standard matched pairs design. 
Moreover, it can substantially outperform the alternatives that require excluding some of the geos from the experiment when the treatment effects are heterogeneous since those alternative designs may lead to a subpopulation with a different average treatment effect compared to the original sample. 

Although further generalizations \citep[for example, in the spirit of the synthetic-control literature,][]{abadie2010synthetic} are possible, we currently assume that when the units are aggregated, the covariates used for matching are added up. For instance, this is appropriate when the covariate of interest is the total sales within a geo and we are willing to assume that as long as the time series of past sales data are matched well, the units should respond to treatment similarly.

\paragraph{Contributions.} The main contributions of the paper are as follows:
\begin{itemize}
\item We propose a novel experimental design approach---supergeo design---that improves the balancing properties of matched pairs design. 
\item We show that supergeo design can be particularly effective when the underlying treatment effects are heterogeneous since it allows achieving good treatment-control balance without excluding geos which would normally lead to poorly matched pairs.
\item While this generalized matching problem is NP-hard, we propose an approach utilizing mixed-integer programming (MIP) that  solves the problem in practically important settings with hundreds of experimental units (representing geographic regions).
\item We validate the effectiveness of the proposed design using real advertising data.
\end{itemize}


\paragraph{Structure of the paper.} We start by introducing a theoretical framework that allows us to discuss the statistical properties of alternative experimental design and analysis procedures. We then describe the methodology used for solving the supergeo matching problem. The empirical results section of the paper compares supergeo design to the matched pairs design. We also compare supergeo design to \emph{trimmed match estimator} \citep{chen2022robust} which is an analysis---rather than design---procedure and discuss the way in which supergeo design and trimmed match can be combined.
\section{Model}\label{sec:model}
Let $\mathcal{G}$ denote the set of experimental units (geos) and let $N := |\mathcal{G}|$.\footnote{If $\mathcal{G}$ is the set of DMAs in the US, $N=210$.} For each $g \in \mathcal{G}$, $S_g$ is a variable controlled by the researcher and $R_g$ is the response variable presumably affected by $S_g$. For advertising experiments, $S_g$ and $R_g$ correspond to the advertising spend and returns respectively. Following the Neyman–Rubin causal framework \citep{imbens2015causal}, we use $(R_g^{(t)}, S_g^{(t)})$ and $(R_g^{(c)}, S_g^{(c)})$ to denote the potential outcomes under the treatment condition and the control condition, respectively. 

The following ratio measures the unit-level causal effect of the treatment:
\begin{align*}
    \theta_g := \frac{R_g^{(t)} - R_g^{(c)}}{S_g^{(t)} - S_g^{(c)}},  \;\;\;\forall g \in \mathcal{G}.
\end{align*}
We are mainly interested in estimating the following ratio:
\begin{align*}
    \theta := \frac{\sum_g R_g^{(t)} - \sum_g R_g^{(c)}}{\sum_g S_g^{(t)} - \sum_g S_g^{(c)}},
\end{align*}
which measures the global causal response rate with respect to the controlled variable. In advertising experiments the quantity $\theta$ is often called the \emph{incremental return on ad spend} (iROAS)---the ratio of the additional return to the additional advertising spend \citep{chen2022robust}. If the response is measured in the same units (e.g.~USD) as the ad spend, $\theta>1$ indicates that it makes sense to expend an additional dollar on advertising.

\paragraph{Design and estimators.}
In practice we never observe both $(R_g^{(t)}, S_g^{(t)})$ and $(R_g^{(c)}, S_g^{(c)})$ since we can only assign a unit to either treatment or control. Let us use $\mathcal{T}$ and $\mathcal{C} \subseteq \mathcal{G}$ to denote the treatment and control groups in the experiment.
The standard way to estimate $\theta$ is to use an \emph{empirical estimator} that computes the ratio of differences between the treatment and control groups:
\begin{align}\label{eq:theta_hat}
\hat{\theta} := \frac{\sum_{g \in \mathcal{T}} R_g - \sum_{g \in \mathcal{C}} R_{g}}{\sum_{g \in \mathcal{T}} S_g - \sum_{g \in \mathcal{C}} S_{g}}.
\end{align}
The researcher might want to carefully choose the treatment group, $\mathcal{T}$, and the control group, $\mathcal{C}$, so that $\hat{\theta}$ is a good---in the statistical sense---estimate of $\theta$.\footnote{In the remainder of the paper we often refer to this experimental design stage as the \emph{pretest} phase as opposed to the \textit{test} phase which refers to the stage when the treatment is applied and the experimental data are collected and analyzed.}
For example, in the classical matched pairs design, geos are allocated into pairs, $\{(g_{k,+}, g_{k,-})\}_{k=1}^{N/2}$. Then, one of the geos in each pair is randomly assigned to the treatment group while the other one is assigned to the control group.

An alternative estimator that is often more precise than the empirical estimator is the \emph{trimmed match estimator} from \citet{chen2022robust}. This estimation procedure amounts to selecting the subsets $\mathcal{T}' \subseteq \mathcal{T}$ and $\mathcal{C}' \subseteq \mathcal{C}$ in order to obtain the most precise estimate of the form:\footnote{See \citet{chen2022robust} for the specifics of how the subsets $\mathcal{T}'$ and $\mathcal{C}'$ are selected.}

\begin{equation}\label{eq:theta_hat_trim}
\hat{\theta}^{\mathrm{trim}} := \frac{\sum_{g \in \mathcal{T}'} R_g - \sum_{g \in \mathcal{C}'} R_{g}}{\sum_{g \in \mathcal{T}'} S_g - \sum_{g \in \mathcal{C}'} S_{g}}.
\end{equation}

\paragraph{Model assumptions.}
We follow \cite{chen2022robust} and make a number of assumptions that facilitate a formal analysis of alternative experimental design and analysis procedures.

\begin{assumption}[Linear response model]\label{ass:linear}
For each $g \in \mathcal{G}$ the \emph{uninfluenced response},\footnote{Corresponding to the response under zero advertising spend.} $Z_g$, does not depend on treatment assignment, and $R_g$ depends on $S_g$ linearly:
\begin{align*}
R_g = Z_g + \theta_g \cdot S_g.
\end{align*}
\end{assumption}

\noindent The next assumption stipulates that the total difference in spend between the treatment and control units is unaffected by the treatment assignment. It captures the real-life scenario in which the researcher has a budgetary constraint when conducting the experiment. It also simplifies the calculation of the variance of $\hat{\theta}$.
\begin{assumption}[Fixed budget]\label{ass:fixed_budget}
There is a fixed experimental budget $B$ so that
\[
\sum_{g \in \mathcal{T}} S_g - \sum_{g \in \mathcal{C}} S_{g} = B.
\]
\end{assumption}



\paragraph{Supergeo design.}

The supergeo experimental design proposed in this paper consists of $K$ \emph{supergeo pairs} $\{(G_{k,+}, G_{k,-})\}_{k=1}^K$ where $G_{k,+}, G_{k,-} \subset \mathcal{G}$ are nonempty sets of geos which we call supergeos. Note that if we restrict all supergeos to have size one, we recover the standard matched pairs design. 

In each supergeo pair we randomly assign one of the supergeos to the treatment group and the other one to the control group. 
Let $A_k \in \{\pm 1\}$ for $k=1,\dots,K$ be $i.i.d.$ random sign variables independent from the rest of the data. Let $A_k=1$ with probability $1/2$ indicating that $G_{k,+}$ is treated and $A_k=-1$ with probability $1/2$ indicating that $G_{k,-}$ is treated. 

For convenience, for a supergeo $G \subset \mathcal{G}$, we define the spend and response of $G$ as $S_{G} := \sum_{g \in G} S_g$ and $R_{G} := \sum_{g \in G} R_g$. Similarly, we define the uninfluenced response of $G$ as $Z_{G} := \sum_{g \in G} Z_g$.

\subsection{Homogeneous $\theta_g$}\label{sec:homo_model}
We start by computing the bias and variance of the empirical estimator, $\hat{\theta}$, under the assumption that all geos respond to treatment in the same way.
\begin{assumption}[Homogeneous $\theta_g$]\label{ass:homogeneous}
For all $g \in \mathcal{G}$, $\theta_g = \theta$.
\end{assumption}

\noindent Note that the response difference between treatment and control of the $k$-th supergeo pair $(G_{k_+}, G_{k_-})$ can be expressed as $A_k (R_{G_{k,+}} - R_{G_{k,-}})$.
Assumptions~\ref{ass:linear} and~\ref{ass:fixed_budget} imply that 
\begin{align*}
     \hat{\theta} & =  \theta +  \frac{1}{B} \sum_{k} A_k (Z_{G_{k,+}} - Z_{G_{k,-}}).
\end{align*}
Since $\{A_k: k=1,\dots,K\}$ are $i.i.d.$~zero-mean random variables that are independent from the rest of the variables, we conclude that $\hat{\theta}$ is an \emph{unbiased} estimator of $\theta$: $\E[\hat{\theta}] = \theta$. The above equation also implies that the variance of $\hat{\theta}$ (note that the randomness is over the $A_k$'s) is:
\begin{align}
    \Var[\hat{\theta}] =\,& \frac{1}{B^2}\sum_{k}(Z_{G_{k,+}} - Z_{G_{k,-}})^2. \label{eq:variance}
\end{align}
The goal of our experimental design procedure is to find supergeo pairs that minimize $\sum_{k}(Z_{G_{k,+}} - Z_{G_{k,-}})^2$. Note that without a constraint on the maximum size of a supergeo pair, the optimal way to minimize the variance is to split all geos into a single supergeo pair $({G_{+}}, {G_{-}})$ with minimum difference $|Z_{G_{+}} - Z_{G_{-}}|$. We will add a size constraint that is important for several reasons that will be discussed in Section~\ref{sec:supergeo_algo}.

\subsection{Heterogeneous $\theta_g$}\label{sec:hetero_model}
In practice, it is quite possible that different geos respond to treatment differently. This leads us to a model in which Assumption~\ref{ass:homogeneous} is abandoned. 
Note that the linearity Assumption~\ref{ass:linear} implies that
\begin{align*}
    \theta = &~ \frac{\sum_g R_g^{(t)} - \sum_g R_g^{(c)}}{\sum_g S_g^{(t)} - \sum_g S_g^{(c)}}
    = \frac{\sum_g \theta_g \cdot (S_g^{(t)} - S_g^{(c)})}{\sum_g (S_g^{(t)} - S_g^{(c)})}.
\end{align*}
Therefore, the ratio of interest, $\theta$, depends on the exact way in which we change the controlled variable $S_g$ under the treatment and control conditions. 
A natural choice is to increase all treatment spends proportionally to the original (pre-experimental) spends and leave the control spends unaffected.
\begin{assumption}[Proportional spend]\label{ass:ideal_model}
Suppose that each geo $g$ has the initial spend $\overline{S}_g$. In the experiment, if $g$ is in the control group, then $S_g^{(c)} = \overline{S}_g$. If $g$ is in the treatment group, then $S_g^{(t)} = \overline{S}_g + \Delta S_g$, where $\Delta S_g = r \cdot \overline{S}_g$. Moreover, the total increase in the spend satisfies the fixed budget assumption: $r\cdot\sum_{g\in\mathcal{T}}\overline{S}_g=B.$
\end{assumption}
\noindent Under this assumption, since we always increase the treatment spend proportionally to $\overline{S}_g$, we have
\begin{equation}\label{eq:theta_hetero_simplify}
\theta = \frac{\sum_g \theta_g \cdot \overline{S}_g}{\sum_g \overline{S}_g}.
\end{equation}
If we conduct the experiment and compute the empirical estimator $\hat{\theta}$ in the same way as before, it is no longer unbiased for $\theta$. Using the linear model assumption and the fixed budget assumption, we have
\begin{align}\label{eq:hetero_error}
    \hat{\theta} 
    = \theta & + \frac{1}{B} \sum_{k} A_k (Z_{G_{k,+}} - Z_{G_{k,-}}) \notag \\
    & + \frac{1}{B} \Big( \sum_{g \in \mathcal{T}} (\theta_g - \theta) \overline{S}_g - \sum_{g \in \mathcal{C}} (\theta_{g} - \theta) \overline{S}_{g} \Big) \notag \\
    & + \frac{1}{B} \sum_{g \in \mathcal{T}} (\theta_g - \theta) \Delta S_g.
\end{align}

\noindent We denote the three error terms in the above equation as $err_1$, $err_2$, and $err_3$. 

$err_1$: The first error term corresponds to the training error which is the same as in the homogeneous model. Since each $A_k$ is uniformly random in $\{-1,1\}$,
\[
\E[err_1] = \E\Big[\sum_{k} A_k (Z_{G_{k,+}} - Z_{G_{k,-}})\Big] = 0.
\]
The magnitude of this error term is small if the (super)geo pairs are well-matched. It can be reduced further by trimming some of the poorly matched pairs. 

$err_2$: The second error term measures the difference between the treatment and control groups in terms of the within-group heterogeneity in $\theta_g$'s. Since each geo is either in the treatment group or the control group with probability $1/2$, we have
\begin{align*}
&~ \E[err_2]
= \E\Big[ \sum_{g \in \mathcal{T}} (\theta_g - \theta) \overline{S}_g - \sum_{g \in \mathcal{C}} (\theta_{g} - \theta) \overline{S}_{g} \Big] \\
= &~ \sum_{g \in \mathcal{G}} \E\Big[\mathbf{1}_{g \in \mathcal{T}} \cdot (\theta_g - \theta) \overline{S}_g - \mathbf{1}_{g \in \mathcal{C}} \cdot (\theta_g - \theta) \overline{S}_g \Big]
= 0.
\end{align*}
The second error term has zero mean for both the supergeo design and the matched pairs design. It remains zero when either all pairs are used or some are removed to improve the matching quality.

$err_3$: The third error term measures the degree of heterogeneity in $\theta_g$'s within the treatment group. The expectation of this term is approximately zero when all geos are included in the experiment. However, this term could induce a bias if some geos are excluded.

By Assumption \ref{ass:fixed_budget}, 
we have
\begin{equation*}
err_3 = \sum_{g \in \mathcal{T}} (\theta_g - \theta) \cdot \Delta S_g = r \cdot \sum_{g \in \mathcal{T}} (\theta_g - \theta) \cdot \overline{S}_g,
\end{equation*}
where $r$ is a random variable that depends on the specific treatment assignment in order to satisfy the fixed budget assumption.

\paragraph{If all geos are included in the experiment.} The key observation is that since $\theta = \sum_g \theta_g \cdot \overline{S}_g / \sum_g \overline{S}_g$,
\begin{equation}\label{eq:hetero}\vspace{-0.2cm}
\sum_g (\theta_g - \theta) \cdot \overline{S}_g = 0.
\end{equation}
Since all geos are included in the experiment, $\sum_{g \in \mathcal{T}} \overline{S}_g \approx 0.5 \cdot \sum_{g \in \mathcal{G}} \overline{S}_g$, so that $r \approx r_0 := 2B/\sum_g \overline{S}_g$. Thus,
\begin{align*}
\E[err_3] = &~ \E\Big[ r \cdot \sum_{g \in \mathcal{T}} (\theta_g - \theta) \cdot \overline{S}_g \Big] \\
\approx &~ r_0 \cdot \E\Big[ \sum_{g \in \mathcal{T}} (\theta_g - \theta) \cdot \overline{S}_g \Big] \\
= &~ r_0 \cdot \sum_g \E[\mathbf{1}_{g \in \mathcal{T}} 
\cdot (\theta_g - \theta) \cdot \overline{S}_g] = 0,
\end{align*}
where the last equality follows from \eqref{eq:hetero} and the fact that each geo has probability $1/2$ to fall into the treatment group.

\paragraph{If some geos are excluded.} In this case, some geos have zero probability to be included in the treatment group and we no longer have $\E[\sum_{g \in \mathcal{T}} (\theta_g - \theta) \cdot \overline{S}_g] = 0$. The bias induced by $err_3$ is especially large when the excluded geos have extra large or extra small values of $\theta_g$ since those differ the most from the weighted average value, $\theta$. 

While trimming poorly matched pairs can reduce the variance of the resulting estimator, it can also induce a substantial bias when $\theta_g$'s of the removed geos differ from the quantity of interest, $\theta$. We provide specific examples illustrating this issue in Section~\ref{sec:eval}. 
\section{Algorithm}\label{sec:supergeo_algo}

Following our theoretical model from Section~\ref{sec:model}, the goal is to find a supergeo design $\{(G_{k,+}, G_{k,-})\}_{k=1}^K$ that minimizes the following loss:
\begin{equation}\label{eq:loss}
\mathrm{loss}(\{(G_{k,+}, G_{k,-})\}_{k=1}^K) := \sum_{k=1}^K (Z_{G_{k,+}} - Z_{G_{k,-}})^2.
\end{equation}
This loss coincides with the variance of the empirical estimator in the homogeneous $\theta_g$ case (Eq.~\eqref{eq:variance}) and corresponds to the variance of term $err_1$ in the heterogeneous $\theta_g$ case. This term is likely to dominate the other two, especially in online advertising applications where ad spend is often at a much smaller scale than revenue.


\paragraph{Matching variables.}
While ideally we would want to match supergeo pairs based on the values of uninfluenced responses, $Z_g$, those are usually unobserved. We follow the existing literature \citep{chen2021trimmed} and approximate the uninfluenced responses by the responses in the pretest phase. Throughout this section we continue using $Z_g$'s to denote the variables we match on, but in practice pretest responses are often used.


\paragraph{Size of supergeo pairs.}
As already discussed, the loss in Eq.~\eqref{eq:loss} is minimized by dividing all geos into two supergeos, i.e.~when $K=1$. However, in practice this is often not desirable for a variety of reasons:
\begin{itemize}
\item {\bf Overfitting.} Since the observed response variables do not have to follow our linear model and since they approximate the true uninfluenced responses which are unobserved,  
using large supergeos could lead to overfitting to the pretest data. In our experiments we observe that while increasing the maximum allowed size of supergeos leads to smaller training losses, the performance on the test set may suffer (see Appendix~\ref{sec:appendix-B-eval} for an example). This issue can be resolved by tuning the maximum allowed size of supergeo pairs---splitting the pretest data into two subsets, one for creating candidate designs and the other for validation---in the spirit of cross-validation used in machine learning. 
\item {\bf Inference.} Using fewer supergeo pairs makes statistical inference harder. For example,
the trimmed match estimator from \citet{chen2022robust} uses $t$-distribution as an approximation to construct the confidence intervals. This approximation requires at least two pairs and the quality of the approximation improves as the number of pairs grows. Similarly, when using inference approaches such as Fisher's exact test \citep{fisher1922interpretation}, we want to allow for many possible treatment assignments as opposed to just two which corresponds to the case of two large supergeos (see Appendix~\ref{sec:inference}).
\item {\bf Robustness.} Increasing the number of supergeo pairs increases the number of possible treatment assignments which makes the experimental design more robust to adversarial considerations (see Appendix~\ref{sec:size} for details). 
\item {\bf Computational efficiency.} Increasing the maximum allowed size of supergeo pairs increases the number of candidate pairs which could be included in the optimal design. Depending on the formulation of the optimization problem, this often leads to the problem being intractable. 
\end{itemize}

\noindent For these reasons, we add a constraint that the size of each supergeo pair is within a range, $[\ell, u]$, i.e.~$\ell \leq |G_{k,+}| + |G_{k,-}| \leq u$ for all $k$, where $\ell$ and $u$ are two tunable parameters. We usually set $\ell = 2$ so that the supergeo design is a natural generalization of the matched pairs design. When reporting the empirical results in Section~\ref{sec:eval}, we limit the size of supergeo pairs to 4 which performs well in our applications.\footnote{We currently do not tune this parameter and set it to a fixed value that provides a good matching quality without leading to particularly large supergeos.}

The first three of the concerns above could also be addressed by adding an explicit ``minimum pairs'' constraint that restricts the number of supergeo pairs to be at least $\kappa$. However, this constraint is ineffective in addressing the issue of computational efficiency. Since the size constraint (combined with the heuristics that we use for the MIP formulation discussed in Appendix~\ref{sec:heuristics}) leads to a design with a sufficient number of supergeo pairs, in our empirical evaluations we set $\kappa = 1$.

\subsection{NP-hardness}
For a supergeo pair $(G_{k,+}, G_{k,-})$, define $G_{k} := G_{k,+} \cup G_{k,-}$. Observe that in order to minimize the loss~\eqref{eq:loss}, $(G_{k,+}, G_{k,-})$ must be an optimal split that minimizes $(Z_{G_{k,+}} - Z_{G_{k,-}})^2$ among all splits of $G_k$:
\[
(G_{k,+}, G_{k,-}) \in {\arg \min}_{\substack{G'_{k,+} \cup G'_{k,-} = G_k \\ G'_{k,+} \cap G'_{k,-} = \emptyset}} (Z_{G'_{k,+}} - Z_{G'_{k,-}})^2.
\]
While there may be multiple optimal splits, $(G_{k,+}, G_{k,-})$, the minimum possible distance between them is uniquely determined given the union $G_k$. Consequently, we can define the loss~\eqref{eq:loss} with respect to $G_k$'s. For any subset $G \in \mathcal{G}$, let
\[
\mathrm{score}(G) := \min_{\substack{G_+ \cup G_- = G \\ G_+ \cap G_- = \emptyset}} (Z_{G_+} - Z_{G_-})^2.
\]
Our goal is to find a partition of $\mathcal{G}$ into disjoint subsets $\{G_k\}_{k=1}^K$ that minimize the loss
\[
\mathrm{loss}(\{G_k\}_{k=1}^K) := \sum_{k=1}^K \mathrm{score}(G_k).
\]



\noindent While the optimal matched pairs design can be found in polynomial time, the proposed generalized matching problem is NP-hard even when the size of $G_{k}$ increases from 2 to 3 (setting $\ell = u = 3$).
\begin{theorem}[NP-hardness]\label{thm:np_hard}
The following problem is NP-hard: Given a set $\mathcal{G}$ of size $|\mathcal{G}|=3m$ and values $Z_g \in \mathbb{Z}^+$ for each $g \in \mathcal{G}$, for any subset $G \subseteq [3m]$ define
\[
\mathrm{score}(G) := \min_{\substack{G_+ \cup G_- = G \\ G_+ \cap G_- = \emptyset}} \Big( \sum_{i \in G_+} Z_{i} - \sum_{j \in G_-} Z_{j} \Big)^2.
\]
Determine whether $\mathcal{G}$ can be partitioned into $m$ disjoint sets $G_1, G_2, \dots, G_m$ such that for all $i \in [m]$, $|G_i|=3$ and $\mathrm{score}(G_i) = 0$ ($[m]$ stands for the set $\{1,2,\dots,m\}$).
\end{theorem}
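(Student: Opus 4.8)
The plan is to give a polynomial-time reduction from the NP-complete problem \textsc{Numerical 3-Dimensional Matching} (N3DM): one is given disjoint sets $W,X,Y$, each of size $m$, a size $s(a)\in\mathbb{Z}^+$ for every $a\in W\cup X\cup Y$, and a bound $b\in\mathbb{Z}^+$, and must decide whether $W\cup X\cup Y$ can be partitioned into $m$ triples, each containing exactly one element from each of $W,X,Y$, whose sizes sum to $b$. Before building the gadget I would first record the elementary fact that, for positive integers $a_1\le a_2\le a_3$, we have $\mathrm{score}(\{a_1,a_2,a_3\})=0$ if and only if $a_3=a_1+a_2$: among the at most three ways to split a $3$-set into two nonempty parts, the two that isolate a non-maximal element always leave a strictly positive gap when all values are positive, so the only balanced split puts the largest element alone (empty parts obviously give a positive gap too). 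Hence the target problem asks exactly for a partition of $\mathcal G$ into triples in each of which the maximum equals the sum of the other two.

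For the construction I would, after deleting in polynomial time the obviously infeasible N3DM instances in which some $s(a)\ge b$ (so that henceforth $1\le s(a)\le b-1$), fix a multiplier $M:=2b$ and create $3m$ numbers carrying a ``group tag'' in the coefficient of $M$: for $w_i\in W$ set $Z^{(w)}_i:=M+s(w_i)$ (tag $1$), for $x_i\in X$ set $Z^{(x)}_i:=2M+s(x_i)$ (tag $2$), and for $y_i\in Y$ set $Z^{(y)}_i:=3M+(b-s(y_i))$ (tag $3$); these all lie in $\mathbb{Z}^+$ and have encoding length polynomial in the input. The key quantitative point I would verify is a ``no-carry'' property: since every value part lies in $[1,b-1]$ and $M>2(b-1)$, whenever $Z_P+Z_Q=Z_R$ for three of these numbers, written as $M\cdot t+\mathrm{val}$, the identity $M(t_P+t_Q-t_R)=\mathrm{val}_R-\mathrm{val}_P-\mathrm{val}_Q$ has a right-hand side of absolute value $<M$, which forces simultaneously $t_P+t_Q=t_R$ and $\mathrm{val}_P+\mathrm{val}_Q=\mathrm{val}_R$.

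With this in hand, the two directions of correctness should be routine. Forward: from an N3DM matching, turn each triple $\{w_i,x_j,y_k\}$ with $s(w_i)+s(x_j)+s(y_k)=b$ into $\{Z^{(w)}_i,Z^{(x)}_j,Z^{(y)}_k\}$ and note $Z^{(w)}_i+Z^{(x)}_j=3M+(b-s(y_k))=Z^{(y)}_k$, so that triple has score $0$. Converse: take any partition of $\mathcal G$ into $m$ score-$0$ triples; by the characterization each triple has $Z_P+Z_Q=Z_R$, so by the no-carry property its tag multiset is contained in $\{1,2,3\}$ and satisfies $t_P+t_Q=t_R$, leaving only the possibilities $\{1,1,2\}$ and $\{1,2,3\}$; if $j$ triples are of the first kind and $k$ of the second, counting elements of tag $1$, $2$ and $3$ gives $2j+k=m$, $j+k=m$, $k=m$, hence $j=0$, so every triple contains exactly one element of each group with the $Y$-element as the sum, and its value equation $s(w_i)+s(x_j)=b-s(y_k)$ reconstitutes an N3DM matching. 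I expect the converse direction---ruling out triples that ``mix'' the three groups---to be the only genuine obstacle, and it is precisely what the tag digits together with the short counting identity are designed to kill; the one thing needing real care is choosing $M$ large enough that no admissible split can produce a carry between the value and tag parts. Since a partition is a trivially checkable witness, the problem also lies in NP, so it is in fact NP-complete, which in particular yields the claimed NP-hardness.
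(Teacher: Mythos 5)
Your reduction is correct and is essentially the paper's own proof: both reduce from numerical 3-dimensional matching (SP16 in Garey--Johnson) by adding a large multiple of a modulus $M$ as a ``group tag'' to each size, so that a zero-score triple forces the two-vs-one split to pair one element of each of $W,X,Y$ with sizes summing to $b$. The only difference is in the gadget constants: the paper uses tag coefficients $1,3,4$ (making $1+3=4$ the unique additive relation, so no further argument is needed) and $M=1+B+\sum_a s(a)$, whereas your coefficients $1,2,3$ with $M=2b$ admit the spurious pattern $1+1=2$, which your counting of tag-$3$ and tag-$2$ elements correctly eliminates.
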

\noindent Note that this decision problem is easier than the problem we are interested in (loss minimization).
We include a formal proof of the theorem in Appendix~\ref{sec:np_hard}.

\subsection{MIP formulation}\label{sec:mip}
To tackle this NP-hard problem, we propose a covering formulation and solve it using mixed-integer programming (MIP). The idea of using MIP for experimental design is not new \citep[see, for example,][]{zubizarreta2012using,doudchenko2021synthetic,abadie2021synthetic}, but our covering formulation is novel.

Define $\mathcal{F} := \{G \subseteq \mathcal{G} \mid \ell \leq |G| \leq u\}$. Our variable is a boolean vector $x \in \{0,1\}^{|\mathcal{F}|}$ such that component $x_G$ indicates whether a candidate supergeo pair $G$ is included in the supergeo design. 
For each geo $g$, we add a linear constraint that restricts $g$ to be covered by exactly one supergeo pair. To express this constaint, we define a vector $M^{(g)} \in \{0,1\}^{|\mathcal{F}|}$ such that $M^{(g)}_{G} = 1$ if $g \in G$ and $M^{(g)}_{G} = 0$ otherwise. The constraint that geo $g$ is covered by exactly one supergeo pair is equivalent to $(M^{(g)})^{\top} x = 1$. We stack these column-vectors into a matrix $M := [M^{(g_1)} \cdots M^{(g_N)}] \in \{0,1\}^{|\mathcal{F}| \times N}$. 

Let $\mathds{1}_d$ denote the all-one vector of dimension $d$ for any $d$. The mixed-integer program we solve is as follows:
\begin{align*}
    \min_{x}&\;\; \sum_{G \subseteq \mathcal{G}:\ \ell \leq |G| \leq u} \mathrm{score}(G) \cdot x_G \\
    \text{s.t.}&
    ~~~~M^{\top} x = \mathds{1}_N,  \hspace{1.5cm} & \mbox{(Exact cover)}\\
    &~~~~\mathds{1}_{|\mathcal{F}|}^\top \cdot x \geq \; \kappa, &\mbox{(Minimum pairs)} \\
    &~~~~ x \in \{0,1\}^{|\mathcal{F}|}. & \mbox{(Boolean selection)}
\end{align*}
We use SCIP solver of \citet{BestuzhevaEtal2021OO} to solve this MIP which can be done under 1 hour for approximately $N=50$ geos and the maximum size of supergeo pairs equal to 4. To speed up the optimization and make the problem tractable at the practically important scale ($N=210$) we employ a number of heuristics discussed in Appendix~\ref{sec:heuristics}.\footnote{The most straightforward MIP formulation is to use $O(N^2)$ variables such that variable $(i,j)$ indicates whether the $i$-th geo is in the $j$-th supergeo. Our covering formulation has $O(N^u)$ variables. However, it is much easier to solve due to the numerous symmetries between variables intrinsic to the alternative formulation.}

\section{Empirical results}\label{sec:eval}
We evaluate the proposed design in the context of estimating the incremental return on ad spend (iROAS), where for each geographic region the ad spend and the response (e.g.~conversions or sales) are observed. For each geo $g \in \mathcal{G}$, we collect the spend $S_g[1],\dots,S_g[T]$ and responses $R_g[1],\dots,R_g[T]$ over $T$ time periods. We divide $T$ into a pretest phase $\{1,\dots,T_0\}$ and a test phase $\{T_0+1,\dots,T\}$, and refer to the aggregates $R_g^{\mathrm{pre}} := \sum_{t=1}^{T_0} R_g[t]$ and $R_g^{\mathrm{test}} := \sum_{t=T_0 + 1}^{T} R_g[t]$ as the (total) pretest and test responses respectively. We define the pretest spend $S_g^{\mathrm{pre}}$ and the test spend $S_g^{\mathrm{test}}$ similarly. We use the pretest response $R_g^{\mathrm{pre}}$ as the approximation of $Z_g$ which is unobserved and we use the algorithms described in Section~\ref{sec:supergeo_algo} to generate experimental designs. The tuning parameters $\ell$ (minimum size of a supergeo pair), $u$ (maximum size of a supergeo pair), and $\kappa$ (minimum number of supergeo pairs) are fixed at 2, 4, and 1 respectively.

We iterate over different treatment assignments performing what we call the ``half-synthetic'' evaluations---the underlying data comes from real online advertising settings with artificial treatment ``injected'' in the data. Specifically, at each iteration we increase (\emph{``heavy-up''}) the spend in each treated geo in the test period to $(1 + r) \cdot S_g^{\mathrm{test}}$ and leave the spend in the control geos unaffected.\footnote{The methods of this paper can also be applied to another commonly used type of experiment in which the spend in treated geos is set to zero (\emph{``go dark''}).} The response variable in a treated geo $g$ is increased to $R_g^{\mathrm{test}} + \theta_g \cdot r S_g^{\mathrm{test}}$, where $\theta_g$ is the iROAS for that geo that is chosen by us and depends on the particular evaluation as discussed further. The ratio $r$ is set in such a way that the total increase in the spend is equal to the fixed budget, $B$.

The underlying data comes from one of the two real datasets that we call ``A'' and ``B.'' In both datasets $\mathcal{G}$ corresponds to the set of 210 DMAs in the United States, and the unit of time is one day. We set both the pretest and test phases to be four weeks long. The budget, $B$, is chosen to roughly match the magnitude of the total spend in the absence of treatment effects. 
The true response and spend values are scaled to be between $[0,1]$. For all experiments reported in this paper, the evaluation procedures are iterated $M=10000$ times across different treatment assignments. 
We run the heuristics from Appendix~\ref{sec:heuristics} with different parameters in parallel with a time limit of 3 hours, and then pick the design that leads to the smallest training loss. 

We compare the supergeo design to matched pairs design using the empirical estimator~(Eq.~\eqref{eq:theta_hat}) as well as the trimmed match estimator~(Eq.~\eqref{eq:theta_hat_trim}). The poorly matched pairs can be trimmed during either the design stage or the analysis stage (or both). This has the potential to greatly reduce the variance of the resulting estimator, but may introduce substantial bias if the trimmed geos differ from the remaining ones in terms of their response to treatment. In the evaluations we have conducted, the two approaches---trimming at the design stage or at the analysis stage---led to similar results with the latter being generally more effective since it can take into account not only the similarity of geos in the pretest period, but also their potential divergence in the test period due to noise. Consequently, we present the comparisons between the supergeo design which utilizes a simple empirical estimator and the matched pairs design which utilizes the trimmed match estimator from \citet{chen2022robust}. 
However, the takeaways from the analysis would remain similar if the trimming was done at the design phase. We also emphasize that while the main advantage of the supergeo design is the ability to improve the matching quality without sacrificing any data, this design can be combined with trimming. This leads to a performance that is similar or better than that of the trimmed match estimator applied to matched pairs design, but may still lead to a bias when the effects are heterogeneous. 

\subsection{Homogeneous $\theta_g$}\label{sec:evaluation_homogeneous}
First, we present the comparisons between the different methodologies for the case of homogeneous $\theta_g=1$ for all $g\in\mathcal{G}$. In Table~\ref{tab:homogeneous} we report the \emph{root-mean-square error} (RMSE), $\sqrt{M^{-1}\sum_{m=1}^M (\hat{\theta}^{(m)} - \theta)^2}$, and the absolute bias, $|M^{-1}\sum_{m=1}^M \hat{\theta}^{(m)} - \theta|$, where $\hat{\theta}^{(m)}$ corresponds to the estimate computed in iteration $m$. 
See Figure~\ref{fig:homogeneous} in Appendix~\ref{sec:other_figs} for the histograms of the estimates.

\begin{table}[!h]
\centering
\begin{tabular}{l|l|cc|cc}
    \toprule
    \multirow{2}{*}{Est.} & \multirow{2}{*}{Design} & \multicolumn{2}{c|}{Dataset A} & \multicolumn{2}{c}{Dataset B} \\
    \cmidrule(r){3-6}
    & & RMSE & Bias & RMSE & Bias \\
    \midrule
    \multirow{2}{*}{$\hat{\theta}$} & Pairs & $0.97$ & $0.022$ & $2.03$ & $0.107$ \\
     & Supergeo & $0.42$ & $0.005$ & $0.30$ & $0.006$ \\
    \midrule
    \multirow{2}{*}{$\hat{\theta}^{\mathrm{trim}}$} & Pairs & $0.33$ & $0.006$ & $0.42$ & $0.003$ \\
     & Supergeo & $0.34$ & $0.006$ & $0.37$ & $0.016$ \\
    \bottomrule
  \end{tabular}
  
\medskip
\raggedright
{\small\textit{Note}: The estimator (abbreviated as Est.) is either the empirical estimator $\hat{\theta}$ or the trimmed match estimator $\hat{\theta}^{\mathrm{trim}}$. The experimental design is either the matched pairs design (abbreviated as Pairs) 
or the supergeo design (abbreviated as Supergeo).}
\caption{Empirical results under homogeneous iROAS: the RMSE and the bias of the estimates.}
\label{tab:homogeneous}
\end{table}

As evident from the table, supergeo design almost uniformly outperforms the matched pairs design in terms of RMSE when both designs use the same estimator.\footnote{The only exception being the trimmed match estimator on dataset A, where the supergeo design is very slightly worse than the matched pairs design.} Moreover, the performance of the empirical estimator applied to supergeo design is comparable to that of the trimmed match estimator applied to the matched pairs design (supergeo design is better on dataset B, and worse on dataset A)---supergeo design is able to achieve similar performance without throwing out any data.

In Figure~\ref{fig:design_small} we report the matching quality achieved by the two designs in the pretest period. 
We can see that supergeo design achieves a much better matching quality than the matched pairs design on the pretest data. Figure~\ref{fig:design} in Appendix~\ref{sec:other_figs} shows that on the test data, the matching quality of the supergeo design declines relative to the pretest period due to temporal noise. We also observe that this noise is more substantial in dataset A compared to dataset B, and we conjecture that this is the reason behind supergeo design performing worse on dataset A.

\begin{figure}[!tb]
\centering
\subfigure[Matched pairs design]{\label{fig:A_baseline_design_small}{\includegraphics[trim={0 18cm 2.5cm 3.9cm},clip,width=0.48\textwidth]{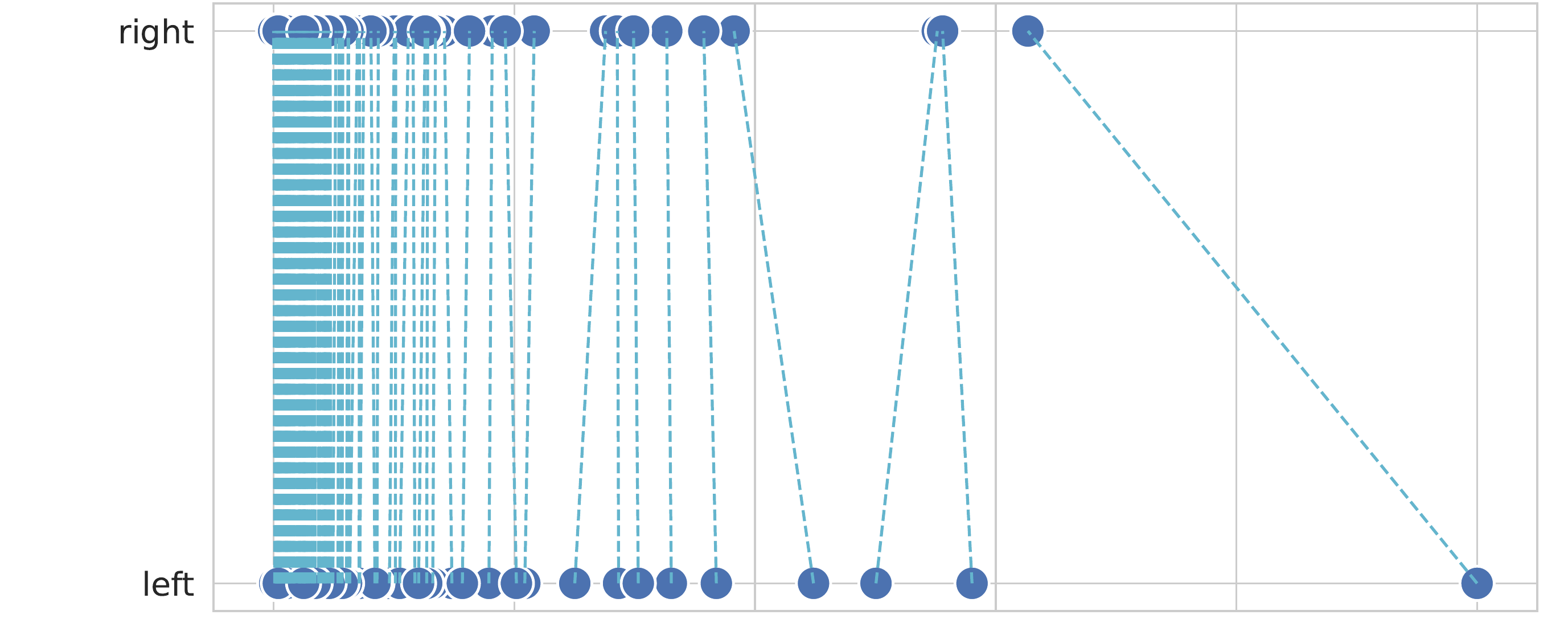}}}
\subfigure[Supergeo design]{\label{fig:A_supergeo_design_small}{\includegraphics[trim={0 18cm 2.5cm 3.9cm},clip,width=0.48\textwidth]{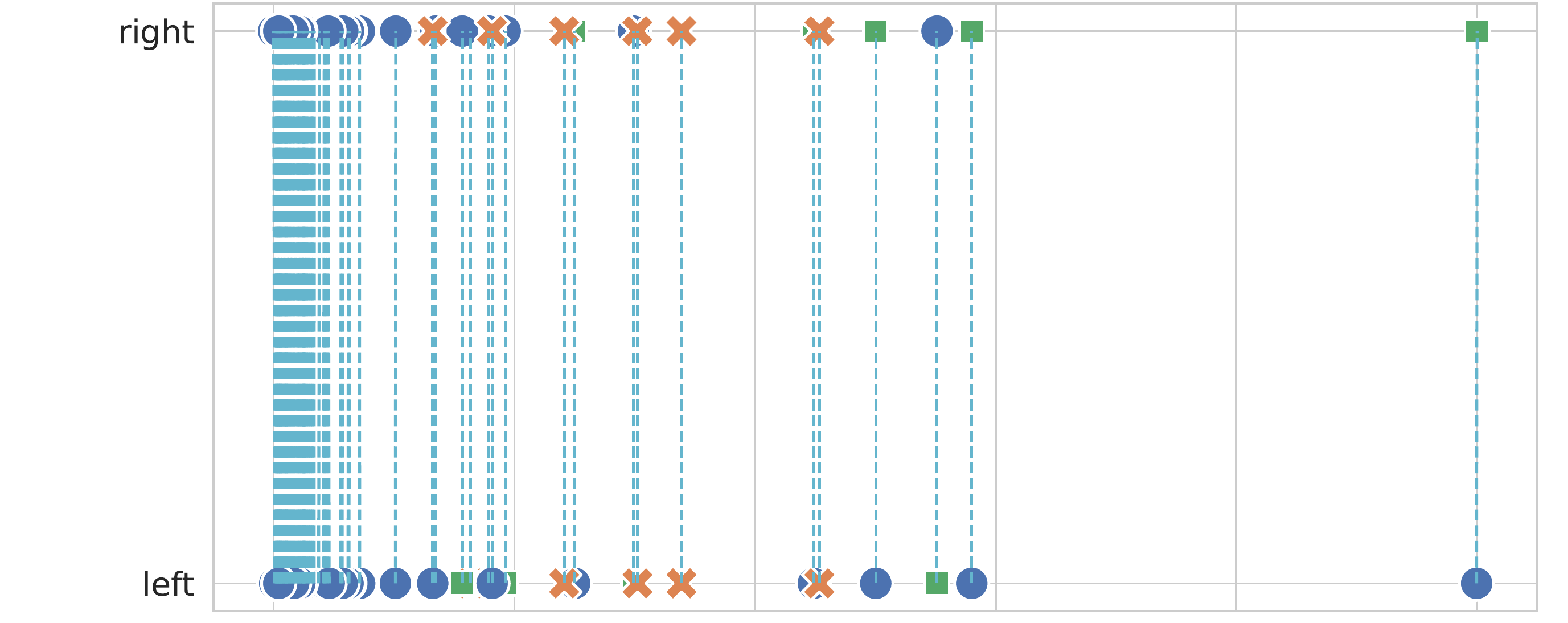}}}

\medskip
\raggedright
{\small\textit{Note}: Each dot represents one (super)geo and the dashed line between two dots implies that these two (super)geos are matched with ``left'' and ``right'' representing the two ``sides'' of each (super)geo pair. The $x$-axis is the scaled value of total pretest response. The blue circle represents a single geo, the orange cross represents a supergeo consisting of 2 geos, the green square represents a supergeo consisting of 3 geos.}
\caption{Matching quality achieved by different designs on dataset A in the pretest phase.
}
\label{fig:design_small}
\end{figure}

\subsection{Heterogeneous $\theta_g$}
Second, we consider the case when $\theta_g$'s can vary across geos. We modify $\theta_g$ by adding a compotent that is proportional to the uninfluenced response of a geo: $\theta_g = 1 + c \cdot Z_g/(N^{-1}\sum_{g} Z_g)$, where $c=0.2$ when applied to dataset A and $0.07$ when applied to dataset B. Following Eq.~\eqref{eq:theta_hetero_simplify}, the quantity of interest is computed as $\theta = \sum_g\theta_g \cdot S_g^{\mathrm{test}}/\sum_g S_g^{\mathrm{test}}$. 

This model captures a scenario in which advertising is more effective in densely populated areas that often generate larger revenues. This assumption may or may not be adequate in a particular setting. What is important, is the fact that trimming of the poorly matched pairs will likely be non-random with respect to the distribution of $\theta_g$'s across geos. In Appendix~\ref{sec:other_hetero} we report the results from another simulation study with heterogeneous $\theta_g$'s---each $\theta_g$ equals $1$ plus a uniformly random noise. Under this model, the results are similar to those from the homogeneous model---while the sample size is reduced, the geos are trimmed in a way that is random with respect to the distribution of $\theta_g$'s, resulting in no substantial bias. This confirms our intuition from Section~\ref{sec:hetero_model}. 

Table~\ref{tab:heterogeneous} reports the RMSE and the absolute bias while Figure~\ref{fig:heterogeneous} in Appendix~\ref{sec:other_figs} shows the histograms of the estimates. Supergeo design---when combined with the empirical estimator---reduces the variance without introducing a bias and leads to the most precise estimates across the board. The trimmed match estimator is effective at reducing the RMSE of the matched pairs design by reducing the variance, but that is achieved at the cost of introducing a bias which may be substantial. Moreover, it does not improve the results over the empirical estimator when the supergeo design is used---the potential reduction in variance is not worth the price paid in the increased bias.


\begin{table}[!h]
\centering
\begin{tabular}{l|l|cc|cc}
    \toprule
    \multirow{2}{*}{Est.} & \multirow{2}{*}{Design} & \multicolumn{2}{c|}{Dataset A} & \multicolumn{2}{c}{Dataset B} \\
    \cmidrule(r){3-6}
    & & RMSE & Bias & RMSE & Bias \\
    \midrule
    \multirow{2}{*}{$\hat{\theta}$} & Pairs & $1.01$ & $0.040$ & $2.17$ & $0.140$ \\
     & Supergeo & $0.38$ & $0.003$ & $0.47$ & $0.004$ \\
    \midrule
    \multirow{2}{*}{$\hat{\theta}^{\mathrm{trim}}$} & Pairs & $0.58$ & $0.429$ & $0.65$ & $0.507$ \\
     & Supergeo & $0.54$ & $0.442$ & $0.52$ & $0.420$ \\
    \bottomrule
  \end{tabular}
  
\medskip
\centering
{\small\textit{Note}: The table is shown using the same format as is used in Table~\ref{tab:homogeneous}.}
\caption{Empirical results under heterogeneous iROAS.}
\label{tab:heterogeneous}
\end{table}

\section{Future work}
There are a number of potential future extensions:
\begin{itemize}
  \item Running experiments on half of the entire population is expensive. If the treatment fraction could be reduced, the cost savings would often surpass the potential loss in statistical power. This can be achieved, for example, by constructing supergeo triplets, quadruplets, etc.~and assigning only one supergeo in each group to treatment. 
  \item The current MIP formulation can be extended to include additional matching variables such as: socio-demographic characteristics, distance between geos, etc. Geo aggregation will need to be adjusted accordingly.
  \item The current design does not explicitly account for proximity of the regions, or travel patterns introducing additional interference/contamination. These considerations can be taken into account when constructing supergeos.
\end{itemize}



\clearpage
\newpage
\bibliography{ref}
\bibliographystyle{icml2023}

\newpage
\appendix
\onecolumn
\begin{figure*}[!ht]
\centering
\subfigure[Dataset A: Matched pairs design]{\label{fig:A_baseline_design}{\includegraphics[trim={0 2cm 2.5cm 2cm},clip,width=0.44\textwidth]{figs/old_pilot_baseline/design_plot.pdf}}}
\subfigure[Dataset A: Supergeo design]{\label{fig:A_supergeo_design}{\includegraphics[trim={0 2cm 2.5cm 2cm},clip,width=0.44\textwidth]{figs/old_pilot/design_plot.pdf}}}
\subfigure[Dataset B: Matched pairs design]{\label{fig:B_baseline_design}{\includegraphics[trim={0 2cm 2.5cm 2cm},clip,width=0.44\textwidth]{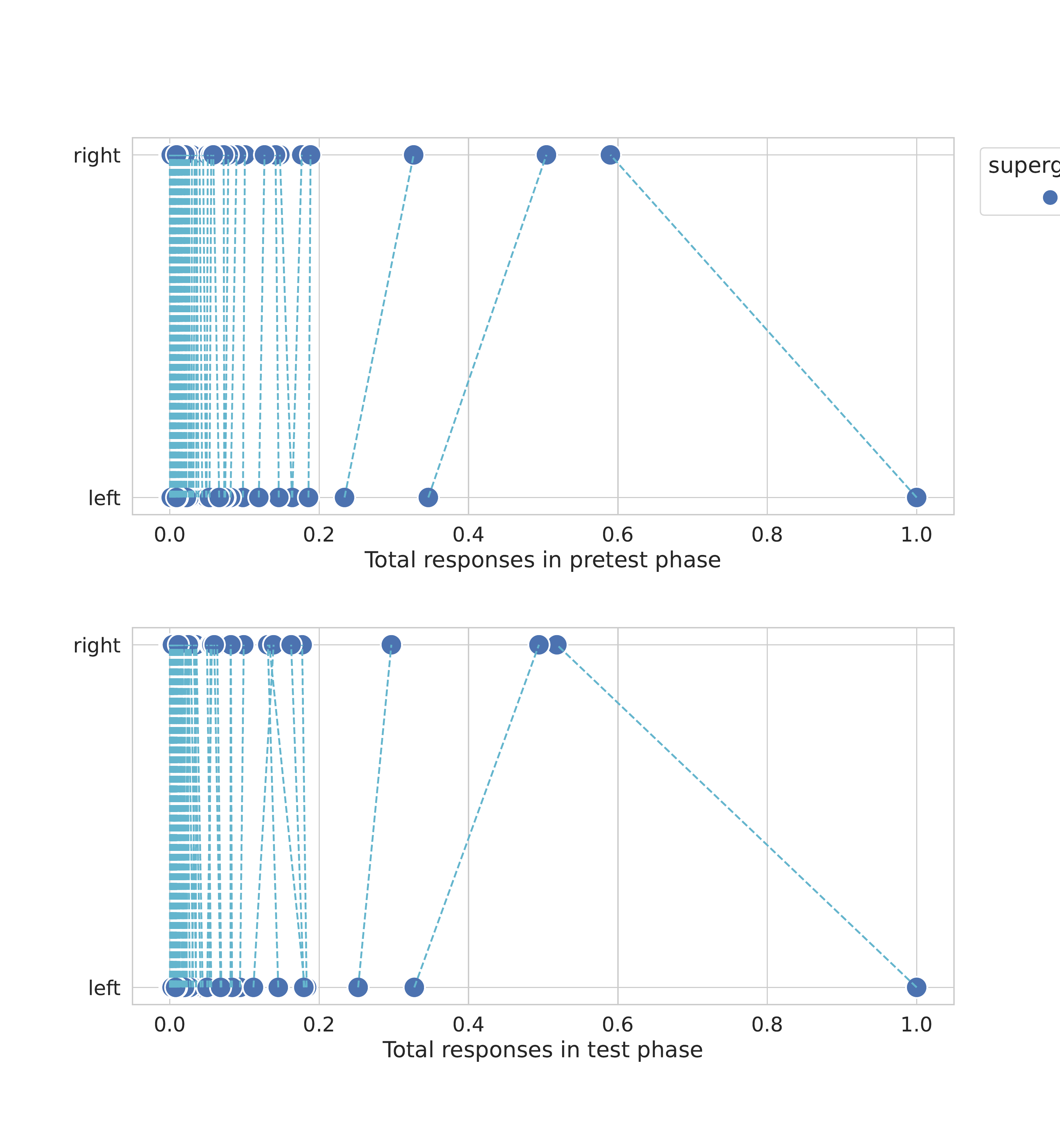}}}
\subfigure[Dataset B: Supergeo design]{\label{fig:B_supergeo_design}{\includegraphics[trim={0 2cm 2.5cm 2cm},clip,width=0.44\textwidth]{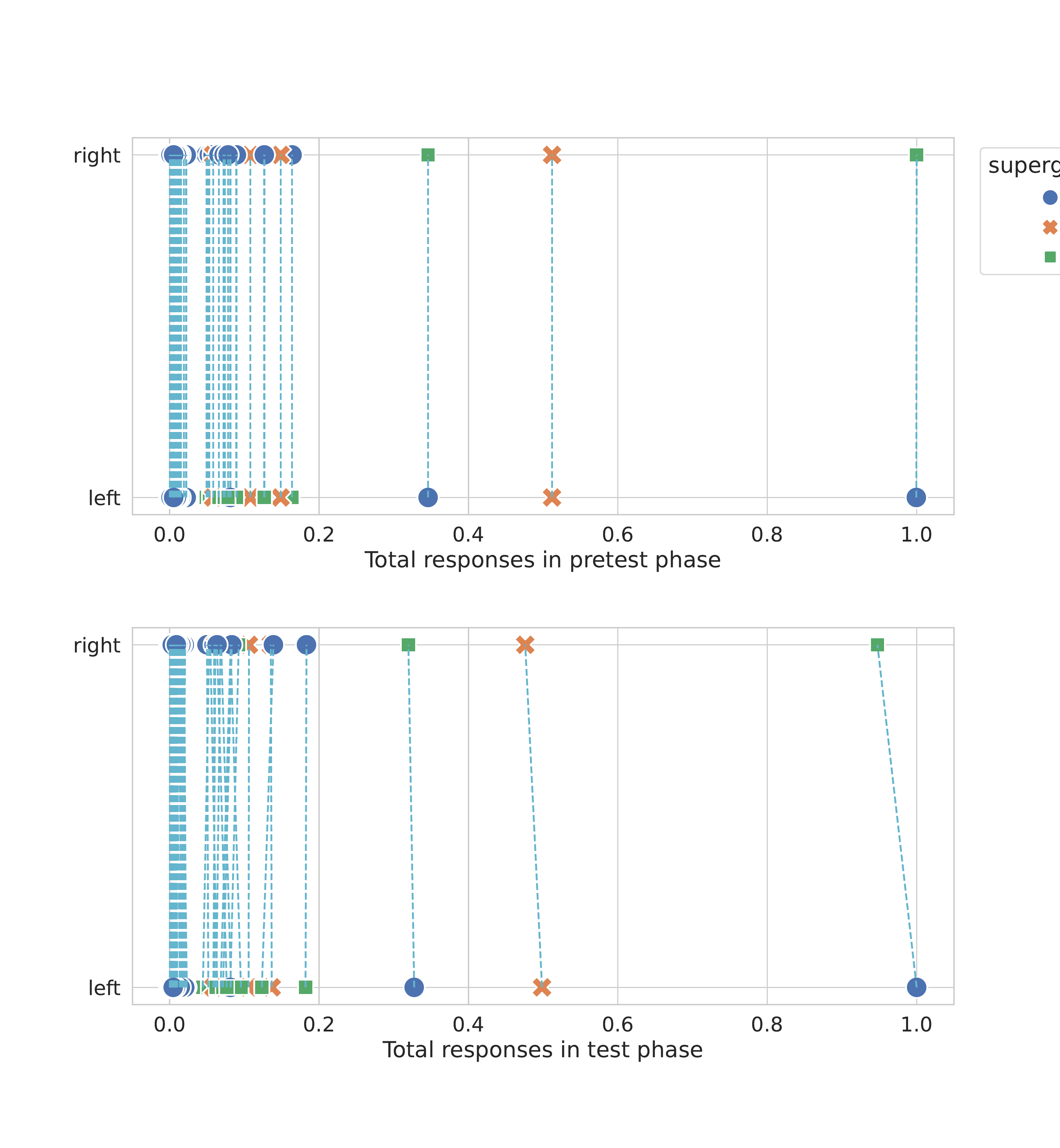}}}

\medskip
\raggedright
{\small\textit{Note}: Each dot represents one (super)geo and the dashed line between two dots implies that these two (super)geos are matched with ``left'' and ``right'' representing the two ``sides'' of each (super)geo pair. The $x$-axis is the scaled value of total pretest or test response. The blue circle represents a single geo, the orange cross represents a supergeo consisting of 2 geos, the green square represents a supergeo consisting of 3 geos.}
\caption{Matching quality achieved by different designs on datasets A and B.}
\label{fig:design}
\end{figure*}


\begin{figure*}[!ht]
\centering
\begin{tabular}{@{} c|c @{}}
\subfigure[A: $\hat{\theta}$ of Pairs ]{\label{fig:A_no_trim}{\includegraphics[width=0.24\textwidth]{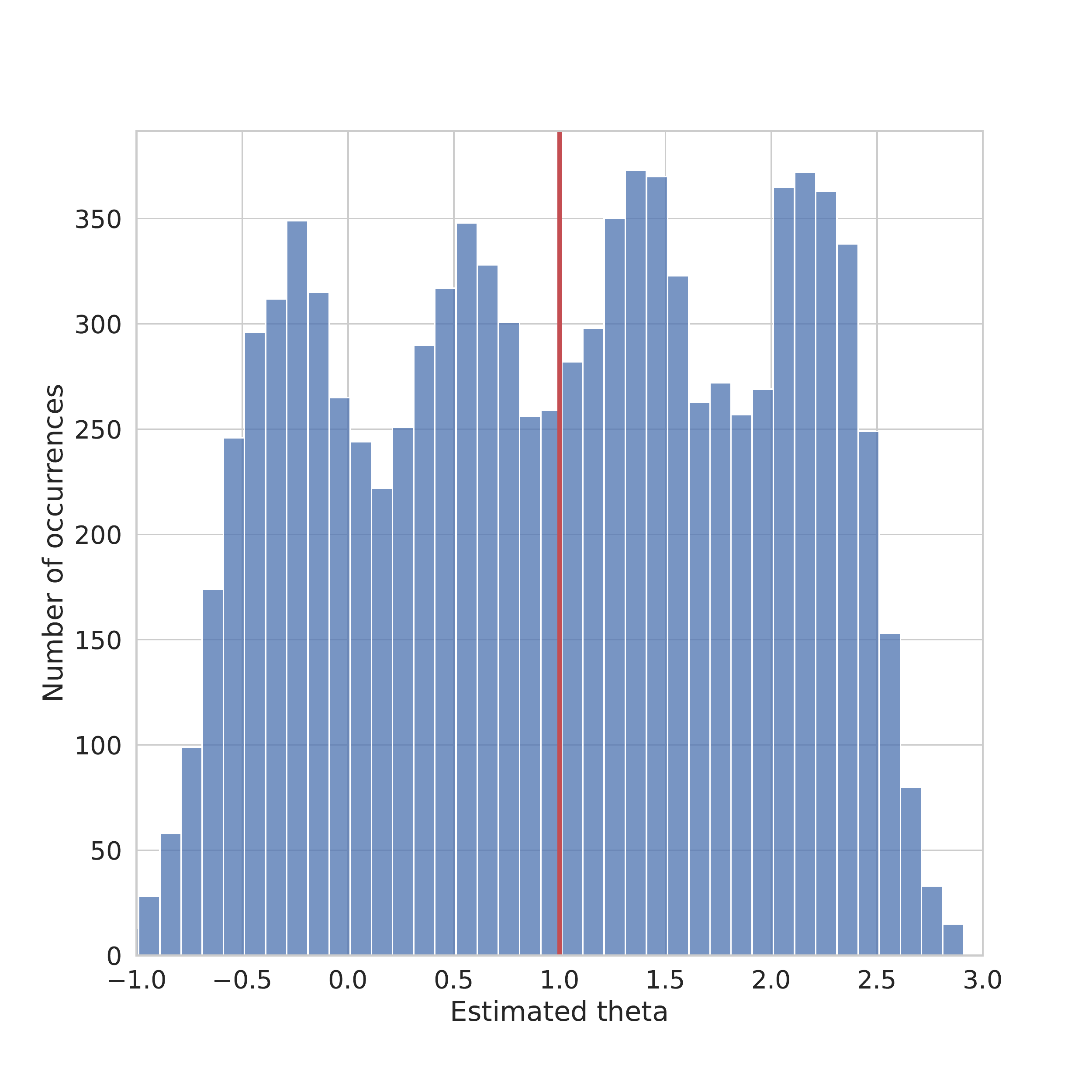}}}
\subfigure[A: $\hat{\theta}$ of Supergeo]{\label{fig:A_supergeo}{\includegraphics[width=0.24\textwidth]{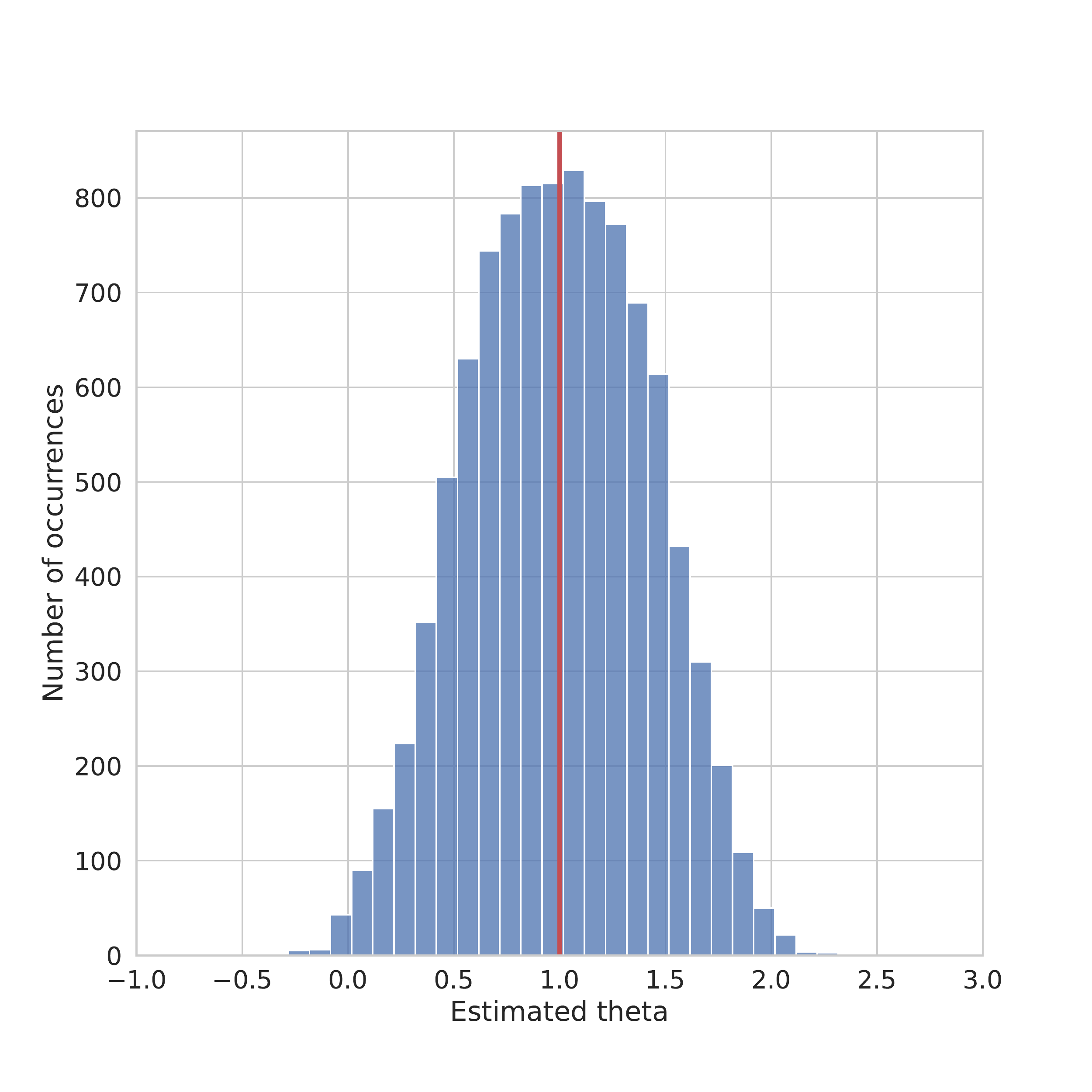}}}
&
\subfigure[A: $\hat{\theta}^{\mathrm{trim}}$ of Pairs ]{\label{fig:A_with_trim}{\includegraphics[width=0.24\textwidth]{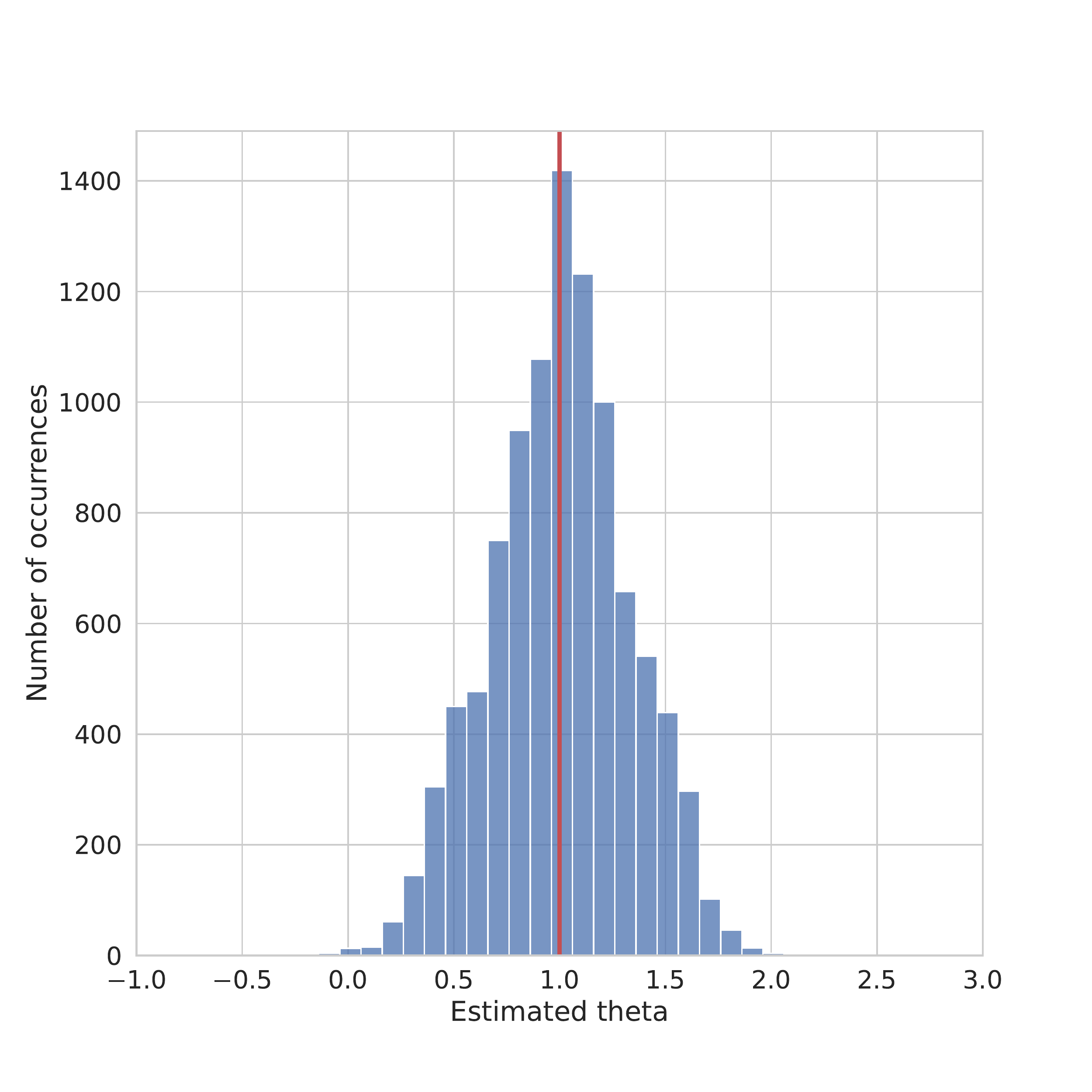}}}
\subfigure[A: $\hat{\theta}^{\mathrm{trim}}$ of Supergeo]{\label{fig:A_supergeo_trim}{\includegraphics[width=0.24\textwidth]{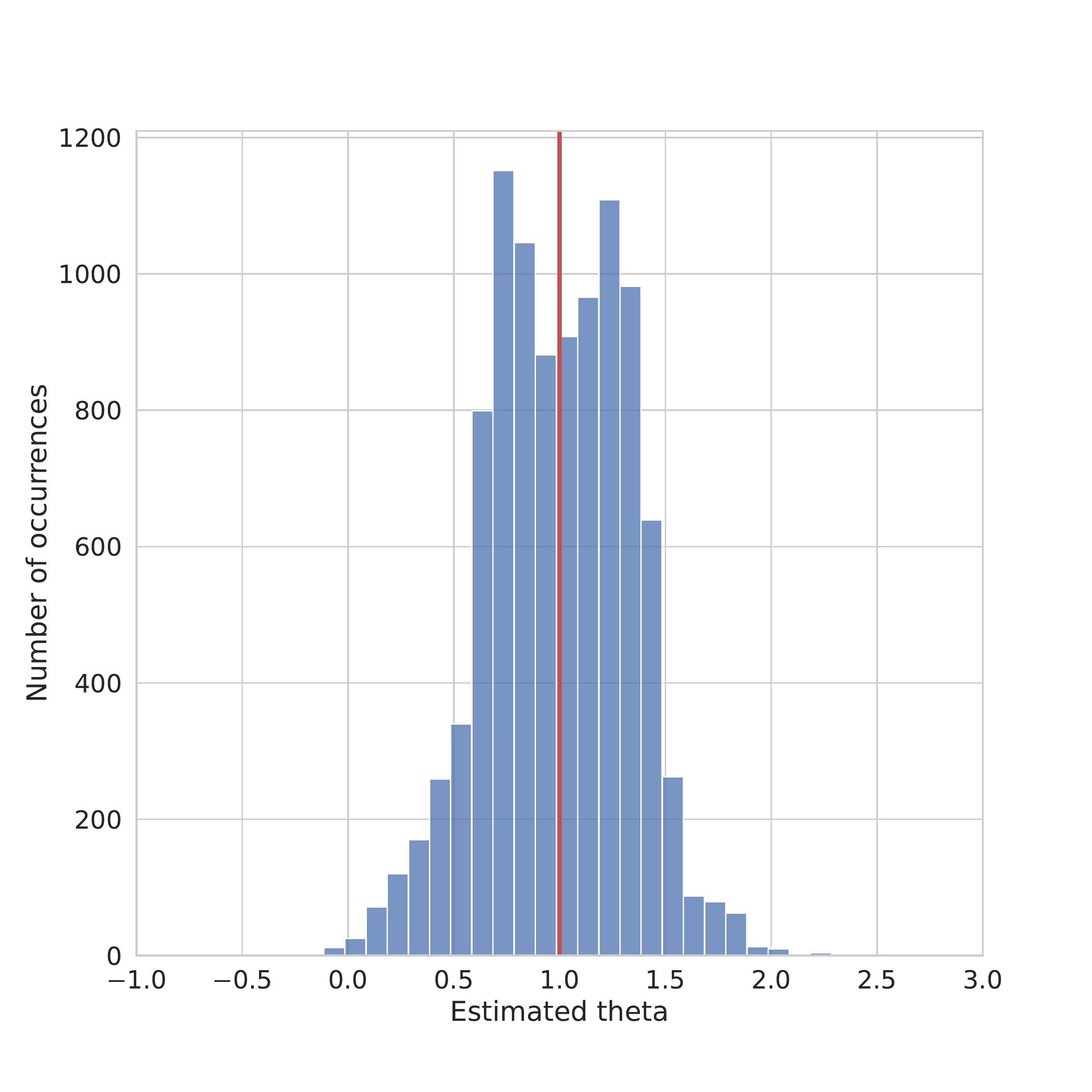}}}
\\ \hline
\subfigure[B: $\hat{\theta}$ of Pairs]{\label{fig:B_no_trim}{\includegraphics[width=0.24\textwidth]{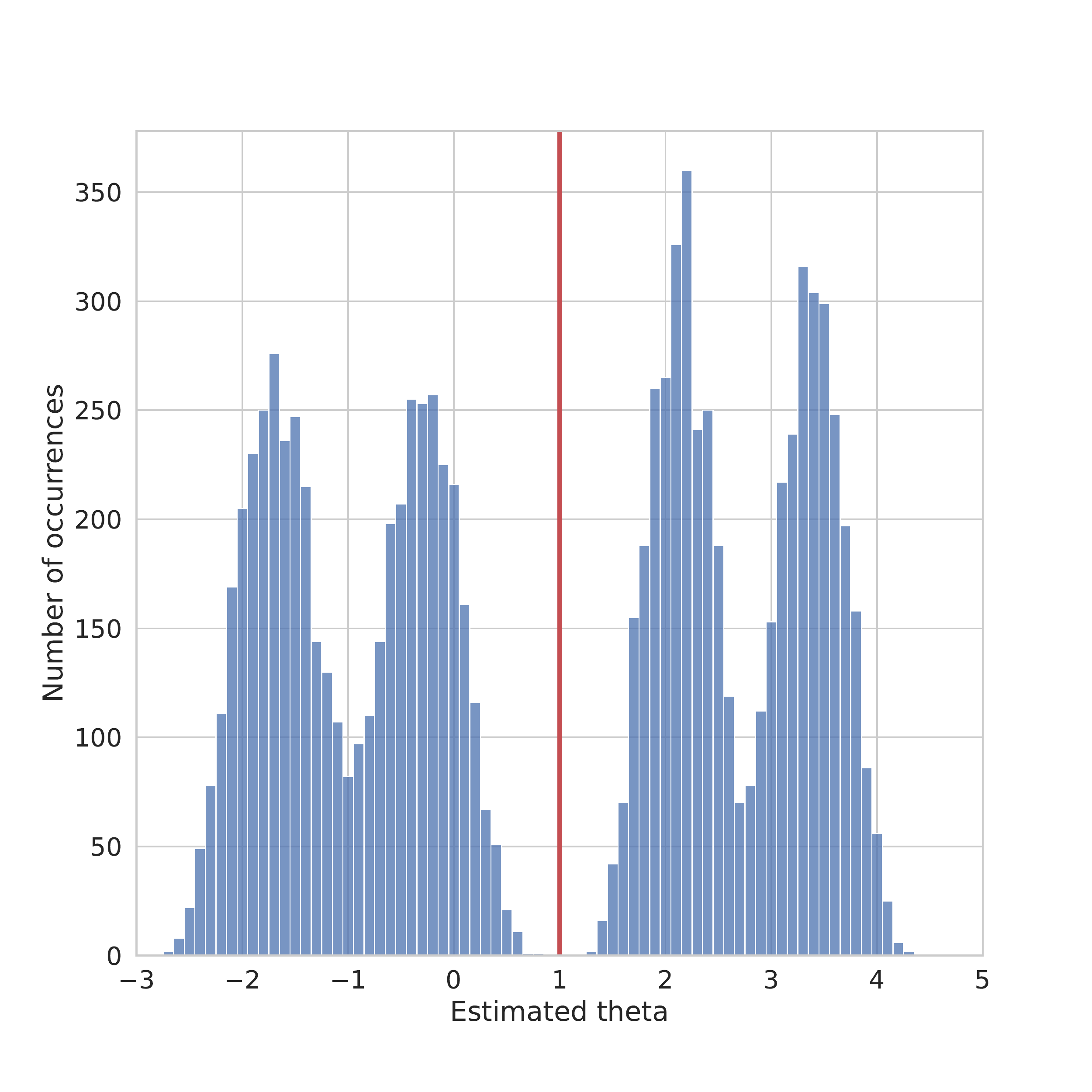}}}
\subfigure[B: $\hat{\theta}$ of Supergeo]{\label{fig:B_supergeo}{\includegraphics[width=0.24\textwidth]{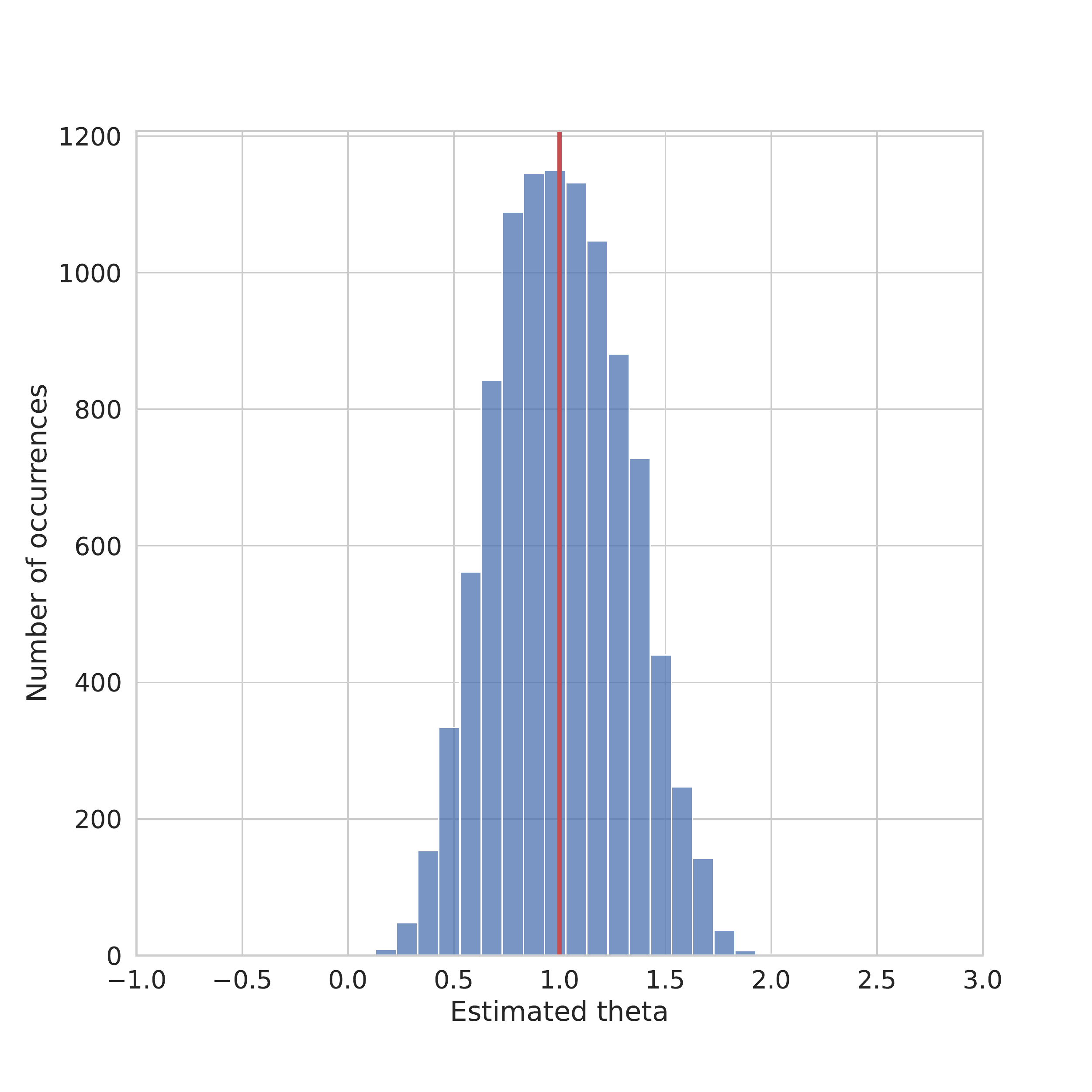}}}
&
\subfigure[B: $\hat{\theta}^{\mathrm{trim}}$ of Pairs ]{\label{fig:B_with_trim}{\includegraphics[width=0.24\textwidth]{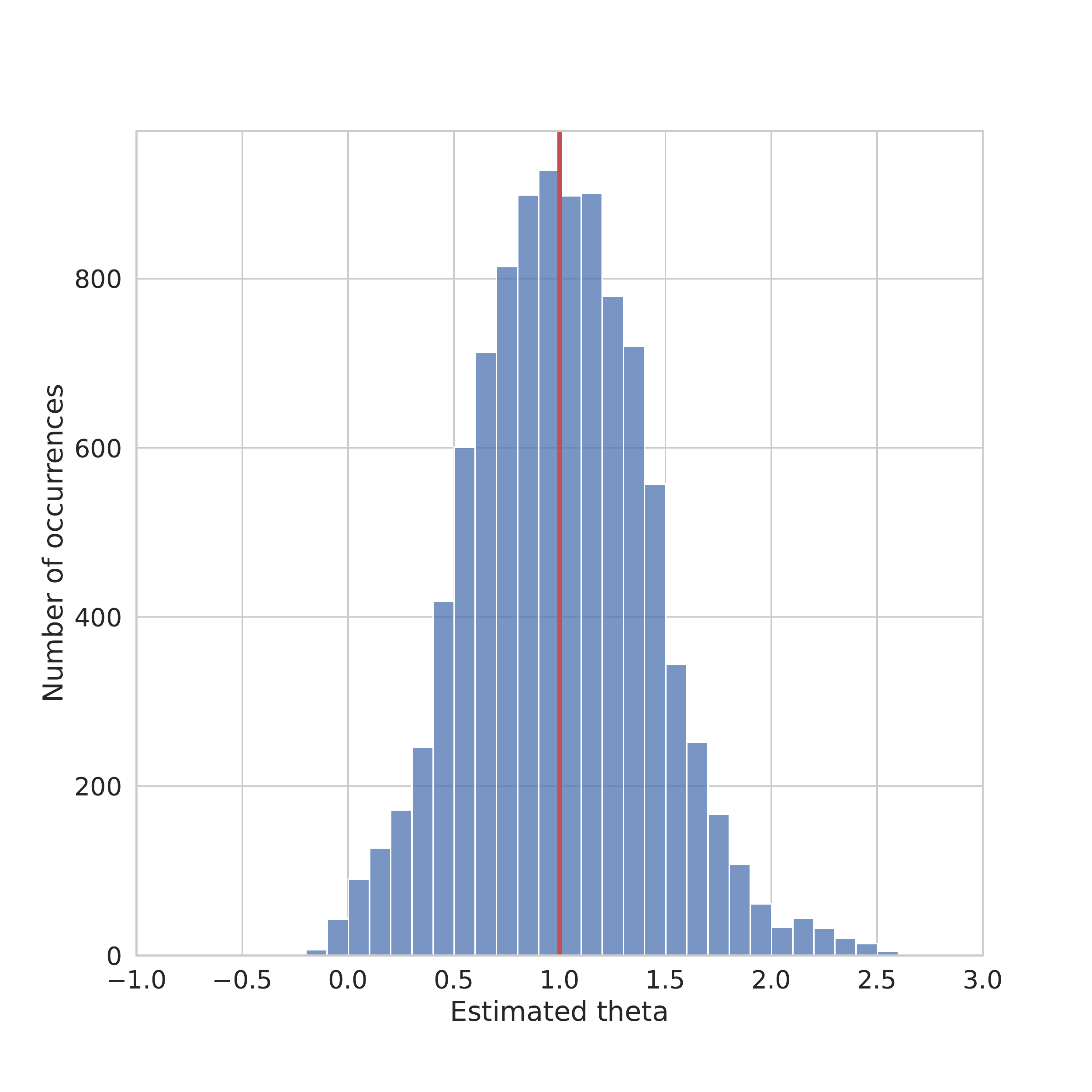}}}
\subfigure[B: $\hat{\theta}^{\mathrm{trim}}$ of Supergeo]{\label{fig:B_supergeo_trim}{\includegraphics[width=0.24\textwidth]{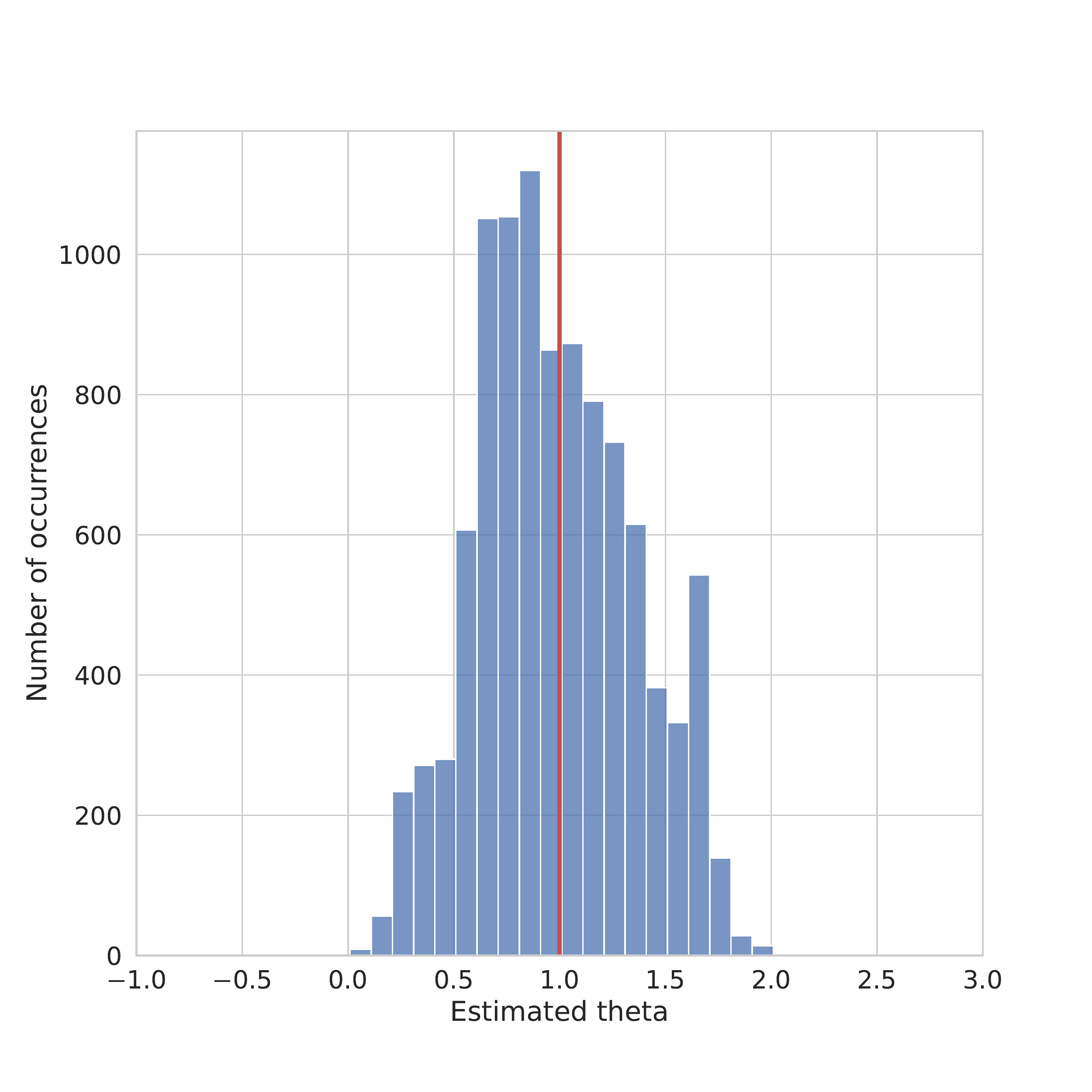}}}
\end{tabular}

\medskip
\raggedright
{\small\textit{Note}: The dataset is either A or B, the estimator is either the empirical estimator $\hat{\theta}$ or the trimmed match estimator $\hat{\theta}^{\mathrm{trim}}$, and the experimental design is either the matched pairs design (abbreviated as Pairs) or the supergeo design (abbreviated as Supergeo). The red vertical line corresponds to the true value of $\theta$.}
\caption{The histograms of the estimates under homogeneous iROAS.}
\label{fig:homogeneous}
\end{figure*}

\begin{figure*}[!ht]
\centering
\begin{tabular}{@{} c|c @{}}
\subfigure[A: $\hat{\theta}$ of Pairs]{\label{fig:A_hetero_no_trim}{\includegraphics[width=0.24\textwidth]{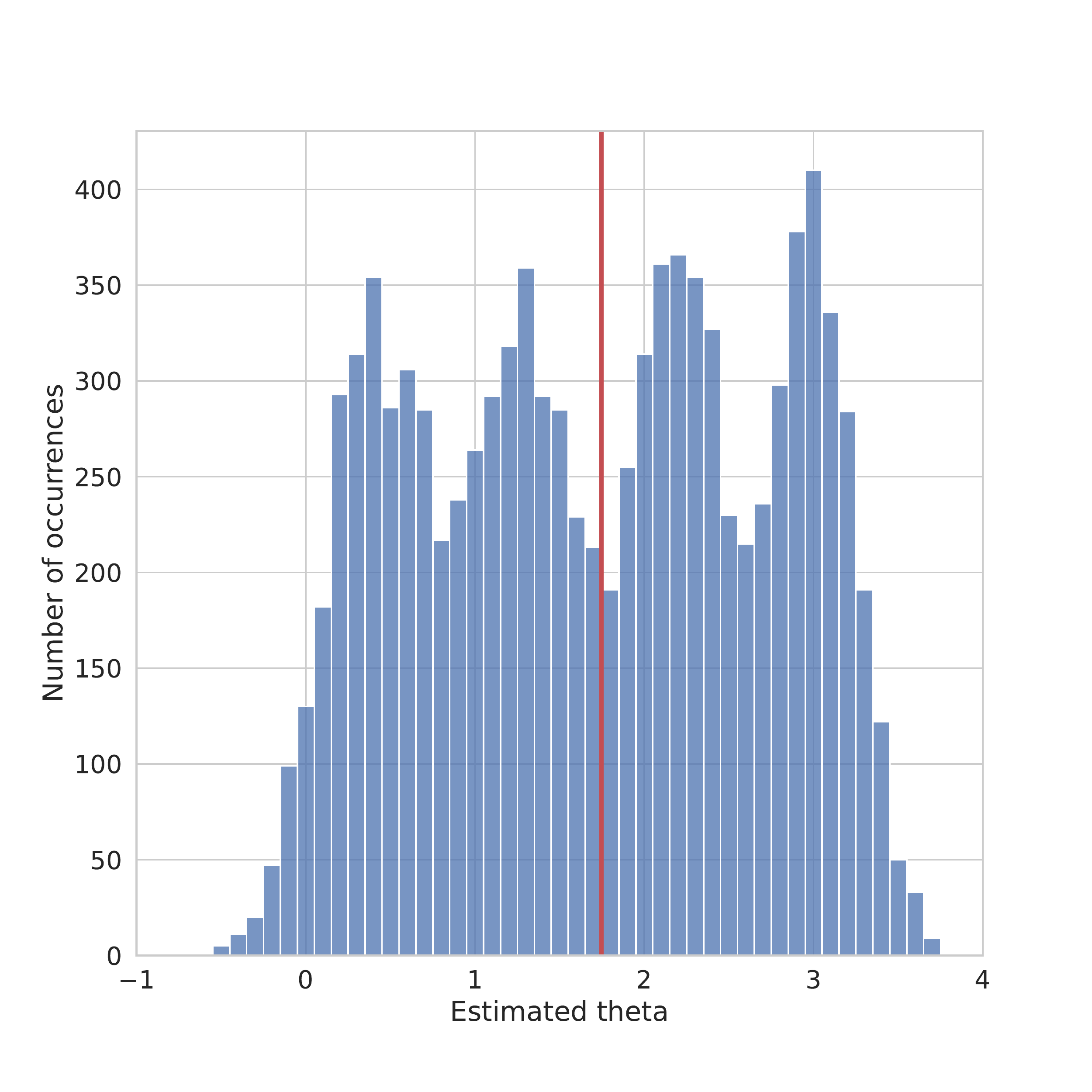}}}
\subfigure[A: $\hat{\theta}$ of Supergeo]{\label{fig:A_hetero_supergeo}{\includegraphics[width=0.24\textwidth]{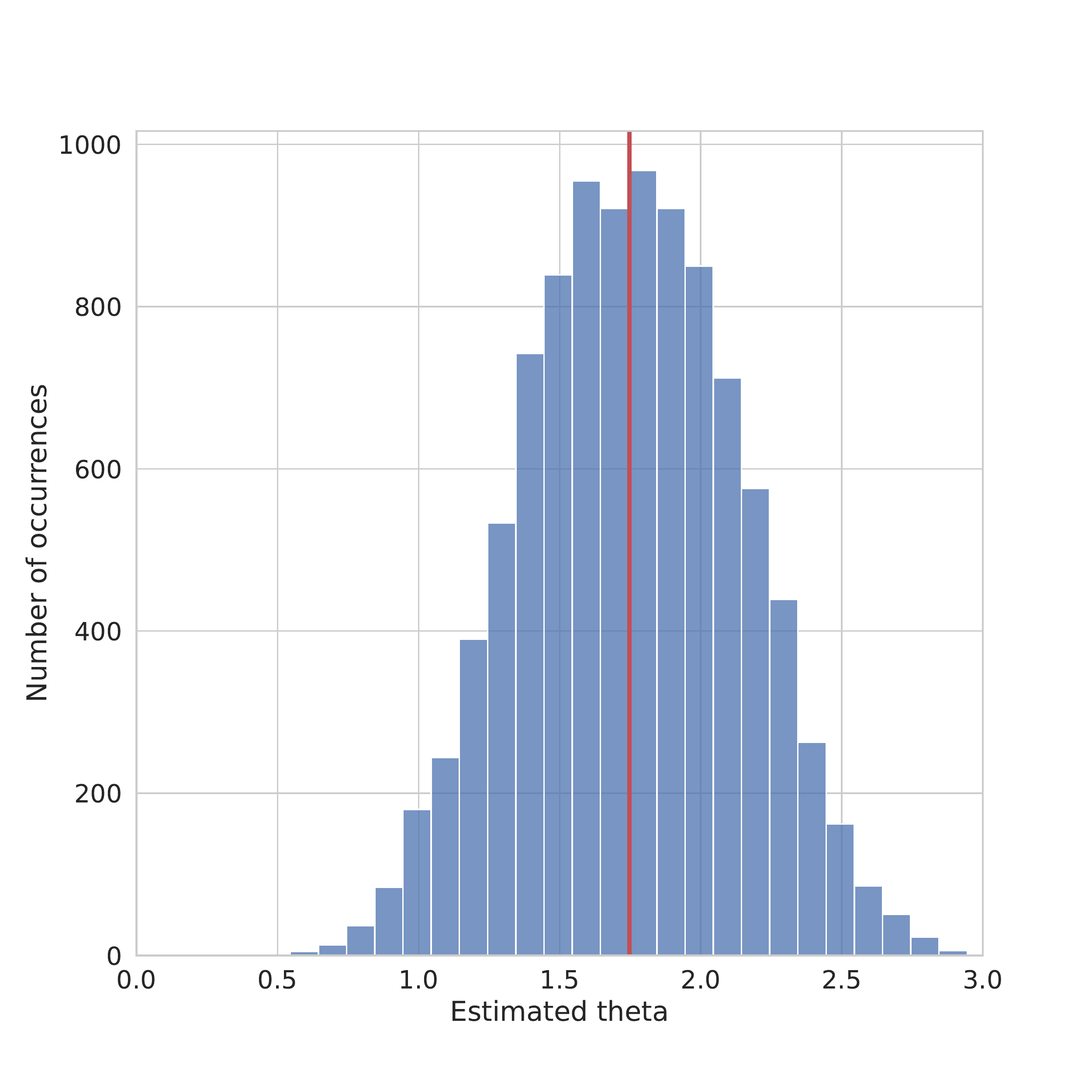}}}
&
\subfigure[A: $\hat{\theta}^{\mathrm{trim}}$ of Pairs ]{\label{fig:A_hetero_with_trim}{\includegraphics[width=0.24\textwidth]{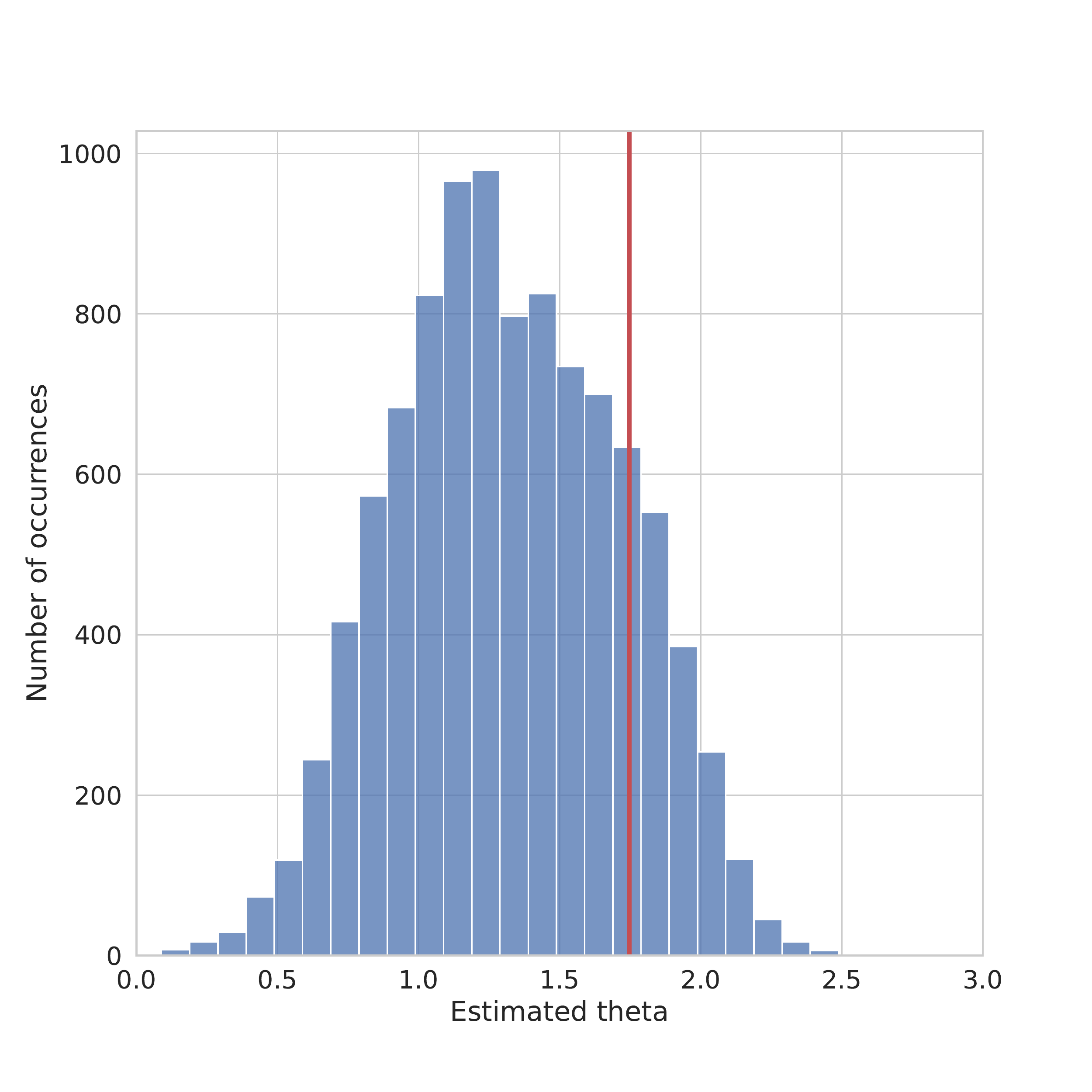}}}
\subfigure[A: $\hat{\theta}^{\mathrm{trim}}$ of Supergeo]{\label{fig:A_hetero_supergeo_trim}{\includegraphics[width=0.24\textwidth]{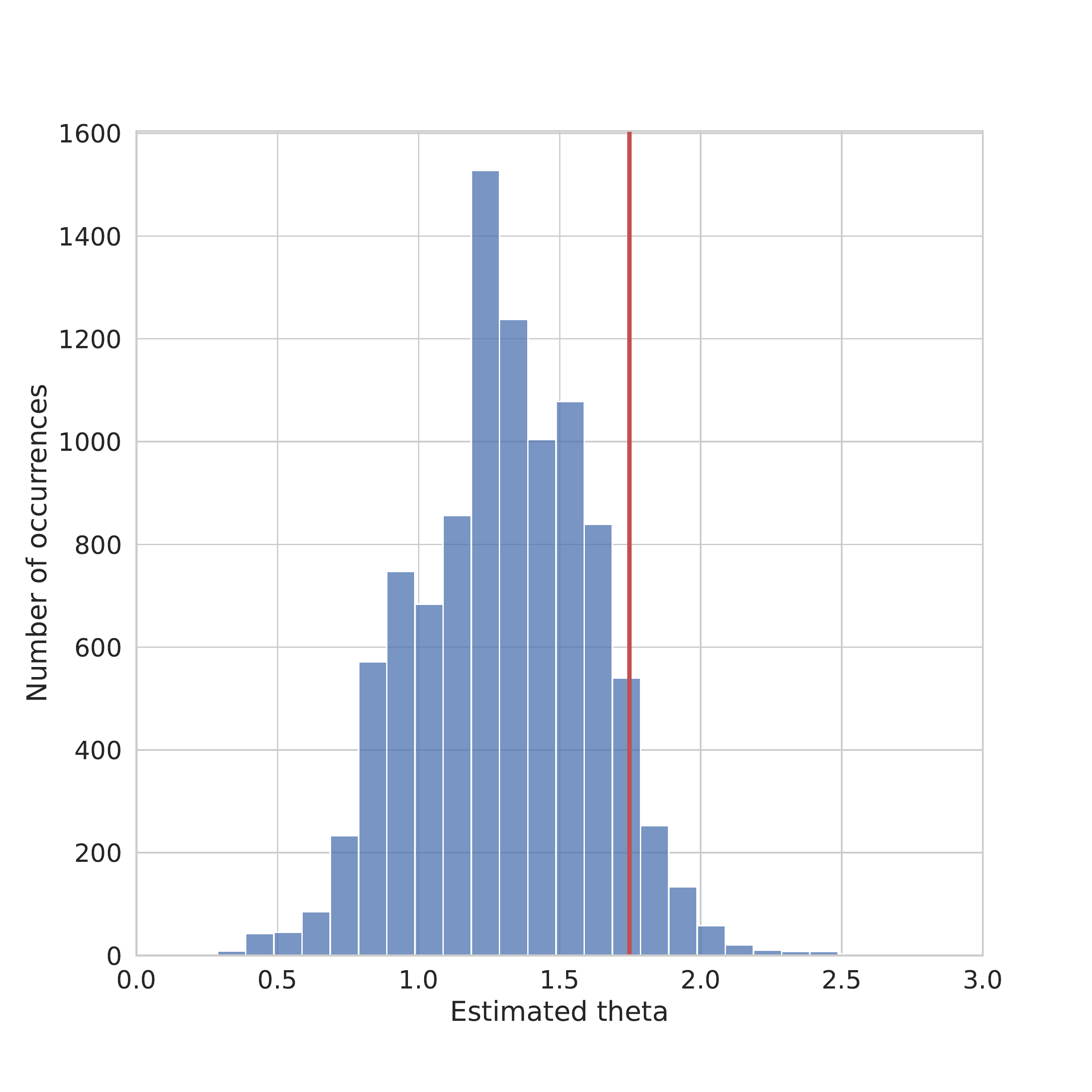}}}
\\ \hline
\subfigure[B: $\hat{\theta}$ of Pairs]{\label{fig:B_hetero_no_trim}{\includegraphics[width=0.24\textwidth]{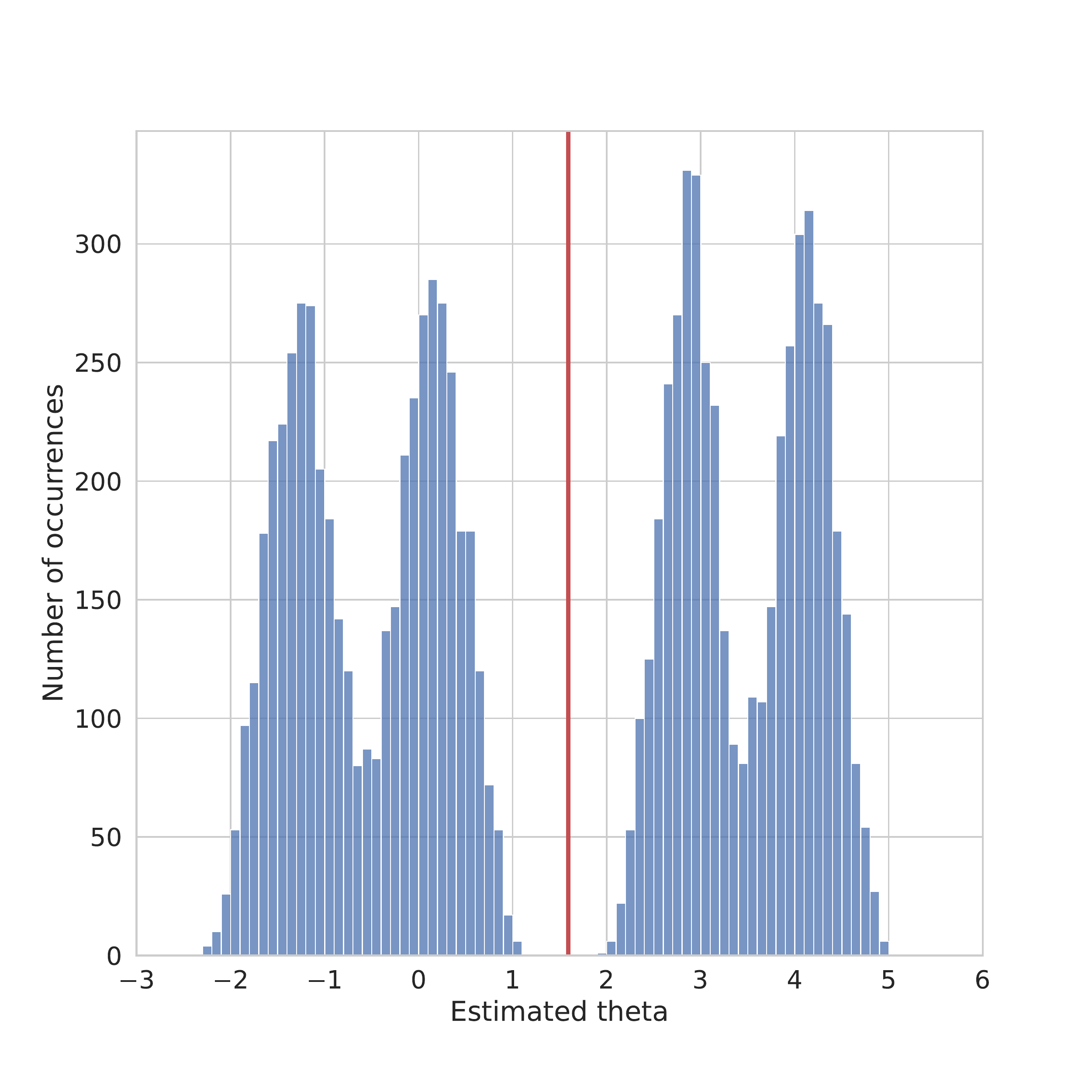}}}
\subfigure[B: $\hat{\theta}$ of Supergeo]{\label{fig:B_hetero_supergeo}{\includegraphics[width=0.24\textwidth]{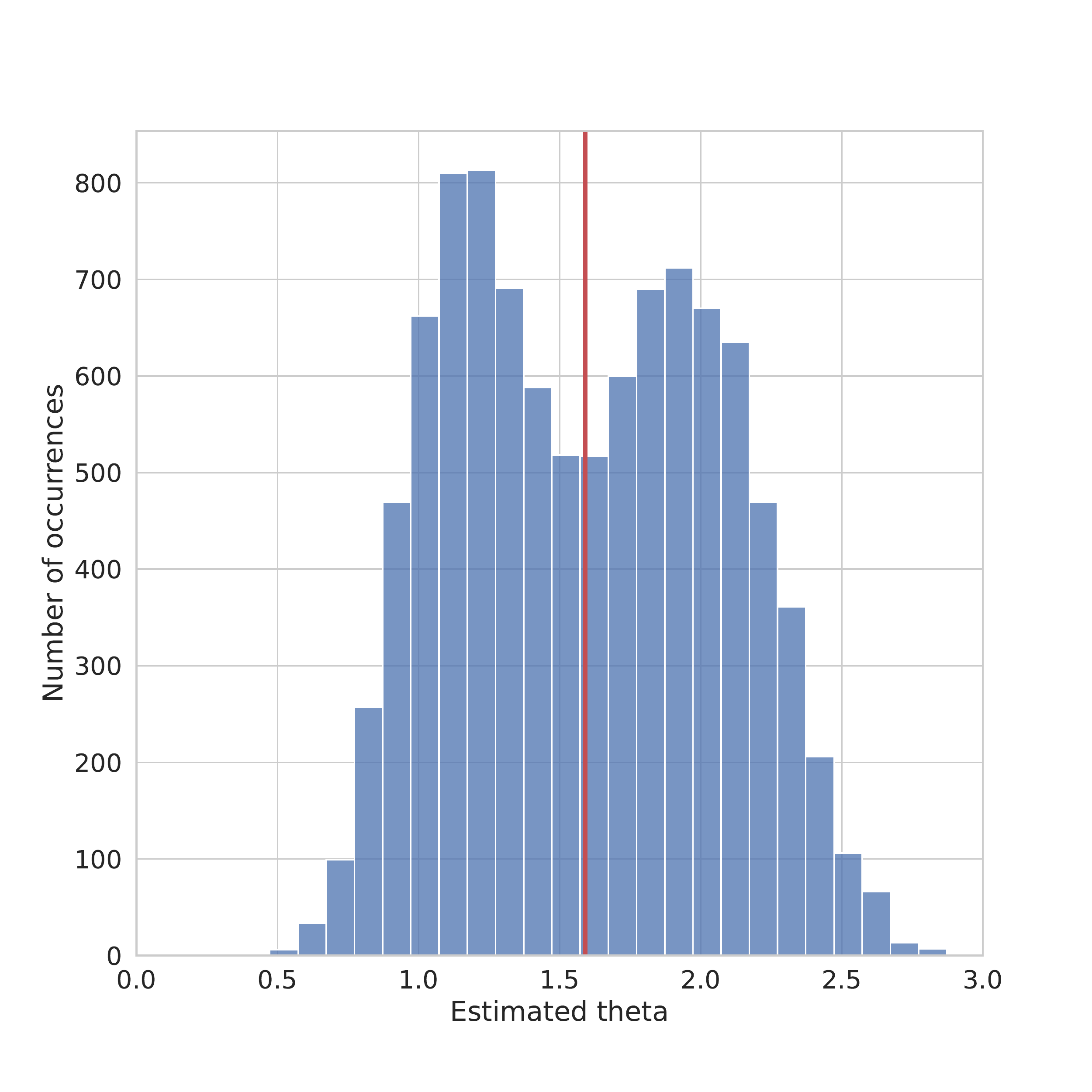}}}
&
\subfigure[B: $\hat{\theta}^{\mathrm{trim}}$ of Pairs ]{\label{fig:B_hetero_with_trim}{\includegraphics[width=0.24\textwidth]{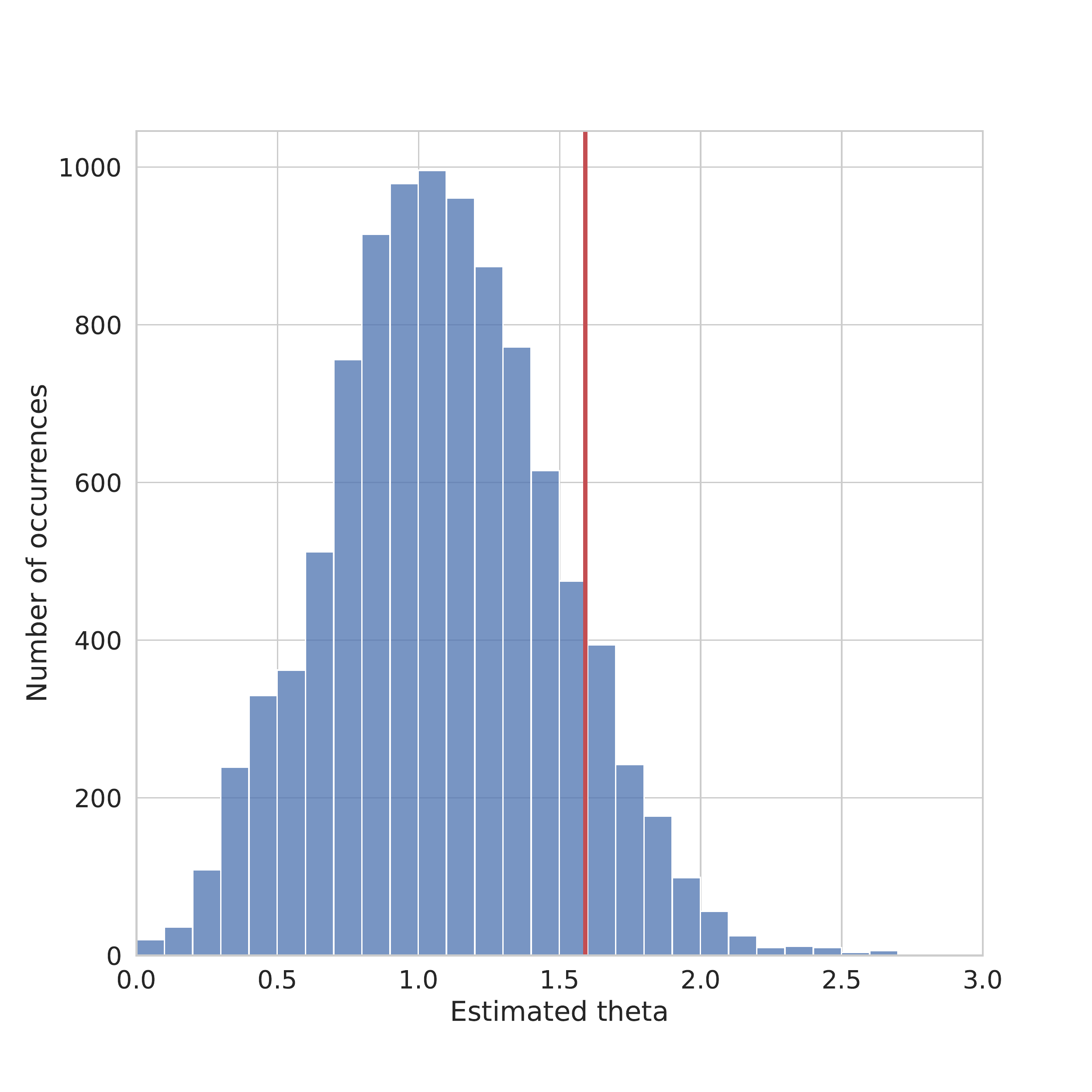}}}
\subfigure[B: $\hat{\theta}^{\mathrm{trim}}$ of Supergeo]{\label{fig:B_hetero_supergeo_trim}{\includegraphics[width=0.24\textwidth]{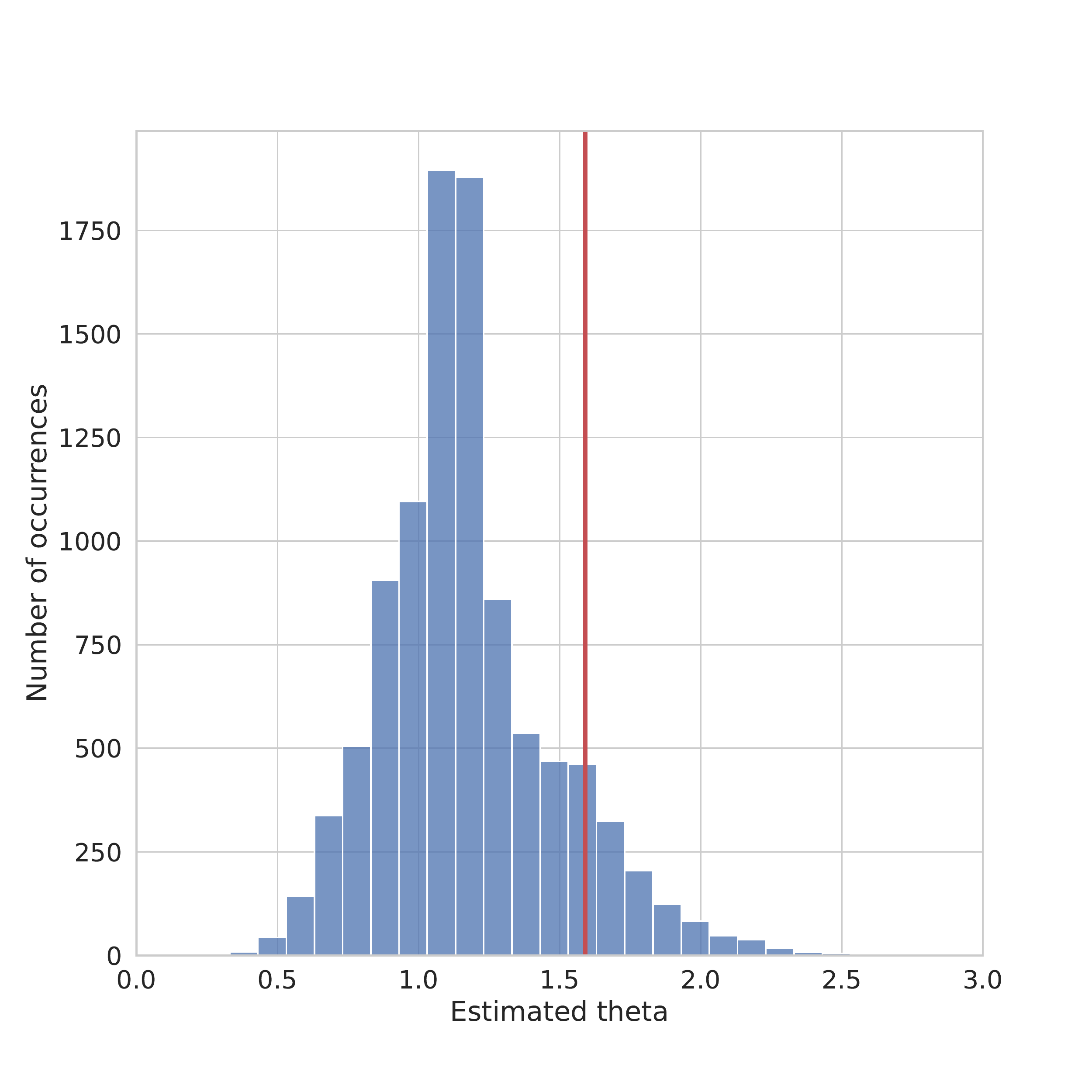}}}
\end{tabular}

\medskip
\centering
{\small\textit{Note}:  The figures are shown in the same format as Figure~\ref{fig:homogeneous}.}
\caption{The histograms of the estimates under heterogeneous iROAS, where the heterogeneous iROAS has an additive term that is proportional to the geo size.}
\label{fig:heterogeneous}
\end{figure*}

\clearpage
\newpage

\section{Additional figures}\label{sec:other_figs}
See Figures~\ref{fig:design}, \ref{fig:homogeneous}, and~\ref{fig:heterogeneous} for additional empirical results referred to in the paper.

\section{Additional evaluations}\label{sec:other_hetero}
In this section we report additional results based on an alternative model for heterogeneous $\theta_g$. We set $\theta_g = 1 + u_g$, where $u_g$ is an independent random value generated from the uniform distribution over $[-0.25, 0.25]$. The results are shown in Table~\ref{tab:heterogeneous_unif} and Figure~\ref{fig:heterogeneous_unif}.

\begin{table}[!h]
\centering
\begin{tabular}{l|l|cc|cc}
    \toprule
    \multirow{2}{*}{Est.} & \multirow{2}{*}{Design} & \multicolumn{2}{c|}{Dataset A} & \multicolumn{2}{c}{Dataset B} \\
    \cmidrule(r){3-6}
    & & RMSE & Bias & RMSE & Bias \\
    \midrule
    \multirow{2}{*}{$\hat{\theta}$} & Pairs & $0.96$ & $0.032$ & $2.04$ & $0.119$ \\
     & Supergeo & $0.41$ & $0.001$ & $0.29$ & $0.006$ \\
    \midrule
    \multirow{2}{*}{$\hat{\theta}^{\mathrm{trim}}$} & Pairs & $0.32$ & $0.002$ & $0.43$ & $0.057$ \\
     & Supergeo & $0.32$ & $0.012$ & $0.41$ & $0.025$ \\
    \bottomrule
  \end{tabular}
  
\medskip
\centering
{\small\textit{Note}: The table is shown in the same format as Table~\ref{tab:homogeneous}.}
\caption{Empirical results under heterogeneous iROAS, where $\theta_g$ is 1 plus a uniformly random noise.}
\label{tab:heterogeneous_unif}
\end{table}

\begin{figure*}[!ht]
\centering
\begin{tabular}{@{} c|c @{}}
\subfigure[A: $\hat{\theta}$ of Pairs]{\label{fig:A_hetero_unif_no_trim}{\includegraphics[width=0.24\textwidth]{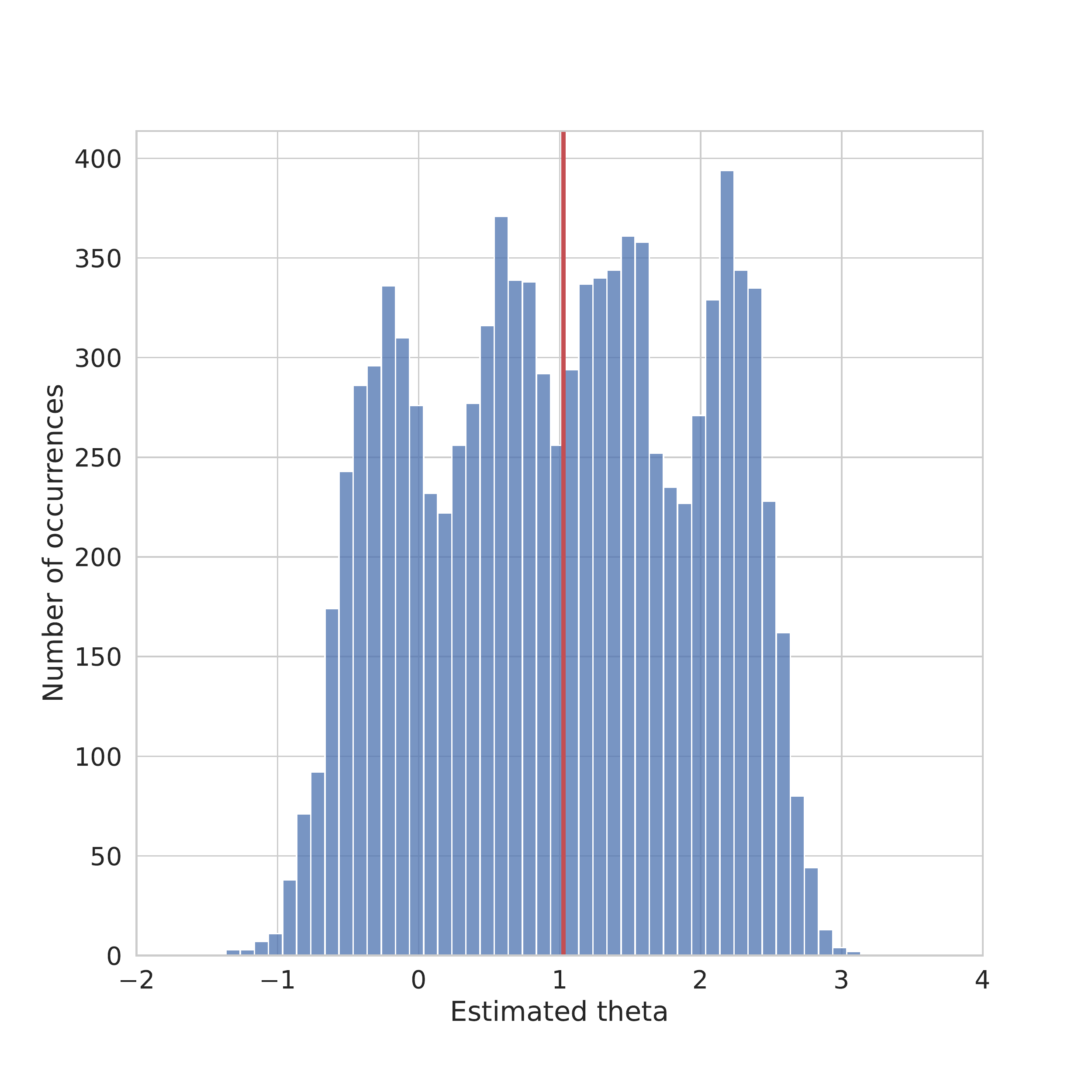}}}
\subfigure[A: $\hat{\theta}$ of Supergeo]{\label{fig:A_hetero_unif_supergeo}{\includegraphics[width=0.24\textwidth]{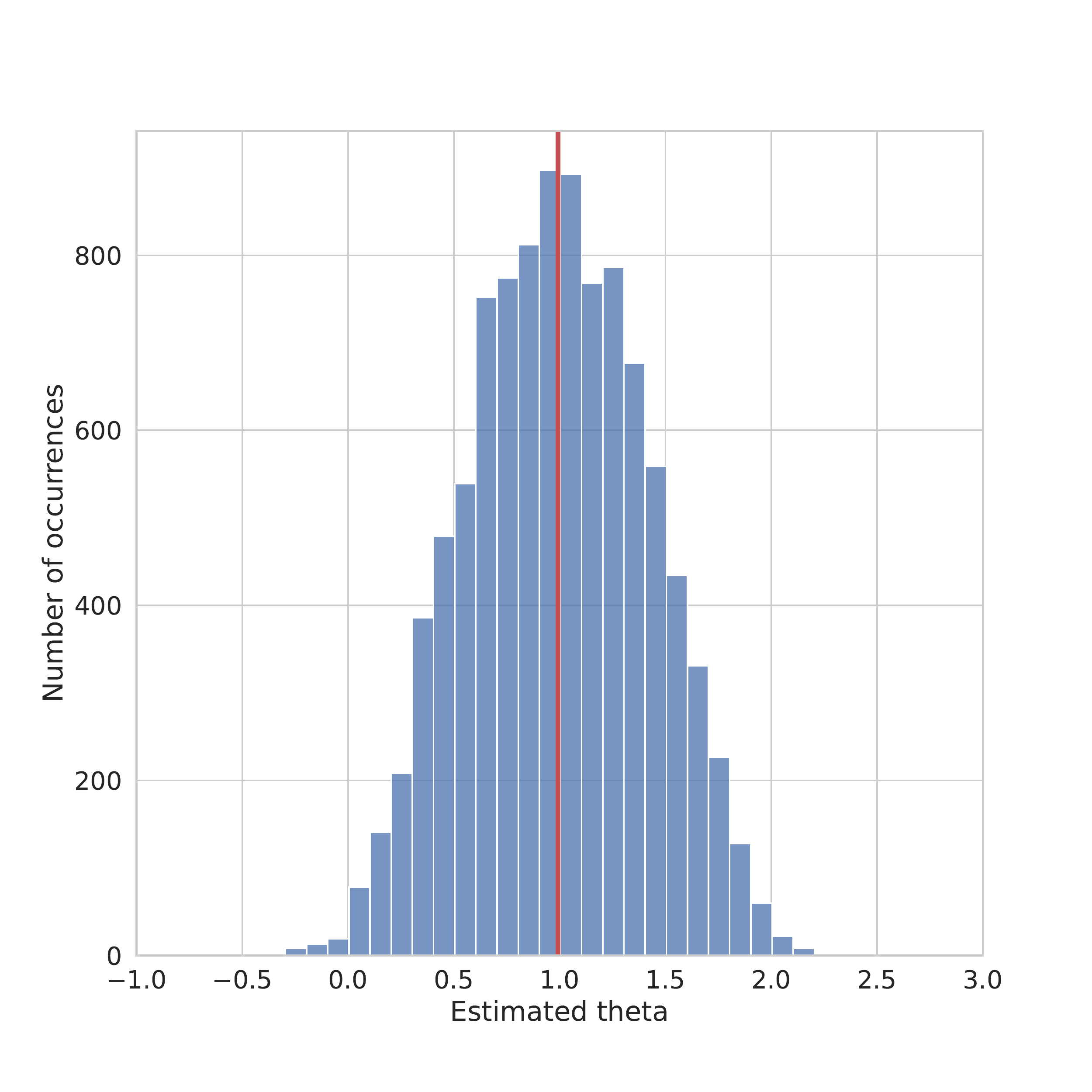}}}
&
\subfigure[A: $\hat{\theta}^{\mathrm{trim}}$ of Pairs ]{\label{fig:A_hetero_unif_with_trim}{\includegraphics[width=0.24\textwidth]{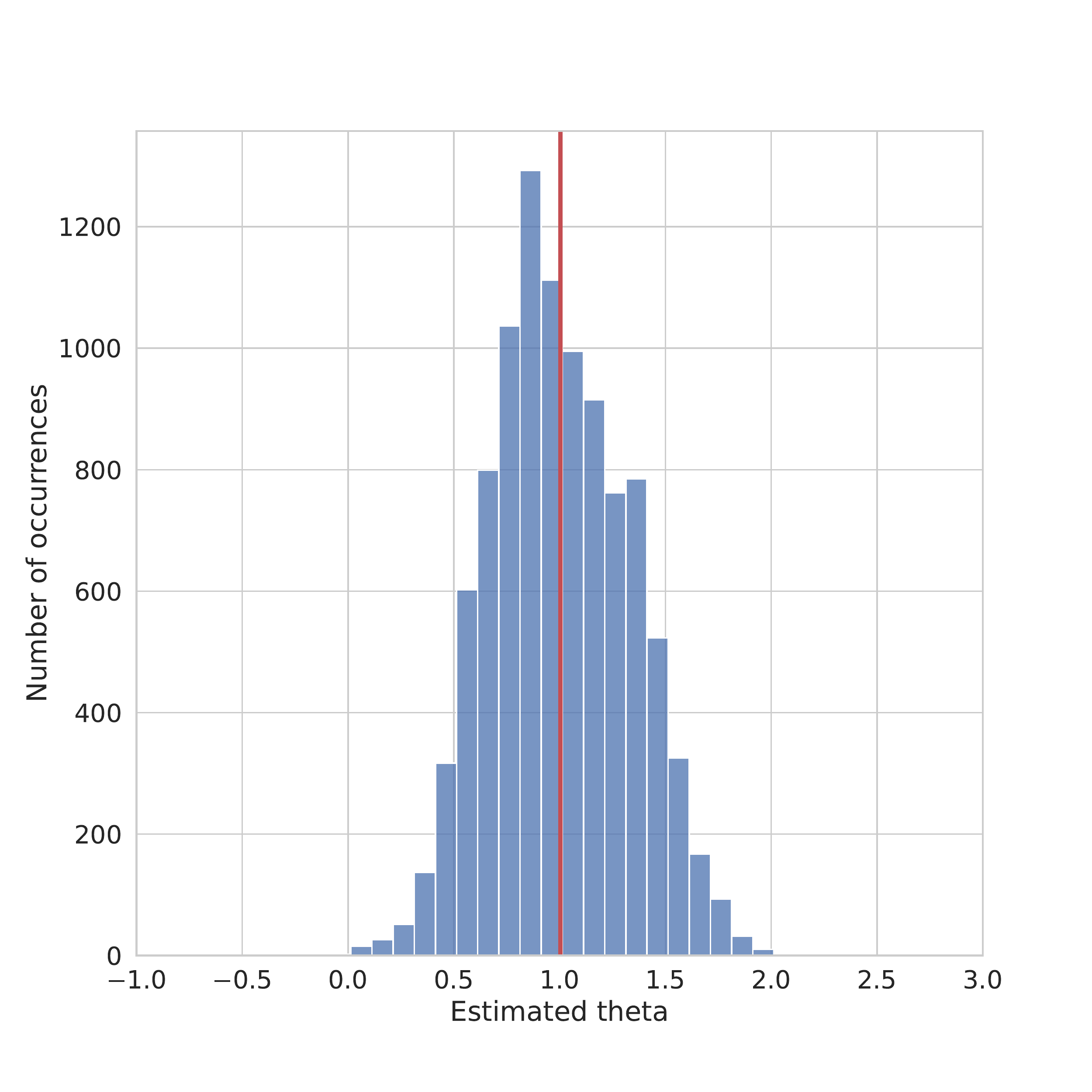}}}
\subfigure[A: $\hat{\theta}^{\mathrm{trim}}$ of Supergeo]{\label{fig:A_hetero_unif_supergeo_trim}{\includegraphics[width=0.24\textwidth]{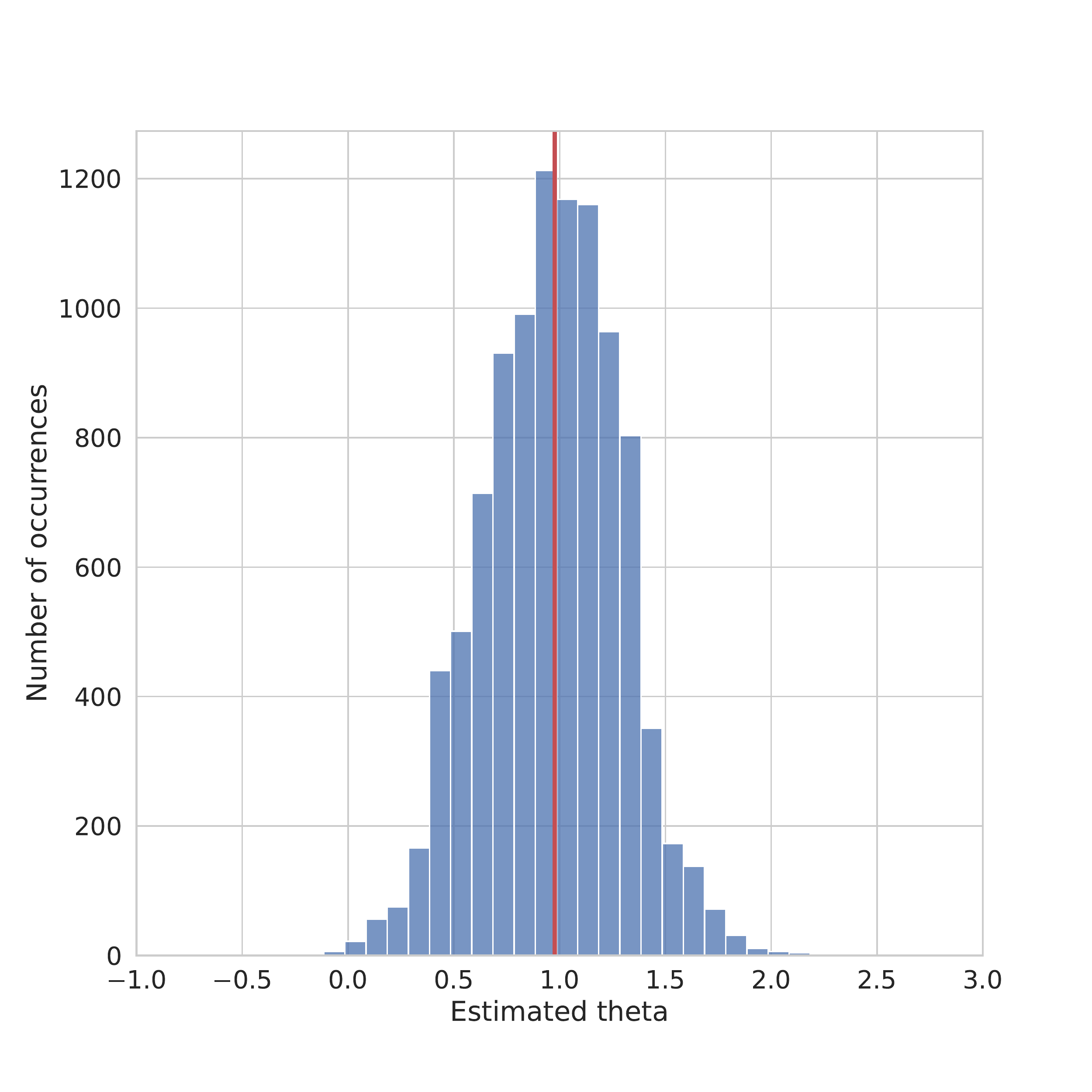}}}
\\ \hline
\subfigure[B: $\hat{\theta}$ of Pairs]{\label{fig:B_hetero_unif_no_trim}{\includegraphics[width=0.24\textwidth]{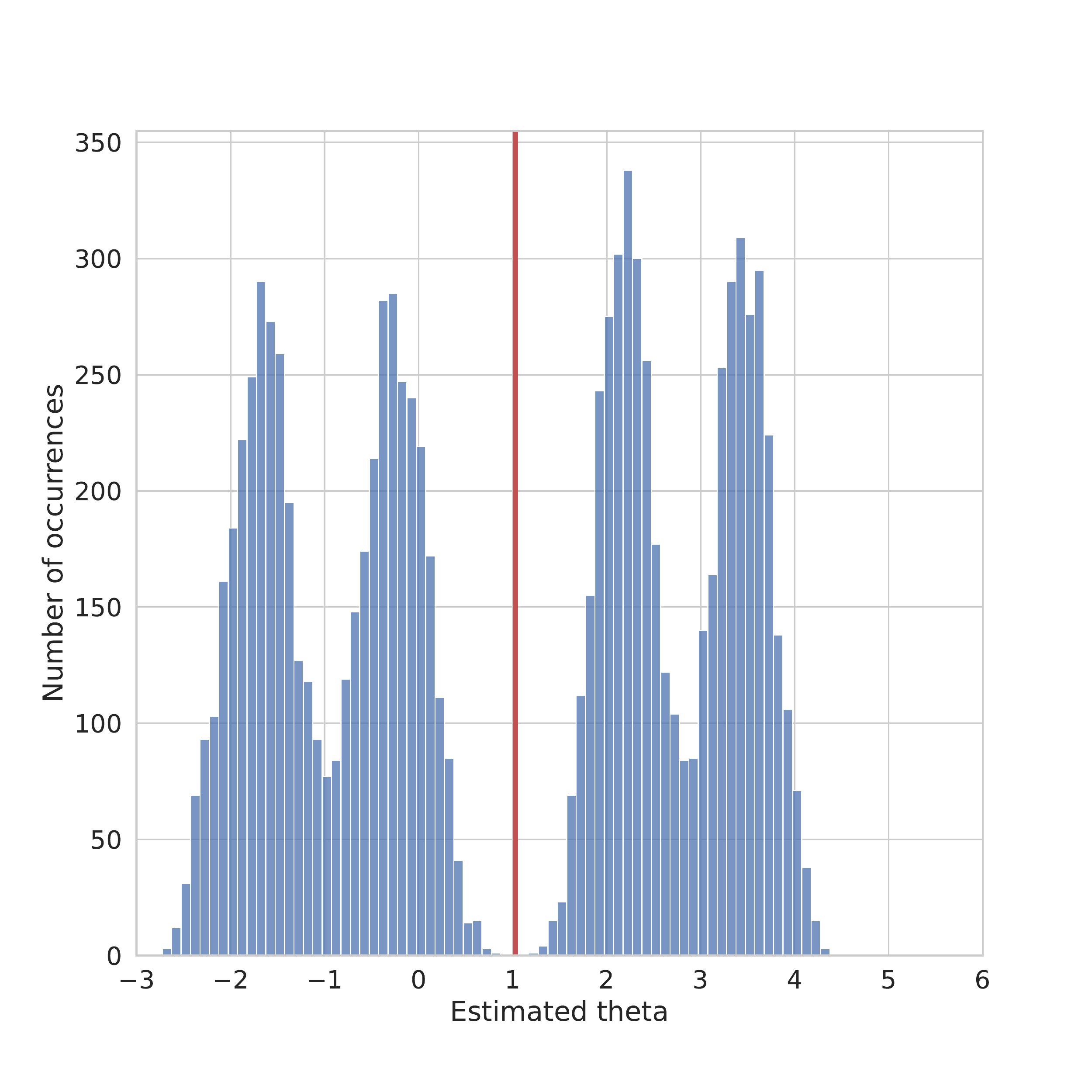}}}
\subfigure[B: $\hat{\theta}$ of Supergeo]{\label{fig:B_hetero_unif_supergeo}{\includegraphics[width=0.24\textwidth]{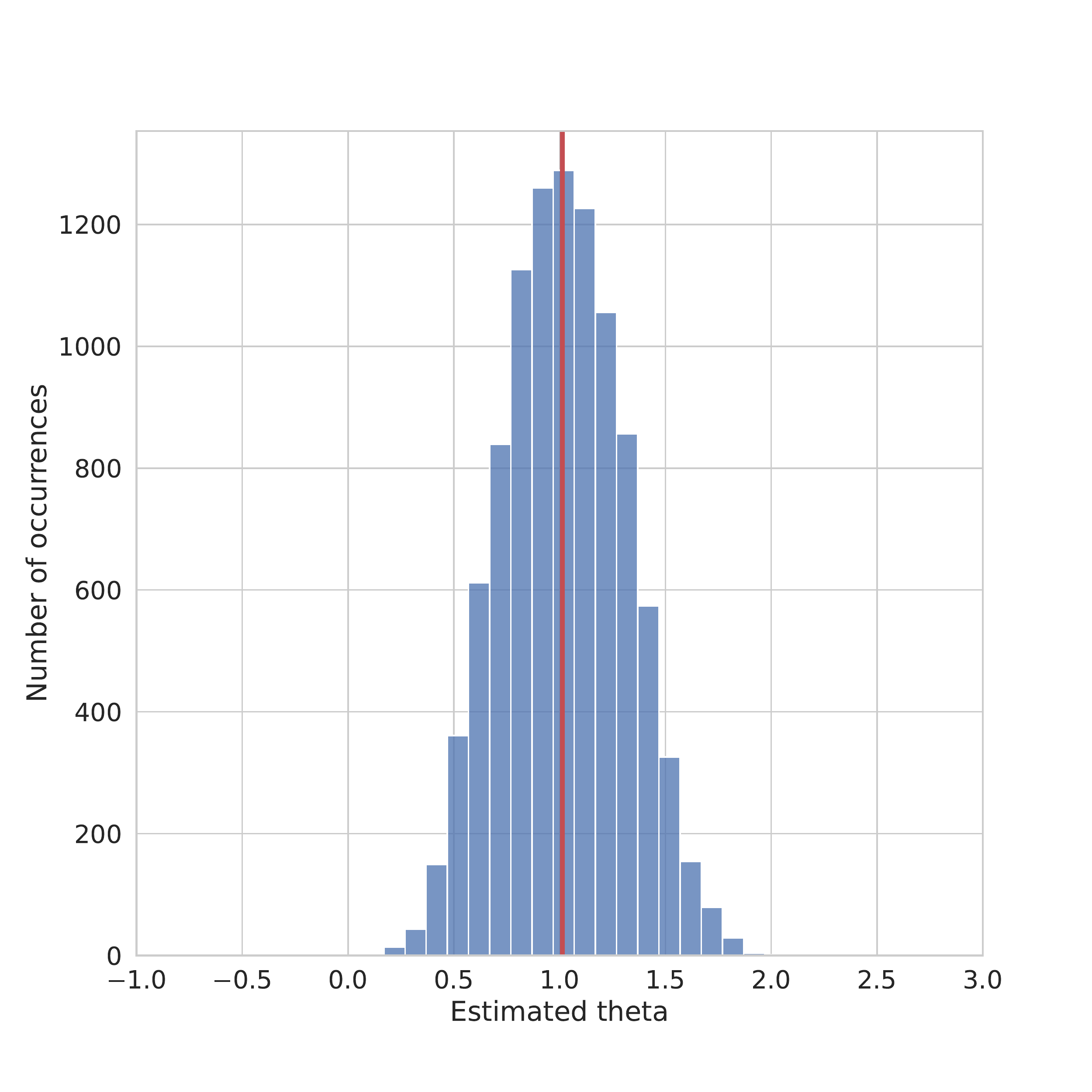}}}
&
\subfigure[B: $\hat{\theta}^{\mathrm{trim}}$ of Pairs ]{\label{fig:B_hetero_unif_with_trim}{\includegraphics[width=0.24\textwidth]{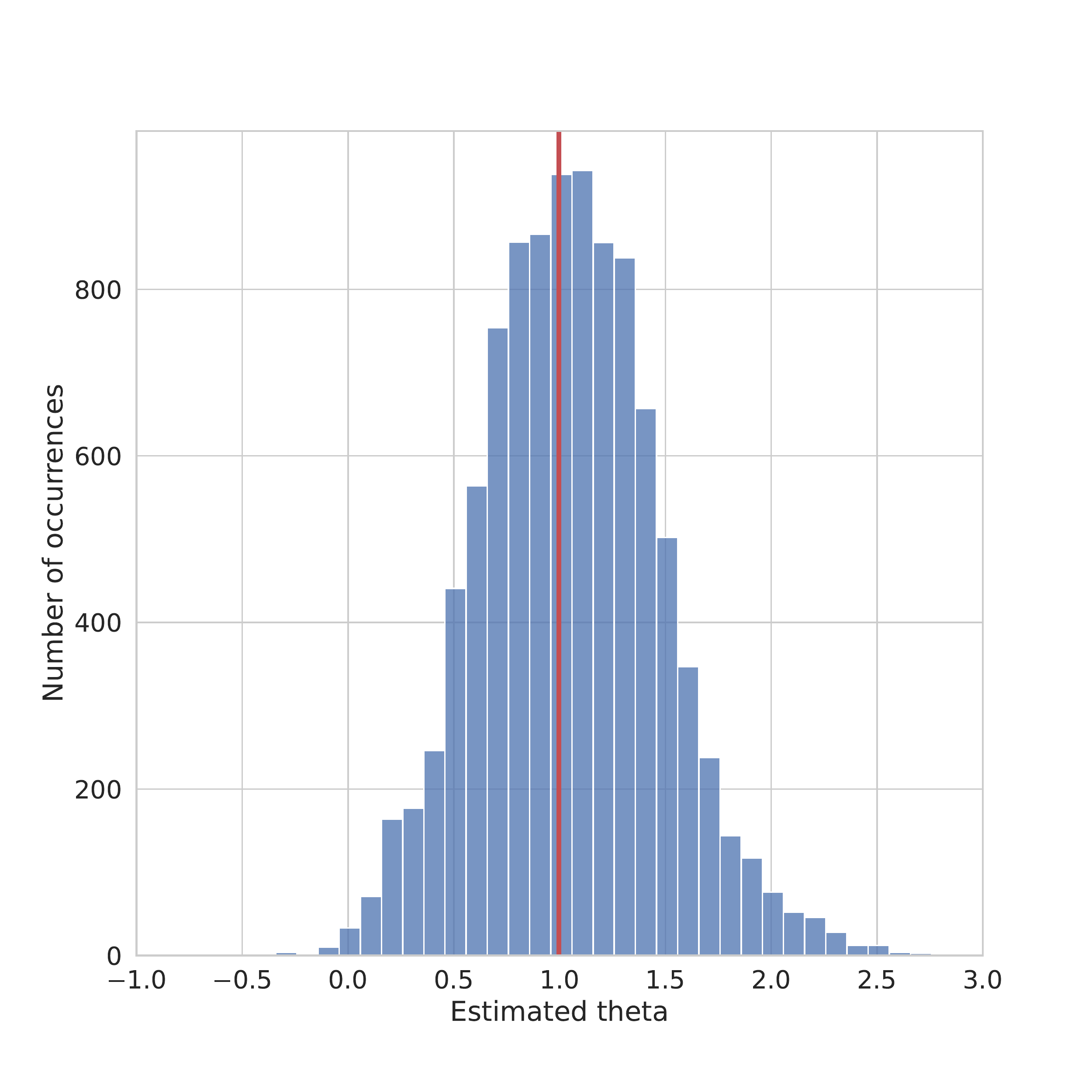}}}
\subfigure[B: $\hat{\theta}^{\mathrm{trim}}$ of Supergeo]{\label{fig:B_hetero_unif_supergeo_trim}{\includegraphics[width=0.24\textwidth]{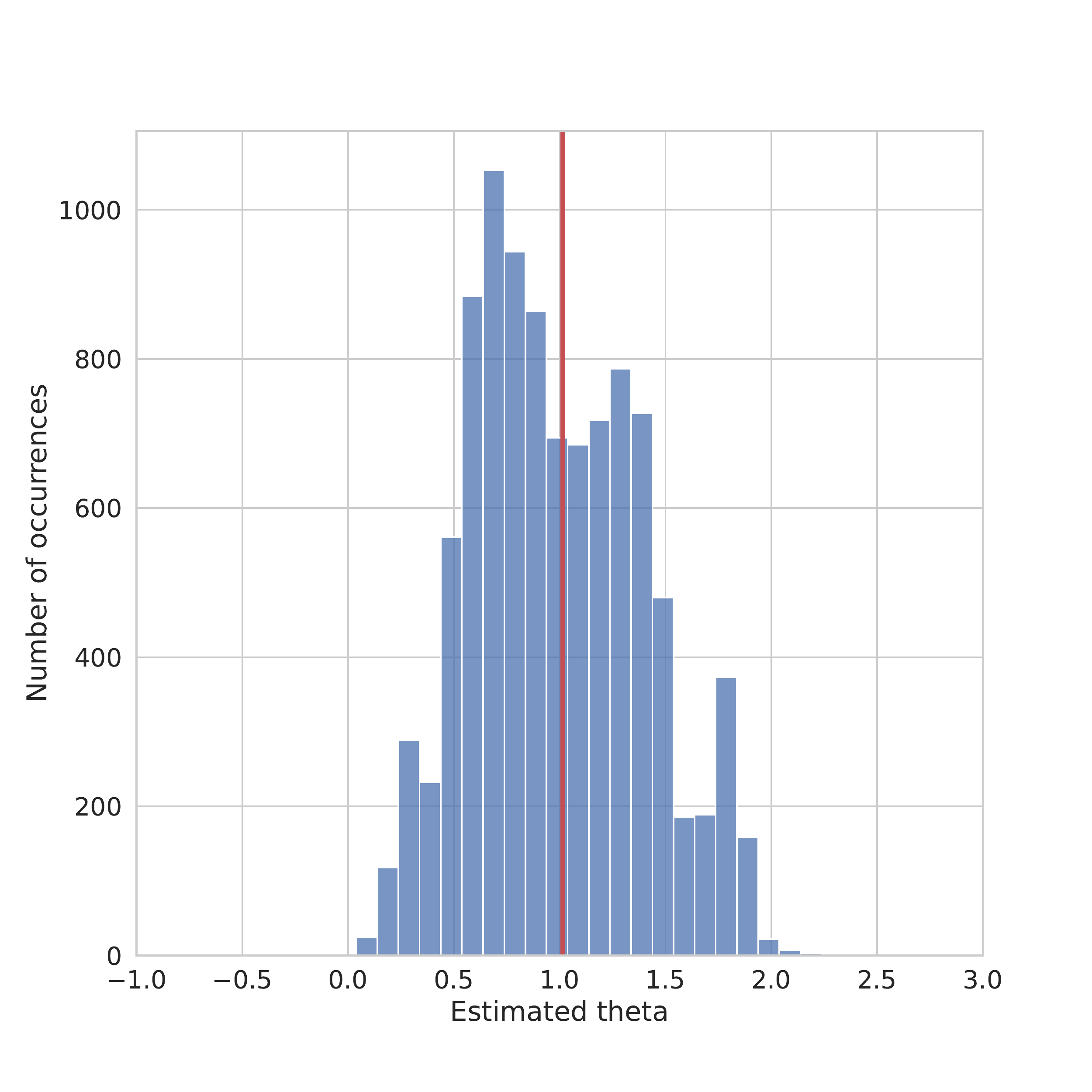}}}
\end{tabular}

\medskip
\centering
{\small\textit{Note}: The figures are shown in the same format as Figure~\ref{fig:homogeneous}.}
\caption{The histogram of the estimates under heterogeneous iROAS, where the heterogeneous iROAS is 1 plus a uniformly random noise.}
\label{fig:heterogeneous_unif}
\end{figure*}

\section{Choosing supergeo sizes}\label{sec:size}
In this section we present additional theoretical results that speak to the robustness properties of randomized designs. In the spirit of \citet{banerjee2020theory}, we want to capture a scenario in which an approach to designing an experiment is presented to a potentially adversarial audience. If the design is ``over-optimized''---for example, when there are two large supergeos and only two possible treatment assignments---the adversary can easily come up with a model that compromises the experimental results. For instance, they can suggest the presence of an unobserved confounder that differs between the two supergeos and that is correlated with the outcome variable. When the treatment assignment is realized as a result of many independent random draws---as is the case with many smaller supergeo pairs---the job of the adversary becomes substantially more complicated.

We also present additional empirical results that illustrate the benefits of using the maximum allowed supergeo pair size within the range of $\{3,4,5\}$.


\paragraph{Model.}
We follow the notation from Section~\ref{sec:homo_model} and maintain the fixed budget and the homogeneous $\theta$ assumptions. We modify the linear outcome model in Assumption~\ref{ass:linear} to include an adversarial noise $\delta_g$:
\[
R_g = \theta \cdot S_g + Z_g + \delta_g.
\]
\begin{itemize}
    \item The adversary picks the values of per-geo noises $\delta_g$ after observing the allocation of geos to supergeo pairs $\{(G_{k,+}, G_{k,-})\}_{k=1}^K$, but \emph{before} the realization of a particular random treatment assignment.
    \item The noise $\delta_g$ is bounded by $|\delta_g| \leq \eta \cdot Z_g$ for some constant $\eta$.
\end{itemize}

\paragraph{Expectation.}
Following an argument similar to that of Section~\ref{sec:homo_model} and defining $\delta_G := \sum_{g \in G} \delta_g$ for any $G \subseteq \mathcal{G}$, we have
\begin{align*}
    \hat{\theta} = &~ \frac{\sum_{g \in \mathcal{T}} R_g - \sum_{g' \in \mathcal{C}} R_{g'}}{\sum_{g \in \mathcal{T}} S_g - \sum_{g' \in \mathcal{C}} S_{g'}} \notag \\
    = &~ \theta + \frac{\sum_{g \in \mathcal{T}} (Z_g + \delta_g) - \sum_{g' \in \mathcal{C}} (Z_{g'} + \delta_{g'})}{\sum_{g \in \mathcal{T}} S_g - \sum_{g' \in \mathcal{C}} S_{g'}} \notag \\
    = &~ \theta + \frac{1}{B} \cdot \sum_{k} A_{k} \cdot (Z_{G_{k,+}} + \delta_{G_{k,+}} - Z_{G_{k,-}} - \delta_{G_{k,-}}).
\end{align*}
Since the random draws $A_{k}$'s are independent of the noises, $\delta_g$'s, the empirical estimator, $\hat{\theta}$, is still unbiased regardless of the values of $\delta_g$'s.

\paragraph{Variance.}
For any supergeo design $\{(G_{k,+}, G_{k,-})\}_{k=1}^K$, estimator $\hat{\theta}$ is a function of $\delta_g$'s and its variance is:
\begin{align*}
    var[\hat{\theta}(\{\delta_g\})] = &~ \frac{1}{B^2} \cdot \E\Big[ \Big( \sum_{k} A_{k} \cdot (Z_{G_{k,+}} + \delta_{G_{k,+}} - Z_{G_{k,-}} - \delta_{G_{k,-}}) \Big)^2 \Big] \\
    = &~ \frac{1}{B^2} \cdot \sum_{k} (Z_{G_{k,+}} + \delta_{G_{k,+}} - Z_{G_{k,-}} - \delta_{G_{k,-}})^2,
\end{align*}
where the second equality follows from that the fact that random draws $A_{k}$'s are independent of the noises $\delta_g$'s as well as independent from each other.

Let's compute the largest possible variance that can be achieved by the adversary picking the values of $\delta_g$'s. Consider any supergeo pair $(G_{k,+}, {G_{k,-}})$, and assume w.l.o.g.~that $Z_{G_{k,+}} \geq Z_{G_{k,-}}$. The largest value of the variance is realized when the adversary picks $\delta_{G_{k,+}} = \eta \cdot Z_{G_{k,+}}$ and $\delta_{G_{k,-}} = - \eta \cdot Z_{G_{k,-}}$:
\begin{align}\label{eq:variance_adversarial}
    \max_{\{\delta_g\}}\ var[\hat{\theta}(\{\delta_g\})] = &~ \max_{\{\delta_g\}}\ \frac{1}{B^2} \cdot \sum_{k} (Z_{G_{k,+}} + \delta_{G_{k,+}} - Z_{G_{k,-}} - \delta_{G_{k,-}})^2 \notag \\
    = &~ \frac{1}{B^2} \cdot \sum_{k} ( |Z_{G_{k,+}} - Z_{G_{k,-}}| + \eta \cdot Z_{G_{k,+}} + \eta \cdot Z_{G_{k,-}})^2 \notag \\ 
    = &~ \frac{1}{B^2} \cdot \sum_{k} \Big( (Z_{G_{k,+}} - Z_{G_{k,-}})^2 + 2 \eta \cdot |Z_{G_{k,+}}^2 - Z_{G_{k,-}}^2| + \eta^2 \cdot (Z_{G_{k,+}} + Z_{G_{k,-}})^2 \Big).
\end{align}
Importantly, the third term, $\eta^2 \cdot (Z_{G_{k,+}} + Z_{G_{k,-}})^2$, penalizes supergeo pairs that are too large. This motivates the decision to construct many smaller supergeo pairs.

\subsection{Empirical results}\label{sec:appendix-B-eval}
In this section we compare supergeo designs with different maximum allowed sizes of supergeo pairs. As discussed in Section~\ref{sec:supergeo_algo}, finding the optimal supergeo design can be computationally expensive especially when the subset size is large. To mitigate this issue, we work with a synthetic dataset consisting of 40 geos. We generate the synthetic data in the same way as in Section~5.1 of \cite{chen2021trimmed}.  

\paragraph{Pretest loss vs.~test loss.}
Figure~\ref{fig:sync_40_loss} shows the loss from Eq.~\eqref{eq:loss} as a function of the maximum allowed size of a supergeo pair which varies from 2 to 8. Recall that the loss is proportional to the variance of the estimated $\hat{\theta}$. When computing the loss, we approximate $Z_g$ either by the pretest response $R_g^{\mathrm{pre}}$ or the test response $R_g^{\mathrm{test}}$. Even though a larger subset size is effective at decreasing the pretest loss (i.e., the training loss), the test loss reaches the minimum when the subset size is 4 and remains roughly the same when we increase it further. This confirms our intuition that a large subset size leads to overfitting to the pretest data.
\begin{figure}[!h]
\centering
\includegraphics[width=0.5\textwidth]{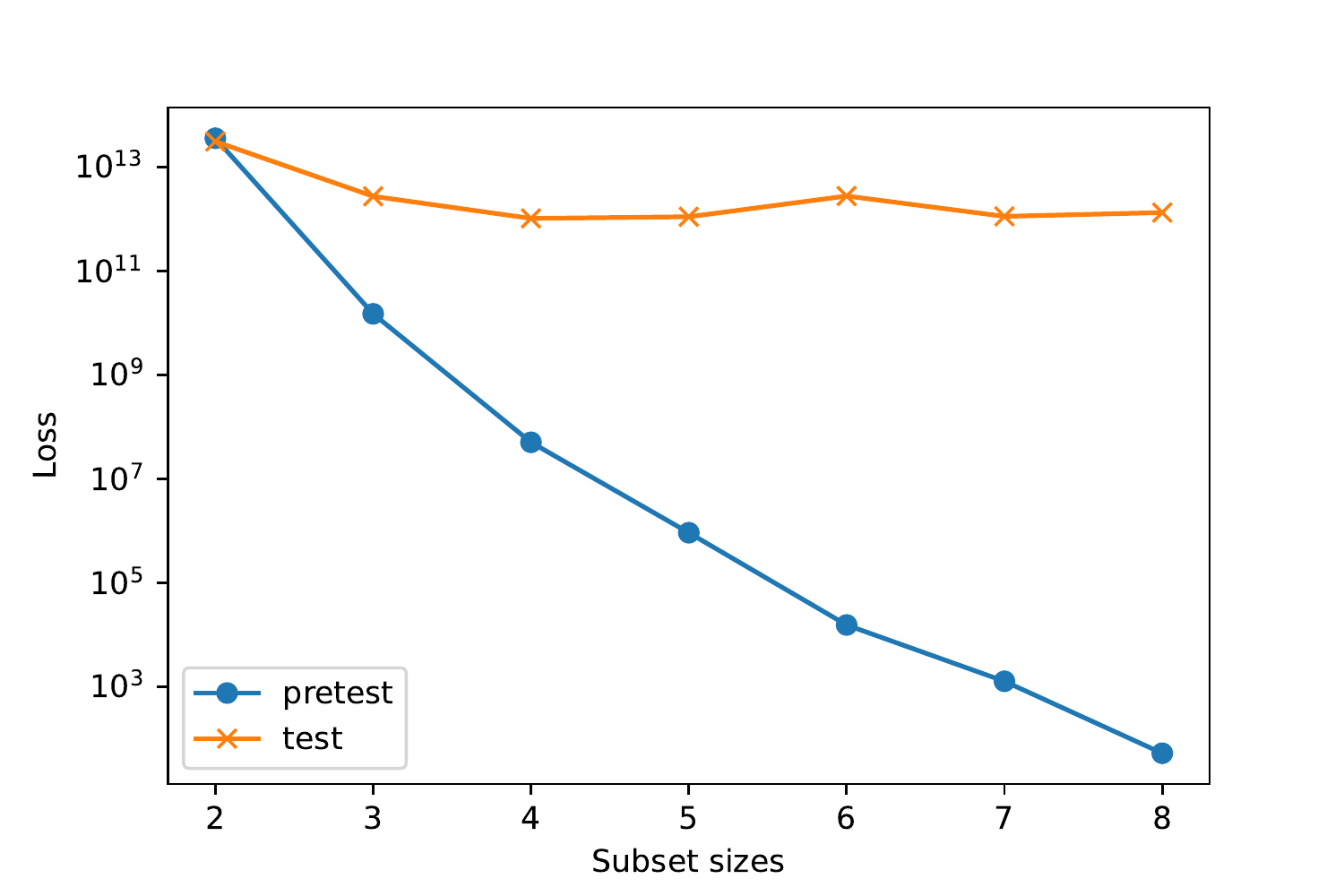}

\medskip
\raggedright
{\small\textit{Note}: Reporting the loss on the pretest data, $\sum_{k=1}^K (R_{G_{k,+}}^{\mathrm{pre}} - R_{G_{k,-}}^{\mathrm{pre}})^2$, and the loss on the test data $\sum_{k=1}^K (R_{G_{k,+}}^{\mathrm{test}} - R_{G_{k,-}}^{\mathrm{test}})^2$, as the subset size varies from 2 to 8. The loss is shown using a log scale.}
\caption{Loss on the pretest and test data.}
\label{fig:sync_40_loss}
\end{figure}

\paragraph{No adversarial noise.} We repeat the evaluations from Section~\ref{sec:evaluation_homogeneous} with $\theta_g = 1$ for all $g$ on the synthetic dataset with 40 geos. The results are shown in Table~\ref{tab:sync_40_geo}. We observe that the RMSE reaches the minimum when the subset size is 4 or 3 (depending on the estimator used) and does not decrease further when the subset size grows. 
\begin{table}[!h]
\centering
\begin{tabular}{l|c|cc|cc}
    \toprule
    & \multirow{2}{*}{\# pairs} & \multicolumn{2}{c|}{Empirical ($\hat{\theta}$)} & \multicolumn{2}{c}{Trimmed ($\hat{\theta}^{\mathrm{trim}}$)} \\
    \cmidrule(r){3-6}
    & & RMSE & Cov. & RMSE & Cov. \\
    \midrule
    Size 2 & 20 & $0.554$ & $76\%$ & $0.326$ & $70\%$ \\
    Size 3 & 14 & $0.163$ & $85\%$ & $\mathbf{0.059}$ & $76\%$ \\
    Size 4 & 10 & $\mathbf{0.100}$ & $76\%$ & $0.099$ & $62\%$ \\
    Size 5 & 8 & $0.104$ & $76\%$ & $0.105$ & $67\%$ \\
    Size 6 & 7 & $0.165$ & $75\%$ & $0.139$ & $63\%$ \\
    Size 7 & 6 & $0.105$ & $78\%$ & $0.114$ & $75\%$ \\
    Size 8 & 5 & $0.113$ & $69\%$ & $0.120$ & $63\%$ \\
    \bottomrule
  \end{tabular}
  
\medskip
\raggedright
{\small\textit{Note}: The coverage values (abbreviated as Cov.) are shown for nominally 80\% confidence intervals for supergeo designs with subset sizes from 2 to 8 on a synthetic dataset with 40 geos. We show the results of both the empirical estimator $\hat{\theta}$ and the trimmed match estimator $\hat{\theta}^{\mathrm{trim}}$. The confidence intervals are computed using the approximation by Student's $t$-distribution described in Section~5.2 of \citet{chen2022robust} (see Appendix~\ref{sec:inference}). We do not report the bias as it is close to zero (as is usually the case with homogeneous $\theta_g$'s).}
\caption{Root-mean-square errors (RMSE) of the estimates and the coverage.}
\label{tab:sync_40_geo}
\end{table}

\paragraph{With adversarial noise.}
We repeat the analysis with the adversarial noise added to the outcomes. Given a supergeo design $\{G_{k,+}, G_{k,-}\}_{k=1}^K$ and w.l.o.g.~assuming that $Z_{G_{k,+}} \geq Z_{G_{k,-}}$ for all $k$, we change $Z_{G_{k,+}}$ to $(1 + \eta) \cdot Z_{G_{k,+}}$ and change $Z_{G_{k,-}}$ to $(1 - \eta) \cdot Z_{G_{k,-}}$. We use $\eta = 0.05$ or $0.07$ in the evaluations.

The RMSEs of the empirical estimator are reported in Table~\ref{tab:sync_40_geo_adversarial}. We see that at first the RMSE declines as we increase the subset size which is due to the first two terms, $\sum_k (Z_{G_{k,+}} - Z_{G_{k,-}})^2 + 2 \eta \sum_k |Z_{G_{k,+}}^2 - Z_{G_{k,-}}^2|$, in the variance in Eq.~\eqref{eq:variance_adversarial}---the variance is lower when the pairs are better matched. However, as we further increase the subset size, the RMSE begins to grow due to the third term, $\eta^2 \cdot \sum_k (Z_{G_{k,+}} + Z_{G_{k,-}})^2$, which increases when the number of pairs decreases. In Figure~\ref{fig:sync_40_geo_adversarial} we include the histograms of the estimates $\hat{\theta}$ when $\eta = 0.07$.

\begin{table}[!h]
\centering
\begin{tabular}{l|c|c}
    \toprule
    & $\eta = 0.05$ & $\eta = 0.07$ \\
    \midrule
    Size 2 & $0.79$ & $0.90$ \\
    Size 3 & $0.44$ & $0.55$ \\
    Size 4 & $\mathbf{0.39}$ & $\mathbf{0.52}$ \\
    Size 5 & $0.42$ & $0.55$ \\
    Size 6 & $0.50$ & $0.63$ \\
    Size 7 & $0.47$ & $0.61$ \\
    Size 8 & $0.50$ &$0.65$ \\
    \bottomrule
  \end{tabular}
  
\medskip
\raggedright
{\small\textit{Note}: Reporting the RMSE of the empirical estimator, $\hat{\theta}$, in the adversarial perturbation model where $\eta = 0.05$ or $0.07$. We do not report the bias as it is close to zero (as is usually the case with homogeneous $\theta_g$'s).}
\caption{Root-mean-square errors (RMSE) of the empirical estimator.
}
\label{tab:sync_40_geo_adversarial}
\end{table}


\begin{figure*}[!ht]
\centering
\subfigure[Size 2]{\label{fig:size_2_adv}{\includegraphics[width=0.23\textwidth]{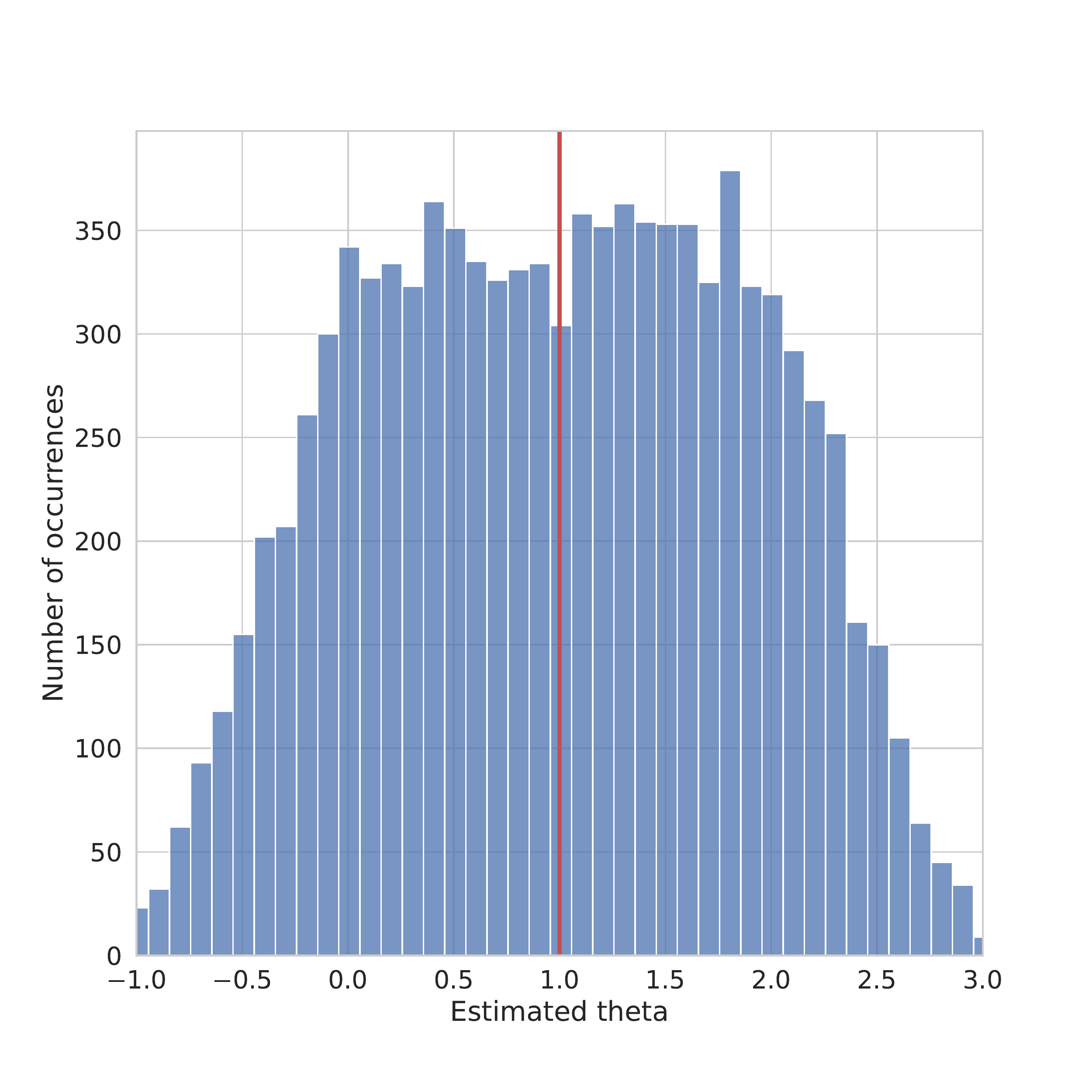}}}
\subfigure[Size 3]{\label{fig:size_3_adv}{\includegraphics[width=0.23\textwidth]{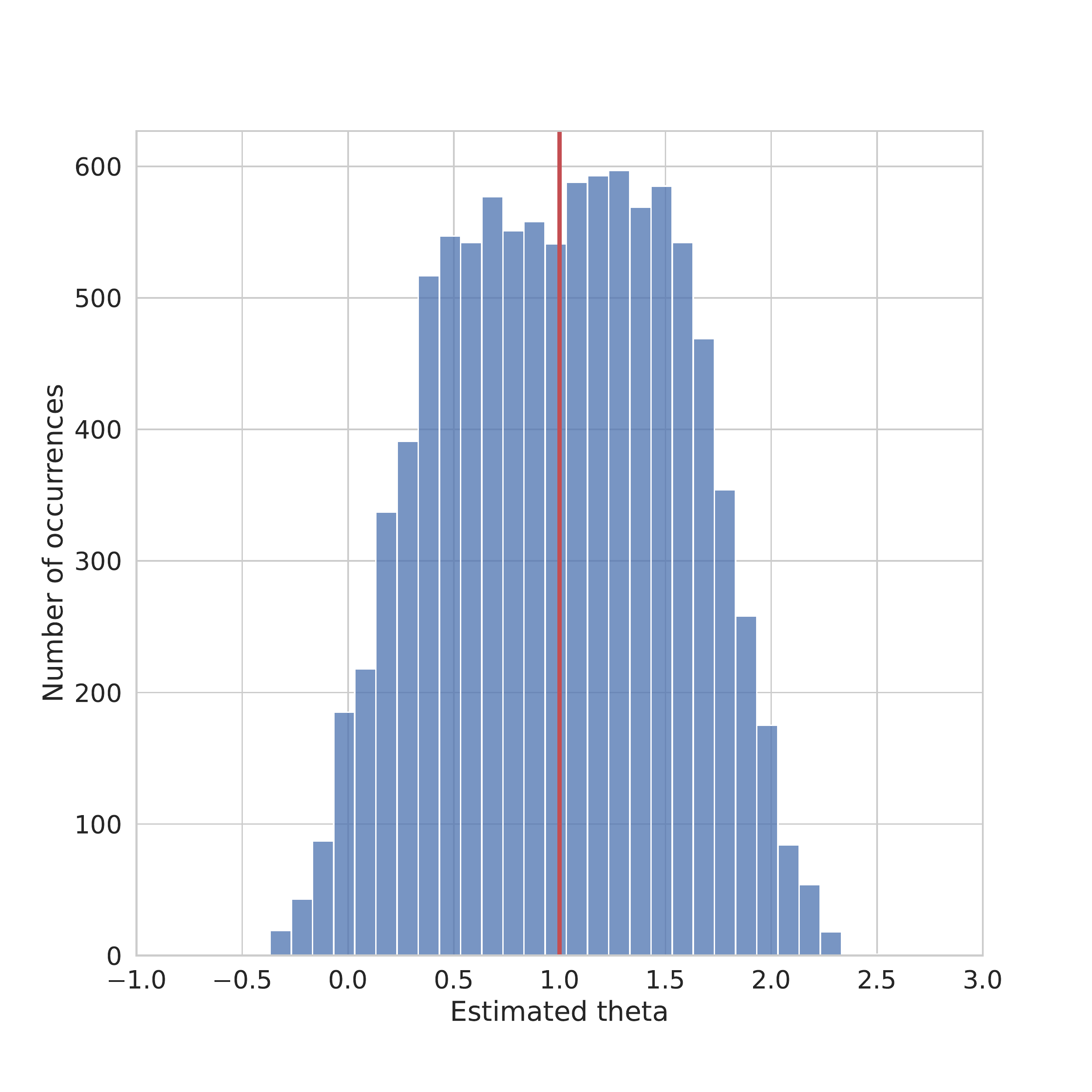}}}
\subfigure[Size 4]{\label{fig:size_4_adv}{\includegraphics[width=0.23\textwidth]{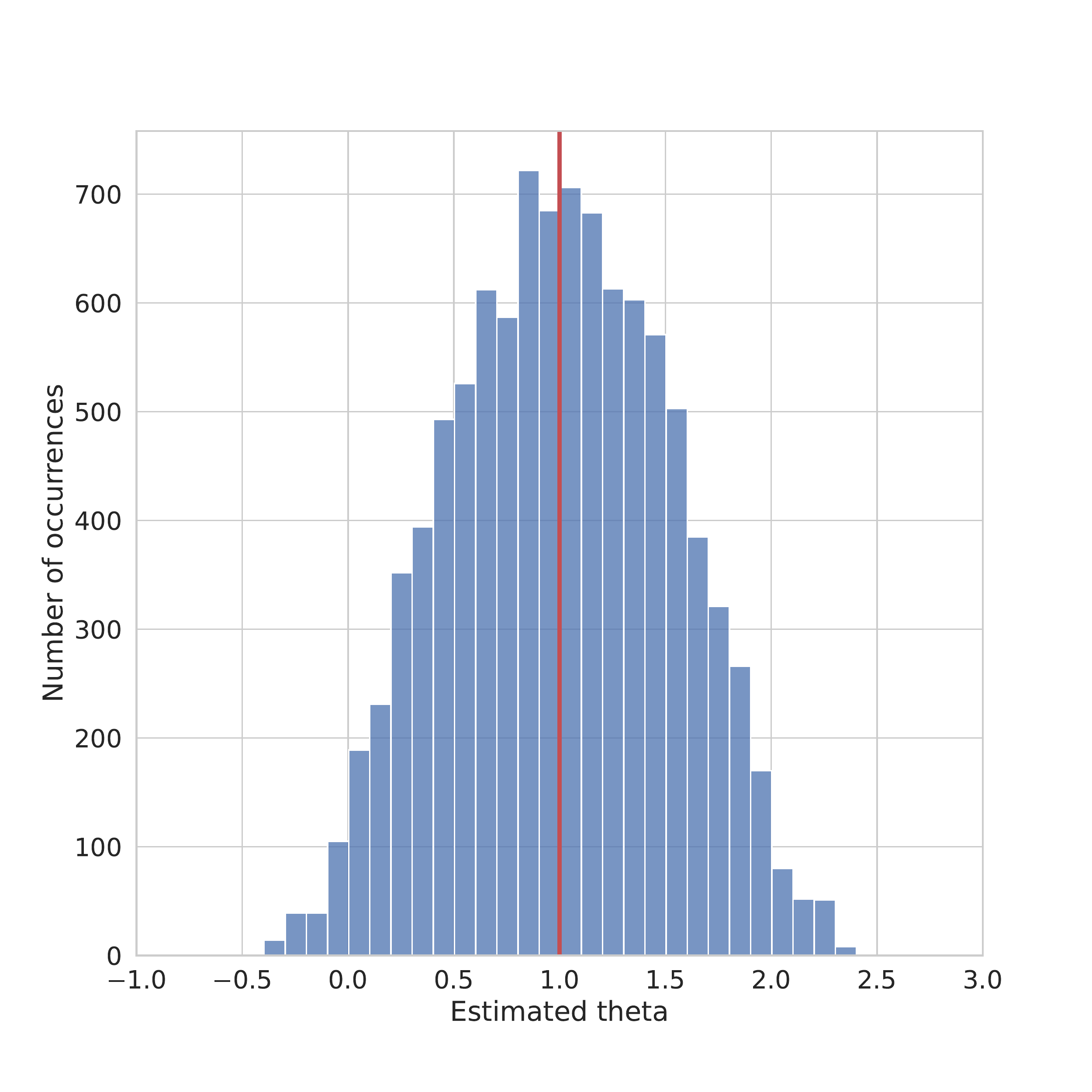}}}
\subfigure[Size 5]{\label{fig:size_5_adv}{\includegraphics[width=0.23\textwidth]{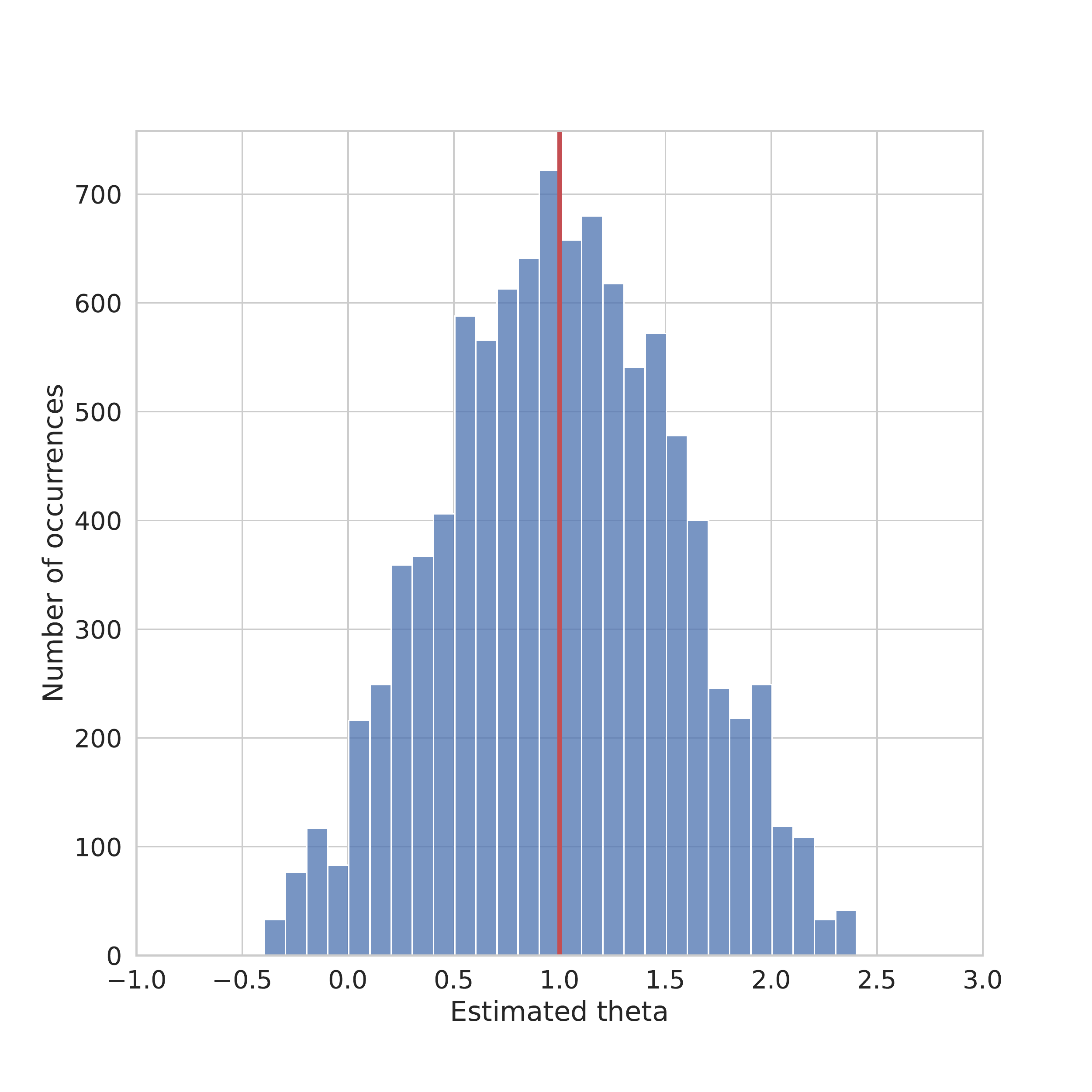}}}
\subfigure[Size 6]{\label{fig:size_6_adv}{\includegraphics[width=0.23\textwidth]{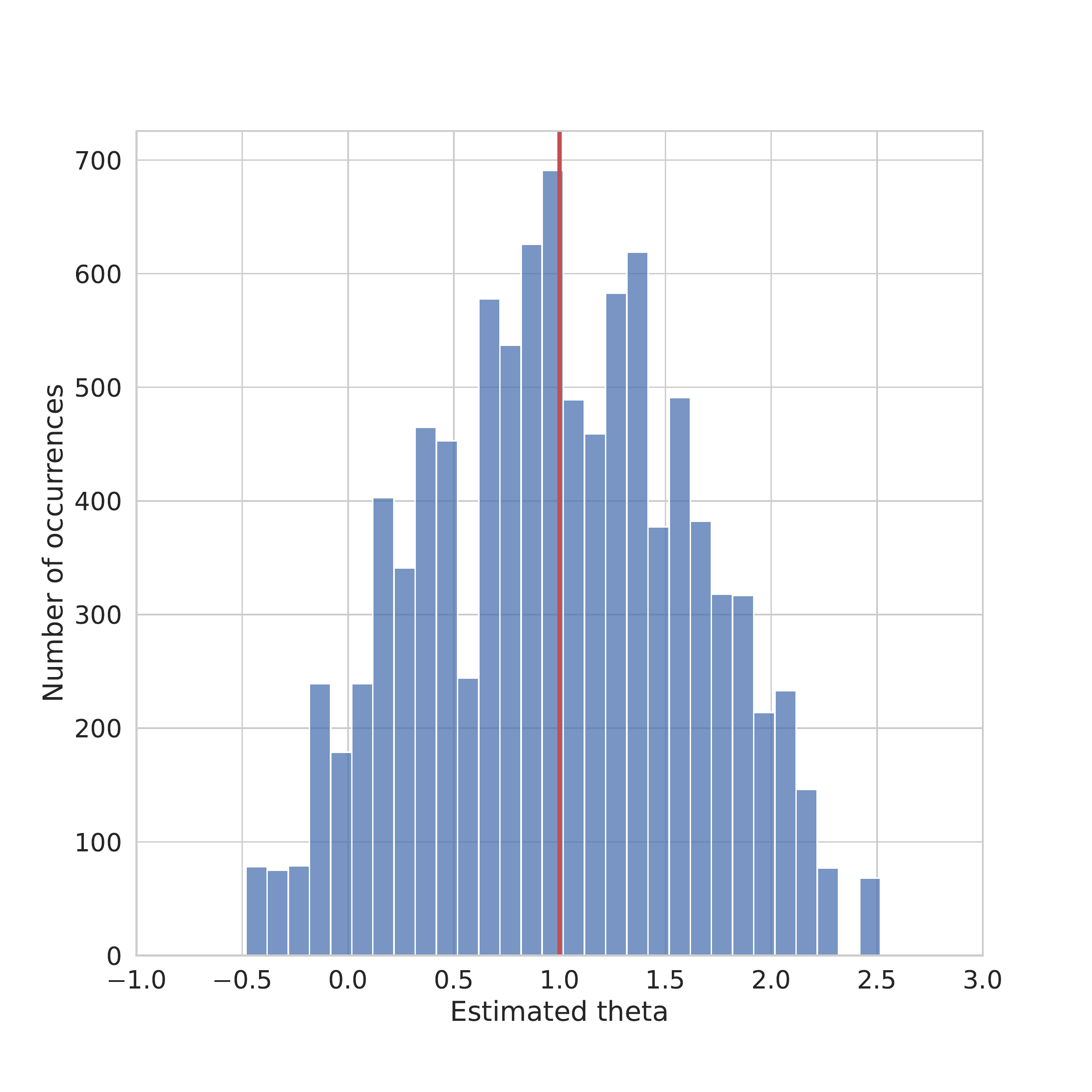}}}
\subfigure[Size 7]{\label{fig:size_7_adv}{\includegraphics[width=0.23\textwidth]{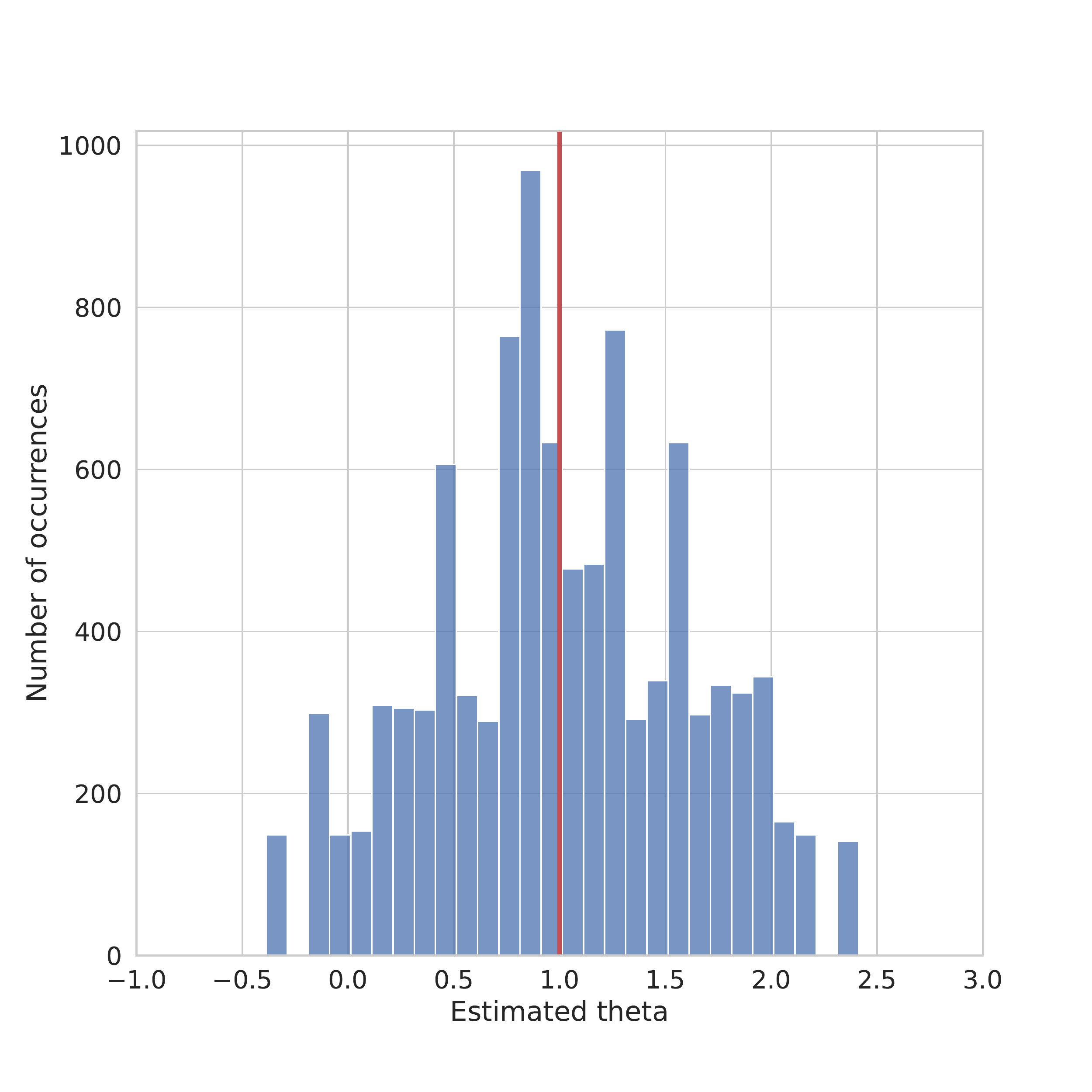}}}
\subfigure[Size 8]{\label{fig:size_8_adv}{\includegraphics[width=0.23\textwidth]{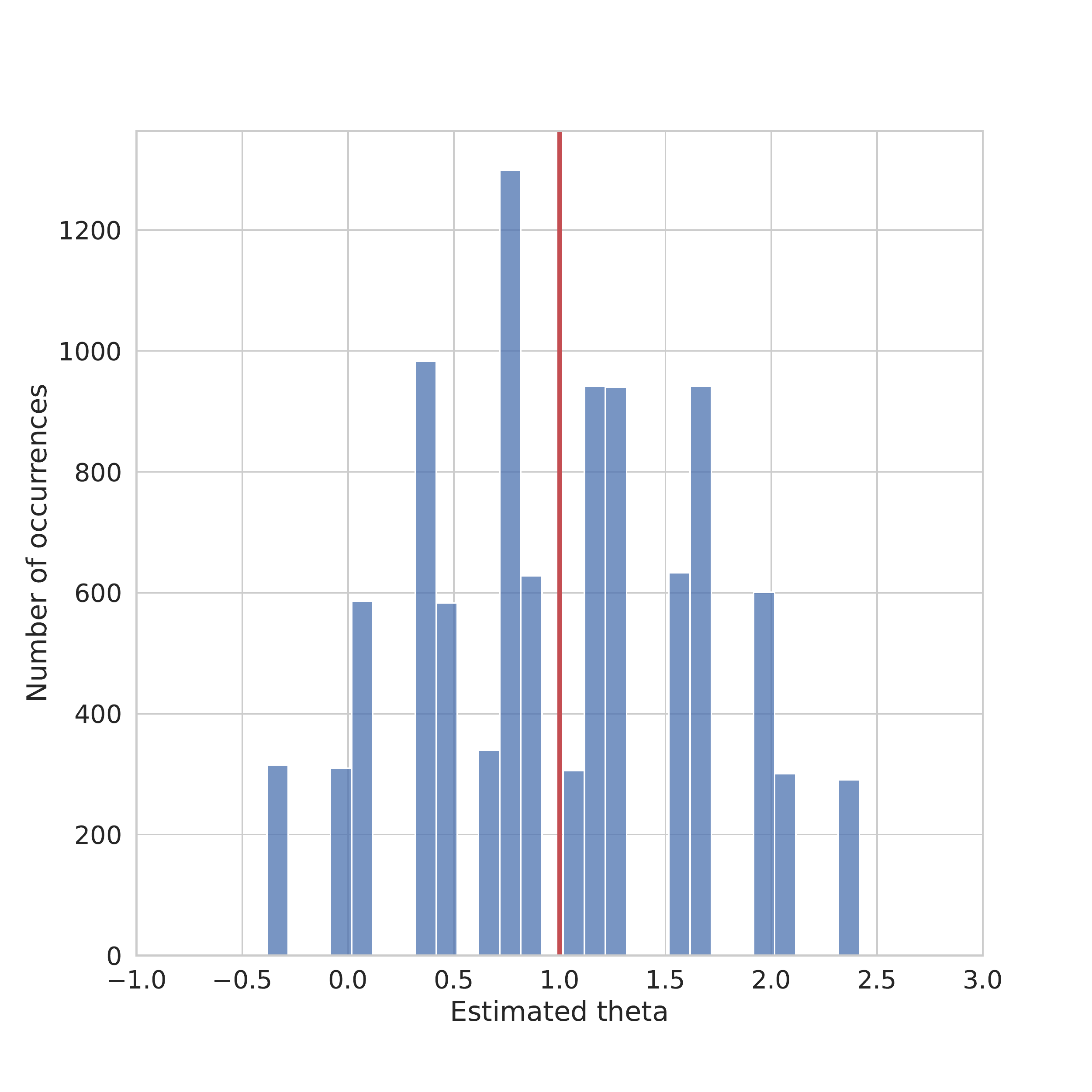}}}

\medskip
\centering
{\small\textit{Note}: The value of the perturbation parameter is $\eta = 0.07$.}
\caption{The histogram of the estimates $\hat{\theta}$ under adversarial perturbation.}
\label{fig:sync_40_geo_adversarial}
\end{figure*}
\section{Inference}\label{sec:inference}
This paper does not develop any novel approaches to statistical inference in matched pairs design. However, in this section we outline two ways that could be used to test hypotheses and construct confidence intervals.

\subsection{Approximation by Student's $t$-distribution}
We can use the approach from Section~5.2 of \citet{chen2022robust} which uses an approximation by Student's $t$-distribution to recover a confidence level for $\theta$.

\subsection{Permutation-based inference}
Alternatively, we can test the sharp null hypothesis of zero treatment effects across all geos. Under this null, the treatment assignment does not matter and the estimate $\hat{\theta}$ is randomly drawn from the distribution $\{\hat{\theta}_a\}_{a\in\mathcal{A}}$, where $\mathcal{A}$ is the set of all possible treatment assignments. This distribution can be approximated---under the null hypothesis---by repeatedly re-drawing different treatment assignments and computing the corresponding $\hat{\theta}_a$. The null hypothesis is then rejected if the original estimate, $\hat{\theta}$, falls beyond some quantiles (e.g.~10\% or 90\%) of the constructed distribution. 

These tests can also be inverted to produce a confidence interval, but that would require additional assumptions. For instance, for each candidate value $\theta^{*}$ and a null hypothesis $H_0\colon\theta=\theta^{*}$ we could first remove the effect of the treatment from the responses of the treated units (assuming that the null hypothesis holds and $\theta^{*}$ is indeed the true value of $\theta$). Then, we would re-draw the treatment assignment, $a\in\mathcal{A}$, inject the effects (still assuming $\theta=\theta^{*}$) into the responses of the newly treated units, and re-estimate $\hat{\theta}_a$. We would repeat this multiple times to construct the distribution and test the null hypothesis in the same way as we did for $\theta^{*}=0$. We would construct a confidence interval for $\theta$ as the set of all $\theta^{*}$ for which the corresponding $H_0$'s are not rejected at the desired level of statistical significance.
\section{MIP heuristics}\label{sec:heuristics}
For larger $N$ (e.g.~210 US DMAs), it could take weeks
to directly solve the covering MIP formulated in Section~\ref{sec:supergeo_algo}. To accelerate computation, we propose two heuristics that find an approximately optimal solution. Both heuristics reduce the dimensions of variable $x$ by reducing the size of the search space (candidate supergeo pairs).

\paragraph{Partition heuristic.}
The first heuristic that we call \emph{partition heuristic} randomly divides all geos into several partitions, and only includes subsets (supergeo pairs) $G$'s that use geos in the same partition. The number of partitions is a tunable parameter. This way we effectively reduce the number of variables. The advantage of this heuristic is that it is a randomized method, and particularly suitable for running multiple instances in parallel. The disadvantage is that some of the subsets---those that include geos from different partitions---are no longer considered.

\paragraph{Per-geo heuristic}
The second heuristic which we call \emph{per-geo heuristic} sorts the geos according to the magnitude of the pre-test response (which we use as an approximation of the uninfluenced response), and only considers the well-matched supergeo pairs that include the largest geos in the MIP---those geos are usually the hardest to match in the matched pairs design while the smaller geos do not necessarily benefit from supergeo design as much.

Specifically, for each of the $\beta$ geos with largest responses, among all subsets $G$ that contain this geo, we consider the $\alpha$ fraction of subsets that have the smallest score. Here $\beta$ and $\alpha$ are chosen to balance statistical performance with computational efficiency.
The advantage of this heuristic is that it matches the largest geos well, and those are usually the geos that are responsible for the largest errors. This heuristic is also useful for increasing the resulting number of pairs in the design since it preserves the one-to-one pairs of the small geos that are already matched well (see, for example, the supergeo design shown in Figure~\ref{fig:design}). The disadvantage is that we do not consider supergeo pairs of sizes exceeding 2 consisting entirely of the smaller geos.

In practice, these heuristics are not mutual exclusive. We usually run multiple designs---with different hyper-parameters and different random seeds---in parallel and pick the best design among them. Using these heuristics, we can approximately solve the MIP for $N = 210$ and maximum group size $u = 4$ within a few hours on a single machine. 

\section{Proof of NP hardness}\label{sec:np_hard}
In this section we prove Theorem~\ref{thm:np_hard}.
\begin{theorem}[Supergeo design problem is NP-hard. Restatement of Theorem~\ref{thm:np_hard}]
The following problem is NP-hard: Given a set $\mathcal{G}$ of size $|\mathcal{G}|=3m$ and values $Z_g \in \mathbb{Z}^+$ for each $g \in \mathcal{G}$, for any subset $G \subseteq [3m]$ define
\[
\mathrm{score}(G) := \min_{\substack{G_+ \cup G_- = G \\ G_+ \cap G_- = \emptyset}} \Big( \sum_{i \in G_+} Z_{i} - \sum_{j \in G_-} Z_{j} \Big)^2.
\]
Determine whether $\mathcal{G}$ can be partitioned into $m$ disjoint sets $G_1, G_2, \dots, G_m$ such that for all $i \in [m]$, $|G_i|=3$ and $\mathrm{score}(G_i) = 0$ ($[m]$ stands for the set $\{1,2,\dots,m\}$).
\end{theorem}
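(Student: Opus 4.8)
The plan is to reduce from \emph{Numerical 3-Dimensional Matching} (N3DM), a classical NP-complete problem (Garey and Johnson, \emph{Computers and Intractability}): given three disjoint collections $X=\{x_1,\dots,x_m\}$, $Y=\{y_1,\dots,y_m\}$, $Z=\{z_1,\dots,z_m\}$ of positive integers and a bound $b$, decide whether $X\cup Y\cup Z$ can be partitioned into $m$ triples, each having exactly one element from each of $X$, $Y$, $Z$ and summing to $b$. We may assume every element is less than $b$, since a larger element cannot belong to a triple of three positive integers summing to $b$.

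First I would record an elementary observation that reframes the target problem: for positive integers $a\le c\le d$, one has $\mathrm{score}(\{a,c,d\})=0$ if and only if $d=a+c$. Indeed, a score-zero split of $\{a,c,d\}$ cannot put all three elements on one side (the two side-sums would be $0$ and $a+c+d>0$), so it opposes a single element to the other two; and since $a,c$ are the two smallest, only $d=a+c$ yields equality. Hence the decision problem in the theorem asks precisely whether $3m$ given positive integers can be partitioned into $m$ triples each of the form $\{p,q,p+q\}$.

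Given an N3DM instance, I would output the ground set $\mathcal{G}$ consisting of the $3m$ positive integers
\[
\hat{x}_i := x_i + 2M, \qquad \hat{y}_j := y_j + 3M, \qquad \hat{z}_k := (b-z_k) + 5M,
\]
where $M$ is a polynomially bounded integer chosen large enough (e.g.\ $M=3b$) that the ``magnitude classes'' $\{tM + O(b)\}$ for $t\in\{2,3,4,5,6,7,8,10\}$ are pairwise disjoint. The crucial design choice is the triple of coefficients $\{2,3,5\}$: the only way to write one of these as a sum of two (possibly equal) of them is $2+3=5$. Comparing the coefficients of $M$ then forces any score-zero triple $\{p,q,p+q\}$ taken from $\mathcal{G}$ to consist of exactly one $\hat{x}_i$, one $\hat{y}_j$, and one $\hat{z}_k$ with $\hat{x}_i+\hat{y}_j=\hat{z}_k$; and this equation is equivalent to $x_i+y_j+z_k=b$. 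Both directions of the reduction follow: a valid N3DM matching yields $m$ such score-zero triples partitioning $\mathcal{G}$ (each of size $3$, as the theorem requires), and conversely any partition of $\mathcal{G}$ into $m$ score-zero triples must---by the structure just described---use one element of each type in every triple, hence induces a valid N3DM matching.

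The main obstacle is the gadget verification, i.e.\ excluding \emph{every} spurious score-zero triple. This is a short case analysis over the possible type-compositions of a triple $\{p,q,p+q\}$: the sum of any two elements of $\mathcal{G}$ lies in a class $tM+O(b)$ with $t\in\{4,5,6,7,8,10\}$, and of those only $t=5$ contains elements of $\mathcal{G}$ (the $\hat{z}_k$'s), so the configuration $\hat{x}_i+\hat{y}_j=\hat{z}_k$ is the only one that survives. It is exactly here that $\{2,3,5\}$ outperforms the naive choice $\{1,2,3\}$, which would also allow $1+1=2$ and hence spurious triples of the form $\{\hat{x}_i,\hat{x}_j,\hat{y}_k\}$. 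The remaining bookkeeping---that all produced numbers are positive integers, that $|\mathcal{G}|=3m$ with each part of size $3$, and that $M$ can be taken polynomially bounded so the reduction runs in polynomial time---is routine.
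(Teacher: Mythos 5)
Your proposal is correct and follows essentially the same route as the paper's proof: a reduction from numerical 3-dimensional matching in which each element is shifted by a type-dependent multiple of a large constant $M$ so that a score-zero triple is forced to contain one element of each type, with the low-order parts encoding the condition that the triple sums to the bound. The only differences are cosmetic---you use offset coefficients $\{2,3,5\}$ and $M=3b$ where the paper uses $\{1,3,4\}$ and $M=1+B+\sum_a s(a)$---and both choices satisfy the needed uniqueness of the coefficient identity, so the argument goes through identically.
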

We prove the NP-hardness of our problem by a reduction from the following numerical 3-dimensional matching problem of \citet{gj90}.
\begin{theorem}[Numerical 3-dimensional matching problem is NP-hard, SP16 in Appendix A of \citet{gj90}]
The following problem is NP-hard: Given disjoint sets $W$, $X$, and $Y$, each containing $m$ elements, each element $a \in W \cup X \cup Y$ having size $s(a) \in \mathbb{Z}^+$, and given a bound $B \in \mathbb{Z}^+$. Determine whether $W \cup X \cup Y$ can be partitioned into $m$ disjoint sets $A_1, A_2, \dots, A_m$ such that for all $i \in [m]$, $A_i$ contains exactly one element from each of $W$, $X$, and $Y$, and $\sum_{a \in A_i} s(a) = B$.
\end{theorem}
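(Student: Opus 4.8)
The statement is a classical fact, catalogued as SP16 in Appendix~A of \citet{gj90}; a self-contained proof proceeds by a polynomial-time reduction \emph{from} the three-dimensional matching problem (3DM), which is NP-complete. Recall 3DM: given disjoint sets $W,X,Y$ with $|W|=|X|=|Y|=q$ and a family $T\subseteq W\times X\times Y$ of triples, decide whether $T$ contains a \emph{perfect matching} --- $q$ pairwise-disjoint triples whose union is $W\cup X\cup Y$. The plan is to encode the incidence structure of $T$ into element sizes through a positional, base-$r$ number representation, with $r$ chosen large enough that, when the sizes of any candidate solution are summed, no carries ever propagate between digit positions. Under that ``no-carry'' property, a collection of numbers sums to a prescribed target if and only if every digit position contributes exactly the prescribed amount --- and the instance will be built so that this digitwise condition is equivalent to the combinatorial matching condition.

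Concretely I would: (i) reserve $q$ digit positions for the elements of $W$, $q$ for those of $X$, $q$ for those of $Y$, plus a block of ``triple-selector'' positions and a block of ``dummy'' positions; (ii) for each triple $t=(w_i,x_j,y_k)\in T$ create three numbers, one destined for each of the three N3DM sets $\mathcal{A},\mathcal{B},\mathcal{C}$, carrying a $1$ in the position of $w_i$, of $x_j$, of $y_k$ respectively, together with a shared selector digit that ties the three numbers of $t$ to one another; (iii) set the target $B$ so that reaching it with one number from each of $\mathcal{A},\mathcal{B},\mathcal{C}$ forces those three numbers to originate from a common $t$, and so that over the whole solution each of the $3q$ element positions is covered exactly once; (iv) pad $T$ and each of $\mathcal{A},\mathcal{B},\mathcal{C}$ with ``dummy'' elements supported on the disjoint high-order dummy positions, so that the three sets have equal cardinality and the dummies can be combined only among themselves (again totalling $B$) --- the dummies absorb the $|T|-q$ triples that the matching does not select. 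Since each digit can receive only polynomially many $1$'s, taking $r$ past that bound secures the no-carry property; all numbers then have $O(q)$ base-$r$ digits, so the reduction runs in polynomial time.

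Correctness splits into the two standard directions: a perfect 3DM matching tells us which ``real'' triples to select and how to pair up the remaining triples and all elements with dummies, yielding an N3DM solution; conversely, for any N3DM solution the no-carry property forces every element position to be hit exactly once and every selected real triple to be internally consistent, so the selected triples are pairwise disjoint and cover $W\cup X\cup Y$. I expect the genuine difficulty to lie in step~(iv): the dummy/enforcer gadget has to simultaneously equalize the three set sizes, supply exactly the ``garbage'' combinations needed to complete any legitimate partial matching, and admit \emph{no} spurious way of hitting $B$ that mixes dummy and real numbers; arranging this last requirement is what dictates the separation of digit ranges and the precise value of $r$. Everything else --- verifying the no-carry bound and checking the bidirectional correspondence --- is routine bookkeeping. (In the paper itself the result is simply invoked as SP16 of \citet{gj90}, so only this sketch of the reduction is needed.)
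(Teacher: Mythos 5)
First, a point of comparison: the paper does not prove this statement at all --- it is imported verbatim as problem SP16 from Appendix~A of \citet{gj90}, and the paper's own contribution begins only with the subsequent reduction \emph{from} this problem to the supergeo design problem. So there is no in-paper proof to measure yours against; what can be assessed is whether your sketch would actually close into a proof. Your overall strategy --- reduce from 3-dimensional matching, encode incidence via a positional base-$r$ representation with $r$ large enough to forbid carries, and argue digit by digit --- is the classical Garey--Johnson route (it is exactly how they establish strong NP-completeness of 4-PARTITION, and SP16 is handled analogously), so the choice of source problem and of technique is right.

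However, the gadget in your steps (ii)--(iv) does not work as stated, and the defect is not routine bookkeeping. If, for every triple $t=(w_i,x_j,y_k)\in T$, each of the three numbers you create carries a $1$ in the digit position of its own element, then the multiset of all created real numbers has a \emph{fixed} digit profile: position $u$ receives $N(u)$ ones in total ($N(u)$ being the number of triples containing $u$), independently of how the numbers are later grouped. Digit positions therefore cannot express the condition ``each element is covered exactly once,'' which is the entire content of 3DM. Moreover, your two requirements on the dummies --- that they ``can be combined only among themselves'' and that they ``absorb the $|T|-q$ triples that the matching does not select'' --- are mutually exclusive, since absorbing an unselected triple means grouping its real numbers with dummies. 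The standard construction resolves this by decoupling elements from triples: it creates one number per \emph{occurrence} of an element in a triple (one ``actual'' copy and $N(u)-1$ ``dummy'' copies, distinguished by a high-order flag digit), plus one number per triple whose positional digits are the \emph{complements} of its elements' indices; a group then hits the target iff it consists of a triple-number together with index-matching element copies that are either all actual (the triple is selected) or all dummy (it is not). Without some such copy mechanism your reduction cannot be completed, so the sketch has a genuine gap precisely at the step you yourself flagged as the difficult one.
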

\begin{proof}[Proof of Theorem~\ref{thm:np_hard}]
Given a numerical 3-dimensional matching instance, we transform it into a supergeo design instance as follows: Define $M := 1 + B + \sum_{a \in W \cup X \cup Y} s(a)$. Let $\mathcal{G} = W \cup X \cup Y$, and define the values as:
\begin{align*}
Z_w = &~ s(w) + M, \quad \forall w \in W, \\
Z_x = &~ s(x) + 3 M, \quad \forall x \in X, \\
Z_y = &~ B - s(y) + 4 M, \quad \forall y \in Y.
\end{align*}
We prove that $W \cup X \cup Y$ can be partitioned into $m$ disjoint sets with equal sums if and only if $\mathcal{G}$ can be partitioned into $m$ disjoint sets with zero scores.

On the one hand, if $W \cup X \cup Y$ can be partitioned into $m$ disjoint sets $A_1, A_2, \dots, A_m$ such that for all $i \in [m]$, $A_i = \{w_i, x_i, y_i\}$ where $w_i \in W$, $x_i \in X$, and $y_i \in Y$, and $s(w_i) + s(x_i) + s(y_i) = B$, then for all $i \in [m]$ let $G_i = A_i$, and we have
\begin{align*}
\mathrm{score}(G_i) = &~ (Z_{w_i} + Z_{x_i} - Z_{y_i})^2 \\
= &~ \Big((s(w_i) + M) + (s(x_i) + 3M) - (B - s(y_i) + 4M) \Big)^2 \\
= &~ \Big(s(w_i) + s(x_i) - B + s(y_i) \Big)^2 = 0.
\end{align*}

\noindent On the other hand, if $\mathcal{G}$ can be partitioned into $m$ disjoint sets $G_1, G_2, \dots, G_m$ such that for all $i \in [m]$, $|G_i|=3$ and $\mathrm{score}(G_i) = 0$, then we argue that each $G_i$ must be of the form $G_i = \{w_i, x_i, y_i\}$ where $w_i \in W$, $x_i \in X$, $y_i \in Y$, and $s(w_i) + s(x_i) + s(y_i) = B$. Denote $G_i = \{a_i, b_i, c_i\}$, and w.l.o.g.~assume that $\mathrm{score}(G_i) = (Z_{a_i} + Z_{b_i} - Z_{c_i})^2$. Note that the only way that three numbers $a, b, c \in \{1, 3, 4\}$ satisfy $a + b - c = 0$ is when $a = 1$, $b = 3$ (or $a = 3$, $b = 1$), and $c = 4$. This implies that unless $a_i \in W$, $b_i \in X$ (or $a_i \in X$, $b_i \in W$), and $c_i \in Y$, we must have that 
\[
Z_{a_i} + Z_{b_i} - Z_{c_i} = t \cdot M + s,
\]
where $t$ is some \emph{non-zero} integer and $|s| \leq B + s(a_i) + s(b_i) + s(c_i) < 1 + B + \sum_{a \in W \cup X \cup Y} s(a) = M$, and therefore $t \cdot M + s \neq 0$. Thus, $\mathrm{score}(G_i)$ can only possibly be zero when $a_i \in W$, $b_i \in X$ (or $a_i \in X$, $b_i \in W$), and $c_i \in Y$ in which case
\[
s(a_i) + s(b_i) + s(c_i) - B = Z_{a_i} + Z_{b_i} - Z_{c_i} = 0. \qedhere
\]
\end{proof}

\noindent This proof also shows that the supergeo design problem is strongly NP-hard since the numerical 3-dimentional matching problem is strongly NP-hard and in our proof the values $Z_g$'s of the constructed supergeo design instance are all polynomially bounded.


\end{document}